\documentclass[11pt]{article}
\usepackage[onehalfspacing]{setspace}
\usepackage{graphicx} 
\usepackage{natbib}
\usepackage[margin=1in]{geometry}
\usepackage{amsmath}
\usepackage{amsthm}
\usepackage{pifont}
\usepackage{subcaption}
\usepackage{xcolor}
\usepackage{ulem}
\usepackage{tikz}
\usetikzlibrary{arrows.meta, positioning}
\usetikzlibrary{shapes.geometric}
\usetikzlibrary{calc}
\usepackage{booktabs}
\usepackage{tabu}
\usepackage[T1]{fontenc}
\usepackage{amssymb}
\usepackage[colorlinks=true, allcolors=blue, breaklinks=true]{hyperref}
\usepackage{algorithm}
\usepackage{algorithmic}
\usepackage{url}

\usepackage{pgfplots}
\usepgfplotslibrary{groupplots}
\pgfplotsset{compat=1.18}
\usepackage[toc]{appendix}
\usepackage{titlesec}
\usepackage{cancel}

\theoremstyle{plain} 
\newtheorem{assumption}{Assumption}
\newtheorem{definition}{Definition}
\newtheorem{theorem}{Theorem}
\newtheorem{lemma}{Lemma}
\newtheorem{proposition}{Proposition}
\newtheorem{corollary}{Corollary}
\theoremstyle{definition} 
\newtheorem{example}{Example}

\newtheorem{remark}{Remark}

\begin{document}

\title{Conjectural Variations in Competitive Dynamic Pricing: A Learning Foundation via Experimentation Design and Feedback Structure}

\author{Bar Light\protect\thanks{Business School and Institute of Operations Research and Analytics, National University of Singapore, Singapore. e-mail: \textsf{barlight@nus.edu.sg} }   \and Wenyu Wang\protect\thanks{Institute of Operations Research and Analytics, National University of Singapore, Singapore. e-mail: \textsf{wang.wenyu@u.nus.edu} }}


\maketitle

\maketitle


\begin{abstract}
We study competitive dynamic pricing among multiple sellers, motivated by the
rise of large-scale experimentation and algorithmic pricing in retail and online
marketplaces. Sellers repeatedly set prices using simple learning rules and
observe their own realized demand, while possibly observing only a subset of
rivals' prices, even though demand depends on all sellers' prices and is subject
to random shocks. Each seller runs local price experiments, such as switchback-style designs, and
updates a focal price using a linear demand estimate fitted to its own demand
data and the competitor prices it observes.  Under
certain conditions on demand, the resulting dynamics converge to a Conjectural
Variations (CV) equilibrium, a classic static equilibrium notion in which each
seller best responds under a conjecture that rivals' prices co-move 
systematically to changes in its own price.
Unlike standard CV models that treat conjectures as behavioral primitives, we
show that these conjectures arise endogenously from the interaction between the
feedback structure and the correlation structure of experimentation. When a
seller does not observe some rivals' prices, correlated experimentation induces
an omitted-variable bias in demand estimation. We show that this bias determines the
conjectures that govern the long-run equilibrium. Notably, when this learning
bias vanishes, for example under full price feedback or independent
experimentation of unobserved rivals, the learning dynamics converge to the
standard Nash equilibrium. We provide simple sufficient conditions on demand for
convergence in standard models and establish a finite-sample guarantee, showing that the mean squared price error decays at a rate of  $\widetilde O (T^{-1/2})$.
Our results imply that, in competitive markets, experimentation design and
feedback structure jointly determine the equilibrium prices reached by practical
learning algorithms.
\end{abstract}

\newpage

\section{Introduction}
The digitization of commerce, the growth of online marketplaces, and rapid improvements in technology have made it feasible to adjust prices at high frequency and to measure demand responses at scale, accelerating the use of experimentation and algorithmic pricing in practice. Recent papers document large-scale randomized pricing interventions, including discounts and promotions, e.g., a randomized field experiment on Alibaba involving more than 100 million customers \cite{Zhang2020AlibabaPromotions}, as well as a pricing meta-experiment on Airbnb \cite{Holtz2025AirbnbMetaExperiment}, just to name a few. In addition to large-scale experimentation by platforms themselves, a substantial number of third-party sellers deploy algorithmic pricing  (e.g., \cite{chen2016empirical,calzolari2025pricing}).

These developments have motivated a growing operations literature on dynamic pricing with demand learning through experimentation and the practical constraints such experimentation faces in the field. Yet most of the theoretical analysis focuses on a single seller learning demand in isolation, abstracting away from the competitive environments in which experimentation is often deployed \cite{BesbesZeevi2009,Cheung2017LimitedExperimentation}. Moving from a single seller to competition is challenging because learning and experimentation interact across firms: when multiple competitors experiment at the same time, each firm’s measured treatment effects can be distorted by rivals’ contemporaneous actions, creating bias in inference and decision-making (see \cite{Waisman2025ParallelExperimentation} for discussion in the context of online advertising). At the same time, a rapidly growing simulation and computational literature shows that competitive learning dynamics can generate supra-competitive outcomes. Despite these insights,  theoretical results characterizing the long-run outcomes of practical pricing algorithms in competitive markets remain limited. This paper takes a step toward bridging this gap by analyzing a simple dynamic pricing algorithm and proving that its learning dynamics converge to a static equilibrium. Moreover, the limiting equilibrium is pinned down by the learning bias induced by correlated experimentation and the feedback structure in the market.

Specifically, we study a repeated dynamic pricing game with multiple sellers. In each period, each seller posts a price using a pricing algorithm, and demand depends on the full vector of sellers' prices and is subject to random shocks. Sellers observe their own prices and realized demands, but they may differ in the competitor-price information available to them. In particular, each seller may observe the prices of only a subset of rivals, and sellers do not observe rivals' demands or know the true demand system. This feedback structure captures the fact that different markets can have very different information environments. In some markets, sellers may have very limited information and rely only on their own prices and demands; this corresponds to the bandit-feedback case. In other markets, sellers may be able to observe all rival prices; this corresponds to full price feedback. Between these two extremes, sellers may observe only a subset of competitors, such as nearby competitors, sellers on the same platform, prominent or dominant sellers, or rivals whose prices are easier to track. The feedback structure can also be asymmetric: one seller may observe another seller's price even if the reverse is not true.

This flexible feedback structure is motivated by practice. Even when sellers can track some competitors' posted prices, such information may not resolve the learning problem: sellers may not know whether observed prices reflect experimentation or routine pricing, or whether the observed rivals are the competitors most relevant for demand. For example, an Amazon seller may compete not only with other Amazon sellers, but also with sellers on other e-commerce platforms and with local brick-and-mortar stores, making comprehensive monitoring difficult. We therefore allow each seller to observe an arbitrary subset of rival prices, nesting both bandit feedback and full price feedback as special cases.

Given this information environment, we model sellers as learning through local price experimentation around a current focal price. Experimentation is organized in batches: during a batch, each seller repeatedly perturbs her focal price, observes demand and the rival prices available under her feedback structure, and then updates the focal price using the batch data. The perturbation may be binary, corresponding to a two-point A/B or switchback design that alternates between control and treatment prices; it may also be three-point, corresponding to low, baseline, and high experimental prices; or it can be a general random variable.

Crucially, our framework allows experimentation to be statistically dependent across sellers. 
While sellers typically do not explicitly coordinate their experiments, such statistical dependence can arise endogenously from the structural features of modern marketplaces. 
First, technical constraints such as platform-imposed API throttling can make it infeasible to query and update prices continuously, encouraging automated pricing systems to run repricing and experimentation in periodic batch jobs that are naturally aligned to common wall-clock boundaries across sellers \cite{chen2016empirical}.\footnote{Amazon's Selling Partner API is subject to request throttling via token-bucket rate limits \cite{AmazonSPAPIRateLimits}. Amazon has also announced an annual subscription fee (effective January 31, 2026) and monthly usage-based fees for third-party developers offering applications to other selling partners, which can further increase incentives for call optimization and batching \cite{AmazonSPAPIFeesAnnouncement}.} Even sellers that would prefer to randomize the timing of experiments may therefore be constrained to coarser update windows. 
When experiments are implemented at these synchronized decision epochs, many sellers’ price perturbations can occur contemporaneously, creating statistical dependence in the realized price variation even without explicit coordination. 
Second, reliance on third-party algorithmic pricing services can introduce additional dependence when shared software infrastructure pushes price updates to many clients on common schedules (or within common processing windows). Consistent with this, \cite{musolff2025} documents that repricing tools on Amazon frequently implement deterministic ``resetting" behavior at night (e.g., concentrated in low-sales nighttime hours) and discusses evidence that a repricer can become overwhelmed by scheduled repricing requests, both pointing to coarse, synchronized timing of price changes across many sellers. 
Third, correlation can also arise when common market-level demand shifts lead many sellers’ pricing systems to reset estimates or increase exploration in subsequent periods, as can occur in adaptive pricing systems. Finally, platform-initiated mechanisms, e.g., widespread price reductions during major holiday events like 'Black Friday' or '11.11', can introduce common variation in prices, potentially synchronizing the price perturbations observed across competitors (see \cite{Zhang2020AlibabaPromotions} for a related massive price promotion on Alibaba).

Using the batch observations generated by local price experimentation, each seller estimates a linear demand model for her own demand using only the information available under her feedback structure: her own prices and demand observations, together with the subset of rival prices she observes. She then computes the own-price revenue-maximizing price implied by this fitted linear demand model, holding the observed rival prices at their batch averages, and updates her focal price by moving partially toward that target, subject to feasibility bounds. Because the fitted coefficients are estimated from finitely many noisy demand observations, each update is a noisy, data-driven approximation to the population fitted-model revenue maximizer.

Our analysis relates these multi-seller dynamic pricing learning dynamics to a classic equilibrium concept from industrial organization: the Conjectural Variations (CV) equilibrium. The defining feature of a CV equilibrium is that each seller chooses a price that is optimal not only given current rivals' prices, but also given a conjecture about how rivals' prices co-move with changes in her own price. These conjectures can be summarized by a matrix $A$. The entry $A_{ij}$ represents seller $i$'s conjectured marginal co-movement of seller $j$'s price with seller $i$'s price: if $A_{ij}>0$, seller $i$ acts as if seller $j$'s price increases when $i$ raises her price; if $A_{ij}<0$, seller $i$ acts as if seller $j$'s price decreases when $i$ raises her price; and if $A_{ij}=0$, seller $i$ acts as if seller $j$'s price does not systematically co-move with changes in $i$'s price. Consequently, when $A_{ij}=0$ for all $j\neq i$, each seller behaves as if rivals' prices are locally fixed, and the CV equilibrium reduces to the standard Nash equilibrium.

Our main theoretical result shows that, under certain conditions on demand, when each seller follows the partial-feedback pricing procedure described above, the resulting price sequence converges to a CV equilibrium, even though sellers do not know the true demand system and do not explicitly reason about competitors' reactions. Moreover, the limiting conjecture matrix is not imposed as an exogenous behavioral primitive; instead, it is endogenously determined by the interaction between the feedback structure and the correlation structure of experimentation.

When experimentation is correlated across sellers, the price variation observed by a seller may systematically co-occur with the perturbations of rivals whose prices she does not observe. Because these unobserved rival prices are omitted from the seller's demand regression, her demand estimation effectively treats their contemporaneous price movements as unobserved covariates. This generates a systematic omitted-variable-type bias: some of the demand change driven by unobserved competitors' price movements is mistakenly attributed to the prices included in the seller's fitted demand model, including her own price. We prove that, as the algorithm learns from an increasing amount of data and experimentation magnitudes shrink, the resulting biased demand estimates lead sellers to behave as if unobserved competitors' prices co-move with their own according to a particular conjecture matrix, and the limiting prices coincide with the corresponding CV equilibrium. In this sense, correlated experimentation and partial feedback provide a learning foundation for conjectural variations: the same statistical dependence that generates bias in demand learning also determines the conjectures that characterize the static equilibrium approached by the learning dynamics. Interestingly, when this learning bias vanishes, for example under full price feedback or when unobserved rivals' experimentation is asymptotically uncorrelated with the observed price variation, the induced conjectures are zero and the same learning dynamics converge to the Nash equilibrium of the underlying static pricing game.

We show that the economic mechanism behind this result is broader than the particular linear demand estimate used in the algorithm. The linear regression step  lets us explicitly characterize the induced conjecture matrix in terms of partial linear projections determined by the feedback and experimentation structures. However, the CV mechanism itself comes from a more basic object: the demand relationship a seller can learn from the data she actually observes. When some rival prices are unobserved, this learned demand relationship averages over those omitted prices conditional on the observed price variation. As a result, other learning rules that estimate marginal demand from the observed experimental variation can inherit the same conjectural-variation component. In Section~\ref{sec:intuition-beyond-ldl}, we discuss this in detail. Further, in Section~\ref{sec:beyond-contraction}, we show that even when the sufficient stability conditions for convergence fail, the same conjecture matrix continues to govern the direction of the limiting price adjustments. Thus, more generally, the feedback and experimentation structures determine the price dynamics followed by the learning process, not only the CV equilibrium reached under the convergence conditions.

In Section~\ref{sec:supra_competitive}, we study how this mechanism affects prices. 
Because feedback and experimentation determine the induced conjecture matrix, comparative statics in conjectures translate directly into comparative statics in market structure and experimental design. Under natural strategic-complementarity conditions, we show that equilibrium prices are increasing in the conjectures. Thus, positive induced conjectures can lead the learning dynamics to select prices above the Nash benchmark; under bandit feedback, this occurs transparently when experimentation is positively correlated across sellers. The broader partial-feedback case is more nuanced. Since conjectures are partial projection coefficients, their sign and magnitude depend on which rival prices are observed. We show, for example, that in a market with a major seller observed by smaller sellers, positive pairwise experimentation correlations can become negative induced conjectures, leading to prices below Nash. We also discuss in the same section why we take feedback and experimentation structures as given rather than endogenizing them from a seller or platform perspective. Such a model would require sellers or the platform to know detailed information about demand primitives, rivals' experimentation rules, and competitors' feedback structures. Moreover, we provide examples that show that the effects of observing more rivals or changing the correlation structure are market-specific and are non-monotone for individual seller revenues.

In Section~\ref{sec:converge_rate}, we establish an explicit finite-sample convergence guarantee. Subject to the conditions that imply convergence to the feedback- and experimentation-induced CV equilibrium, the mean squared price error decays on the order of $T^{-1/2}$ up to logarithmic factors. This is the same order as the optimal mean-squared rate established for related Nash-convergence bandit-feedback games \cite{ba2025doubly}. Hence, simple learning algorithms based on practical price experimentation and misspecified local linear demand models can achieve convergence guarantees that match the best-known rates in related settings, even when the limiting equilibrium is a nonzero-CV equilibrium rather than Nash.

In Section~\ref{sec:examples}, we provide conditions on demand for convergence of the learning dynamics to a CV equilibrium. The intuition is that the fitted local pricing target must not move too sharply when current prices change. This can fail for two reasons: competitive and conjectural effects may be too strong relative to own-price effects, or demand may be too curved for the local linear approximation. We formalize these forces through demand derivatives, separating first-order competitive and conjectural effects from curvature effects that reflect changes in the local demand slope. This yields simple criteria in standard models. For linear demand,  the condition takes a form similar to standard diagonal dominance. For multinomial logit  demand (MNL), stability  typically holds when market shares are not too large.

Taken together, our results identify the joint structure of price feedback and experimentation as a determinant of equilibrium selection. Correlated experimentation with omitted rival prices induces a systematic learning bias, captured by an endogenous conjecture matrix, that can move the market away from Nash toward a CV equilibrium. Under strategic complementarities, positive induced conjectures raise prices relative to Nash, even though firms neither communicate nor explicitly coordinate; other feedback structures can generate zero or negative conjectures, leading to Nash or below-Nash outcomes. Thus, experimentation and feedback affect markets not only by improving individual demand estimates, but also by shaping the price dynamics and equilibrium selected by decentralized pricing algorithms.

\subsection{Related Literature}

\textbf{Demand Learning with Price Experimentation in a Monopoly Setting.} Dynamic pricing problems under demand uncertainty have been extensively studied in the revenue management literature. A substantial body of work uses price experimentation to trade off between exploration (setting prices to gather information about demand) and exploitation (setting prices to maximize current revenue based on existing beliefs). Several studies  analyze the performance of Bayesian pricing policies \cite{farias2010dynamic,harrison2012bayesian}. \cite{keskin2014dynamic} establishes sufficient conditions for asymptotic optimality under linear multi-product demand models, while \cite{broder2012dynamic} studies dynamic pricing under a general parametric choice model. \cite{den2014simultaneously} adopts a statistical perspective and proposes controlled variance pricing. More recent work incorporates contextual information under linear and generalized linear demand learning frameworks \cite{nambiar2019dynamic,ban2021personalized,wang2025dynamic}. From an experimental design viewpoint, \cite{bastani2022meta,simchi-levi2025pricing} study how experimentation shapes learning-vs-earning in linear demand models. Most of these works focus on settings with a (partially) known demand structure but unknown parameters. On the other hand, \cite{BesbesZeevi2009} adopts a deterministic experimentation scheme to analyze both parametric and nonparametric demand models, and shows that “learning on the fly” can be asymptotically efficient. Building on this framework, \cite{besbes2015surprising} studies a  deterministic price experimentation algorithm based on scheduled, nonrandom perturbations around a focal price, and uses it to show that potentially misspecified demand models can perform well. \cite{bu2025context} studies contextual dynamic pricing under an additive separable demand model with unknown components. \cite{Cheung2017LimitedExperimentation, perakis2024dynamic} also study settings with an unknown demand structure, but focus on regimes with limited price experimentation. 

A fundamental distinction between our work and this stream of literature is that these studies abstract away from competitive effects by focusing on a single seller. In contrast, our analysis centers on the competitive dynamics that arise when multiple sellers run their own experimentation schemes and pricing algorithms  concurrently.

\textbf{Dynamic Pricing in Competitive Environments.} While monopoly dynamic pricing is well studied, comparatively fewer papers analyze learning and pricing in competitive pricing environments. One stream of work focuses on how the pricing game can lead to Nash equilibrium.
\cite{cooper2015learning} studies a repeated duopoly in which each seller estimates a monopoly-style demand model without price experimentation. They identify settings in which the induced learning dynamics converge to the Nash equilibrium, to the cooperative solution, or to other steady state prices that are neither and depend on initial conditions. 
\cite{birge2024interfere} investigates competitive learning from a platform perspective, studying when the platform should intervene in information revelation to sellers.  Relatedly, \cite{banerjee2024price} study competitive pricing under
platform-induced consideration sets and establish conditions under which
distributed gradient-based pricing dynamics converge to Nash equilibria. 
\cite{li2024lego} studies sequential price competition under unknown linear demand and proposes a gradient-optimization policy that converges to the Nash equilibrium. \cite{li2025adaptive} studies a broader uncertain sequential-competition framework, identifying conditions under which learning from individual data leads to Nash equilibrium.    \cite{meylahn2022learning} and \cite{LootsDenBoer2023} construct demand-learning pricing algorithms that use price perturbations to support collusive or supra-competitive outcomes under self-play, while reverting to competitive or best-response behavior against noncollusive opponents. 
\cite{yang2024competitive} proposes a noncooperative pricing algorithm with coordinated price experimentation, in which firms adjust prices according to a prescribed schedule so that prices converge to the Nash equilibrium. In contrast, our setting is decentralized. 
\cite{bracale2025revenue} considers $s$-concave demand with known concavity parameter and proposes an algorithm that guarantees convergence to the Nash equilibrium.

Another growing body of work studies how algorithms can facilitate collusion in market settings. Many of these papers use simulations or empirical studies to show that algorithmic pricing can learn to sustain supra-competitive outcomes, highlighting the potential for tacit collusion even without explicit communication. See \cite{Calvano2020AER,Hansen2021AlgorithmicCollusion,deng2024algorithmiccollusiondynamicpricing,asker2022artificial,musolff2025} for related work on algorithmic collusion.
While much of this literature relies on simulation, a smaller stream of work provides theoretical results. \cite{banchio2023artificial} develops a dynamical-systems approximation of multi-agent Q-learning and identifies “spontaneous coupling,” an endogenous linkage in independent learners’ value estimates that can sustain recurrent supra-competitive phases under low exploration.  One recent paper \cite{bichler2025onlineoptimizationalgorithmsrepeated} proves that, in finite normal games, under a broad class of online optimization algorithms, sellers' actions converge to a subset of rationalizable outcomes. They demonstrate that heterogeneity in sellers' algorithms may eliminate supra-competitive prices. 
 Unlike our demand-learning setting, much of this literature studies bandit-style adaptive exploration, rather than estimating demand through designed price experimentation.

Two independent and contemporaneous papers are closest to our work. 
\cite{lin2025competition} identifies correlated exploration as one mechanism through which misspecified pricing algorithms generate supra-competitive prices. 
\cite{yang2026driven} focuses on price imitation as a mechanism that can push prices toward collusive outcomes. 
Both papers focus on symmetric linear-demand environments.
In contrast, our framework allows for more general demand systems such as MNL, arbitrary partial-feedback structures, and a broad class of experimentation schemes. 
To the best of our knowledge, ours is the first paper to show that feedback and experimentation structures jointly induce an endogenous conjectural-variations equilibrium. 
This perspective also shows that non-Nash outcomes need not be supra-competitive, even with correlated experimentation: under partial feedback, the induced conjectures can be negative and prices may fall below the Nash benchmark.

\textbf{Conjectural Variations (CV) and CV Equilibria. } The analysis of players’ behavior in oligopoly games has a long history. An early work \cite{bowley1924mathematical} introduced the concept of conjectural variations (CV), which provided a static benchmark to think about competitors' behaviors. Since its introduction, the concept has been explored and discussed in the academic literature for several decades (see \cite{perry1982oligopoly,figuieres2004theory,vives1999oligopoly}).
A common critique of CV equilibria is that conjectures are interpreted as psychological conjectures about rivals’ responses, which need not coincide with the true strategic environment. To address this, \cite{bresnahan1981duopoly} formally introduced the concept of consistent conjectures, requiring that players' conjectures coincide with the actual slope of reaction functions in equilibrium. Even with this criticism, \cite{cabral1995conjectural} argues that the CV can be regarded as a reduced-form of dynamic games. Relatedly, \cite{brown2023competition} shows that pricing algorithms can induce conjectural-variations-type outcomes through commitment to pricing rules that react to rivals' prices. Our mechanism is different: conjectures arise from biased demand learning induced by feedback and experimentation. Recently, \cite{calderone2023consistent} integrates the concept of (consistent) CV into the study of control and learning systems. They analyze the local stability properties of the dynamics around (consistent) CV equilibria.

Our paper revisits this classic equilibrium concept without imposing exogenous or psychological conjectures on sellers. In our setting, sellers need not even be aware of all of their competitors; they simply maximize revenue against a misspecified demand model. We show that the CV equilibrium arises endogenously from the statistical structure of feedback and experimentation itself. This provides a learning-based foundation for Conjectural Variations, bridging the gap between an abstract ``mental" solution concept and a realizable outcome of learning algorithms.

\section{Main Convergence Result}

\subsection{The Dynamic Pricing Game} \label{sec:model}

We consider an $n$-seller repeated pricing game over periods $t=1,2,\ldots$. In each period $t$, sellers simultaneously post prices $\mathbf{p}^t=(p_1^t,\ldots,p_n^t)^\top$, where seller $i$ chooses $p_i^t$ from a compact interval $\mathcal{P}_i=[p_i^{l},p_i^{h}]\subset(0,\infty)$. We interpret $\mathcal{P}_i$ as the economically relevant price range in which seller $i$ experiments and in which the learning dynamics and candidate long-run outcomes (e.g., equilibrium prices) are sought. Such bounds are standard in practice, e.g., sellers typically impose minimum and maximum prices to respect cost and margin-based floors and to comply with marketplace repricing constraints. 
  Let $\mathcal{P}=\prod_{i=1}^n \mathcal{P}_i$ and for any subset \(J\subseteq[n]\), let
\(\mathcal P_J:=\prod_{j\in J}\mathcal P_j\). Denote $\mathbf{p}^l=(p_1^{l},\ldots,p_n^{l})$ and $\mathbf{p}^h=(p_1^{h},\ldots,p_n^{h})$ as the componentwise lower and upper feasible prices, and as usual, denote by $\mathbf{p}_{-i}$ the vector of prices of all sellers except $i$.

  Given the price vector $\mathbf{p}^t$ in period $t$, seller $i$'s realized demand is $D_i^t = \lambda_i(\mathbf{p}^t) + \varepsilon_i^t$, where $\lambda_i:\mathcal{P}\rightarrow\mathbb{R}_+$ is a deterministic, twice continuously differentiable mean demand function, and the demand shock
vectors $\{\varepsilon^t\}_{t\ge1}$ are non-degenerate and i.i.d. over time and exogenous to the sellers' pricing process: conditional on the past,
the shocks in a period are independent of any randomization used to choose
prices in that period. For each seller $i$, the demand shocks have zero mean $\mathbb E[\varepsilon_i^t]=0$ and finite variance
$\mathrm{Var}(\varepsilon_i^t)=\sigma_i^2\in(0,\infty)$.  Further, we assume that there exists a constant $s_0>0$ such that, for all sellers $i\in[n]$ and all $|s|\le s_0$, 
$\mathbb{E}[\exp\{s\varepsilon_i^t\}]<\infty$. We allow arbitrary
contemporaneous correlation among $(\varepsilon_1^t,\ldots,\varepsilon_n^t)$ that can capture common shocks between sellers.
To avoid negative demand, we assume that realized demand is non-negative almost surely\footnote{Nonnegativity is a modeling convenience and is not essential for our analysis; the results extend to other specifications under the stated tail/moment condition.}
 (i.e., $\lambda_i(\mathbf{p}^t)+\varepsilon_i^t\ge0$ a.s.) for all $i\in[n]$ and $\mathbf{p}^t\in\mathcal{P}$, and that demand satisfies standard regularity conditions: for any $\mathbf{p}\in\mathcal{P}$, $\partial_{p_i}\lambda_i(\mathbf{p})<0 \text{ and } \partial_{p_j}\lambda_i(\mathbf{p})>0$ for all $j\ne i$ (i.e., demand is decreasing in seller $i$'s own price and exhibits positive cross-price effects with respect to rivals' prices).

In period $t$, given prices $\mathbf{p}^t=(p_i^t,\mathbf{p}_{-i}^t)$, seller $i$'s expected instantaneous revenue is
$
r_i(\mathbf{p}^t) \;=\; p_i^t\,\lambda_i(\mathbf{p}^t).
$

\subsection{Feedback and Experimentation Structures}\label{sec:corr_feedback}
We now introduce two key features of our model: the feedback
structure, which determines which rival prices each seller observes, and the
experimentation structure, whose joint distribution determines the correlation
among sellers' price perturbations.

\textbf{Feedback structure.} We allow sellers to differ in the price information they observe. For each seller $i$, let $\mathcal S_i\subseteq[n]\setminus\{i\}$ denote the set of rivals whose prices seller $i$ observes. Define $\mathcal R_i:=\{i\}\cup\mathcal S_i$ and $\mathcal U_i:=[n]\setminus\mathcal R_i$. Thus $\mathcal R_i$ is the set of prices included in seller $i$'s learning procedure, while $\mathcal U_i$ is the set of rivals whose prices are not observed by seller $i$. Let $\mathcal G:=(\mathcal S_1,\ldots,\mathcal S_n)$ denote the feedback structure. Thus, seller $i$ always observes her own price $p_i^t$ and realized demand $D_i^t$, and in addition observes the rival prices $\{p_j^t:j\in\mathcal S_i\}$. Sellers do not observe rivals' demands $D_j^t$, $j\ne i$, and they do not know the true demand function $\lambda_i(\cdot)$.

This directed feedback structure captures a range of market configurations. In markets with very limited information, $\mathcal S_i=\emptyset$ for all sellers, so each seller has only bandit feedback and learns from her own prices and demands. In markets with full price feedback, $\mathcal S_i=[n]\setminus\{i\}$ for all sellers, so every seller observes every rival's price. Between these extremes, sellers may observe only a subset of competitors. For example, small sellers may track the prices of a prominent or dominant seller, while the dominant seller may not track every small seller; alternatively, sellers may monitor only nearby competitors, sellers using the same platform, or a small set of salient rivals. The sets $\mathcal S_i$ need not be symmetric: seller $i$ may observe seller $j$ even if seller $j$ does not observe seller $i$.

\textbf{Experimentation structure.}
In our model, experimentation and learning take place in batches. Sellers do not
re-estimate demand and update their focal price after every individual period.
Instead, they collect demand observations over a batch of pricing periods,
experiment locally around a current focal price during that batch, and update
the focal price only after the batch is completed. This timing is motivated by practical implementations of pricing analytics in
which experiments, repricing, and model recalibrations are carried out over
discrete update windows rather than continuously, so that enough experimental
price-demand observations are collected before the fitted demand model is
updated. 
The specific estimation and reoptimization rule is introduced in
Section~\ref{Sec:Algorithm}.

Formally, let \(I_k\) denote the length of batch \(k\). Set \(t_0=0\),
\(t_k=\sum_{\ell=1}^k I_\ell\), and
$
\mathcal T_k:=\{t_{k-1}+1,\ldots,t_k\}.
$ 
At the beginning of batch \(k\), sellers enter the batch with a focal price
vector \(\hat{\mathbf p}^k\). During the batch, seller \(i\) experiments locally
around her focal price according to
\[
p_i^t=\hat p_i^k+\delta_i^k Z_i^t,\qquad t\in\mathcal T_k,
\]
where \(\delta_i^k>0\) is the experimentation magnitude and
\(Z^t=(Z_1^t,\ldots,Z_n^t)\) is the vector of experimentation variables. The
variables \(Z_i^t\) may be binary, as in the standard time-randomized A/B or switchback
experiments in which each period is assigned to a treatment or control price;
multi-valued; or continuous. Importantly, as emphasized in the Introduction, experimentation variables may be statistically dependent across sellers, for example when sellers' repricing systems operate within shared implementation environments such as third-party repricers or common software infrastructure (e.g., see Example~\ref{Example:Two-Point}). 
We assume throughout the paper that there exist constants $B_i<\infty$ such that $|Z_i^t|\le B_i$ almost surely for all $i$ and $t$.

Our main result links the market feedback structure and the correlation structure of experimentation to the market's long-run outcome, which we characterize as a Conjectural Variations equilibrium. Hence, we now define this equilibrium concept.

\subsection{Conjectural Variations (CV) Equilibrium} \label{Sec:CV}

In this section, we define a conjectural variations (CV) equilibrium. Fix a conjecture matrix $A=(A_{ij})_{n\times n}$ with the convention that $A_{ii}=0$ for all $i\in[n]$, and let $A_{i\cdot}$ denote the $i$th row of $A$. Following the standard CV interpretation, $A_{ij}$ captures seller $i$'s conjecture about the local marginal response of rival $j$'s price to a small change in her own price. That is, for each $i$ and $j\neq i$, seller $i$ has a conjecture that an infinitesimal deviation $dp_i$ induces a contemporaneous co-movement $dp_j$ satisfying
\[
\frac{dp_j}{dp_i}=A_{ij},\qquad j\neq i.
\]
We refer to $A$ as the conjecture matrix, and define CV equilibrium given $A$ below.

Given $A$, consider a static pricing game without noise in which seller $i$ chooses a price $p_i\in[p_i^{l},p_i^{h}]$ to maximize revenue
$
r_i(\mathbf{p})=p_i\,\lambda_i(\mathbf{p}),
$
taking the conjectured local co-movement encoded by $A$ as given. Let $e_i$ denote the $i$th unit vector in $\mathbb{R}^n$. For an interior choice, seller $i$ considers a small increase in her own price $p_i$ and, taking the conjecture $A$ as given, evaluates the resulting marginal change in revenue when rivals' prices co-move according to $dp_j=A_{ij}\,dp_i$. This conjecture defines a direction in price space along which seller $i$ evaluates the marginal effect of changing $p_i$, namely
\[
v_i(A_{i\cdot})\;:=\;e_i\;+\;\sum_{j\neq i}A_{ij}\,e_j,
\]
which represents a unit increase in $p_i$ together with the conjectured contemporaneous adjustments of rivals' prices.
Therefore, under the first-order approach, an interior CV best reply is characterized by 
\begin{equation*} 
0= \nabla r_i(\mathbf{p})^\top v_i(A_{i\cdot}) = \partial_{p_i}r_i(\mathbf{p})+\sum_{j\ne i}A_{ij}\partial_{p_j}r_i(\mathbf{p})
=
\Big(\lambda_i(\mathbf{p})+p_i\,\partial_{p_i}\lambda_i(\mathbf{p})\Big)
+
\sum_{j\neq i} p_iA_{ij}\partial_{p_j}\lambda_i(\mathbf{p}). 
\end{equation*}
Hence an interior CV(\(A\)) best reply satisfies the first-order condition
\begin{equation}\label{eq:CV-FOC}
\lambda_i(\mathbf{p})\;+\;p_i\left(\partial_{p_i}\lambda_i(\mathbf{p})+\sum_{j\ne i}A_{ij}\partial_{p_j}\lambda_i(\mathbf{p})\right)=0.
\end{equation}

This leads to the following definition: 

\begin{definition}[CV$(A)$ equilibrium]
A CV$(A)$ equilibrium is a profile $\mathbf{p}\in\mathcal{P}$ such that, for each seller $i$, $p_i$ maximizes seller $i$'s revenue under the conjecture matrix $A$. In particular, given the sufficiency of the first-order approach (Assumption \ref{assu:cv-foc-sufficiency}), an interior equilibrium $\mathbf{p}$ is characterized by satisfying \eqref{eq:CV-FOC} for all $i\in[n]$.
\end{definition}

When $A=0$, the conjecture imposes no co-movement (i.e., $dp_j=0$ for all $j\neq i$), so seller $i$ behaves as if rivals' prices are locally fixed when she varies $p_i$. In this case, the CV$(0)$ notion reduces to the standard Nash equilibrium.  

CV equilibria are typically interpreted less as a literal description of adjustment over time and more as a mental mechanism that captures how sellers reason about their rivals’ reactions, because it describes ``dynamics'' in a static model \cite{tirole1988theory}. In this sense, sellers do not model the full strategic dynamics of the game but instead rely on local conjectures about how rivals’ actions co-move with their own. Our results show that CV equilibria can arise endogenously from the design of experimentation, even when sellers act independently and do not hold explicit local conjectures, thereby providing a learning foundation for the CV equilibria. 

We impose the following standard first-order sufficiency condition.

\begin{assumption}[First-order sufficiency for CV$(A)$]
\label{assu:cv-foc-sufficiency}
Fix a conjecture matrix $A=(A_{ij})_{n\times n}$ with $A_{ii}=0$. For each
seller $i$ and each price profile $\mathbf p\in\mathcal P$, if the $i$th
condition in \eqref{eq:CV-FOC} holds at $\mathbf p$, then $p_i$ is a CV$(A)$
best reply to $\mathbf p_{-i}$.
\end{assumption}

Assumption~\ref{assu:cv-foc-sufficiency} is the analogue, in the CV problem, of
the usual pseudo-concavity or single-peakedness condition imposed in pricing and
learning models to ensure that first-order conditions characterize optimal
prices. Here the relevant one-dimensional objective is seller $i$'s revenue
evaluated along the conjectured direction $v_i(A_{i\cdot})$. Thus the condition
rules out spurious stationary points along that direction. Lemma~\ref{lm:best_reply_satisfying_models}
in the Appendix verifies this condition for the linear and MNL demand
specifications studied below.

\begin{remark}
\label{rem:cv-foc-endogeneity}
One could instead impose a more primitive sufficient condition directly on the
conjectured path. Given a profile $\mathbf p$ and conjecture row $A_{i\cdot}$,
seller $i$ considers scalar deviations $s$ such that
$\mathbf p+s\,v_i(A_{i\cdot})\in\mathcal P$ and evaluates the one-dimensional
objective $(p_i+s)\lambda_i(\mathbf p+s\,v_i(A_{i\cdot}))$. At an interior
profile, the derivative of this objective with respect to $s$ at $s=0$ is
exactly the left-hand side of \eqref{eq:CV-FOC}. Thus, if every interior
stationary point of this one-dimensional problem is a global maximizer, then
Assumption~\ref{assu:cv-foc-sufficiency} follows. We state the assumption
directly in terms of the first-order condition \eqref{eq:CV-FOC}, because it is
cleaner and because this is the object identified by the learning dynamics. The
path-based formulation is equivalent in spirit, but it requires carrying the
feasible set of scalar deviations and the conjectured direction
 throughout the statement.
\end{remark}

\begin{remark}
\label{rem:boundary-cv}
For expositional simplicity, we state the CV first-order condition
\eqref{eq:CV-FOC} for interior solutions. If the relevant solution lies on the
boundary of $\mathcal P$, the corresponding one-sided KKT conditions replace
the interior condition. Under the same first-order sufficiency logic in
Assumption~\ref{assu:cv-foc-sufficiency}, these one-sided conditions
characterize a boundary CV$(A)$ best reply. Thus the convergence result provided in Theorem \ref{thm:main_theorem} extends
to boundary CV equilibria with the usual KKT interpretation.
\end{remark}

\subsection{Algorithm: Linear Demand Learning} \label{Sec:Algorithm}

In this section, we present our dynamic pricing algorithm: Linear Demand Learning (LDL). The algorithm follows a simple three-stage cycle: experimentation, estimation, and reoptimization. The experimentation stage is described in detail in Section \ref{sec:corr_feedback}. In the estimation stage, each seller fits a local linear demand model using only the most recent batch of data and the prices she observes, which depend on the feedback structure detailed in Section \ref{sec:corr_feedback}. In the reoptimization stage, the seller computes the myopic revenue-maximizing price implied by this fitted linear model and updates her focal price by moving partially toward this target. A parameter $u_i$ controls the amount of damping in the price update: a larger $u_i$ moves seller $i$ more aggressively toward the fitted myopic target, while a smaller $u_i$ produces more stable pricing dynamics.

Let \((\Omega,\mathcal F,\mathbb P)\) be the underlying probability space, and
let \(\mathcal F_t\) denote the full history generated by prices, demands, and
experimentation variables up to the end of period \(t\).\footnote{The filtration
\(\mathcal F_t\) is used to state measurability and
independence conditions. It is not seller \(i\)'s information set. Seller \(i\)'s
update uses only her own demand observations and the prices in
\(\mathcal R_i\).} The focal price \(\hat{\mathbf p}^k\) is
\(\mathcal F_{t_{k-1}}\)-measurable. The experimentation rule in batch \(k\)
may depend on \(\mathcal F_{t_{k-1}}\), but conditional on
\(\mathcal F_{t_{k-1}}\), the experimentation variables used are
independent of the contemporaneous demand shocks in that batch.

To keep realized prices feasible, define
$\mathcal P_i^k:=[p_i^l+B_i\delta_i^k,\;p_i^h-B_i\delta_i^k]$ and
$\mathcal P^k:=\prod_i\mathcal P_i^k$, and assume
$p_i^h-p_i^l>2B_i\delta_i^k$ for all $i,k$. The update of the focal price is
projected onto $\mathcal P_i^{k+1}$. All projections used in the paper are Euclidean projections onto the indicated interval or rectangle. 

For any index set $J\subseteq[n]$, we write
$\mathbf p_J^t:=(p_j^t:j\in J)$ for the corresponding price subvector, with
coordinates ordered increasingly by seller index. The same convention applies
to other vectors indexed by $J$, such as $\bar{\mathbf{p}}_J^k$. 
We describe the algorithm below.\footnote{We use the following sample-path conventions: if the
observable design matrix $\widehat G_i^k$ is singular, the OLS slope
coefficients are not uniquely defined, so the algorithm sets
$\hat z_i^{k+1}=\hat p_i^k$. If the fitted own-price slope satisfies
$\hat\beta_i^{k+1}\le\underline\beta_i$ for a small tolerance
$\underline\beta_i\ge0$ the target $\hat z_i^{k+1}$ is instead replaced by the default feasible
target $p_i^h$ before projection. These conventions only make the algorithm
well defined on every sample path. Under the assumptions of
Theorem~\ref{thm:main_theorem}, they occur only on the bad event for all
sufficiently large batches and do not affect the limit.}

\begin{algorithm}
\caption{Linear Demand Learning with Partial Price Feedback (LDL)}
\label{alg:partial-feedback-ldl}
\begin{algorithmic}
\REQUIRE $\hat p_i^1\in\mathcal P_i^1$, feedback sets
$\{\mathcal S_i\}_{i=1}^n$, learning rates $u_i\in(0,1]$, batch sizes
$\{I_k\}$, experimentation magnitudes $\{\delta_i^k\}$, experimentation
variables $\{Z^t\}_{t\ge1}$, and slope tolerances $\underline\beta_i\ge0$
\STATE $t_0\gets0$
\FOR{$k=1,2,\ldots$}
\STATE Set $t_k\gets t_{k-1}+I_k$ and
$\mathcal T_k\gets\{t_{k-1}+1,\ldots,t_k\}$
\STATE \textbf{Step 1: Experimentation}
\FOR{$t\in\mathcal T_k$}
\STATE Generate the experimentation vector $Z^t=(Z_1^t,\ldots,Z_n^t)$
\STATE Each seller $i$ posts $p_i^t\gets\hat p_i^k+\delta_i^kZ_i^t$
\STATE Each seller $i$ observes $D_i^t$ and the prices
$\{p_r^t:r\in\mathcal R_i\}$
\ENDFOR
\STATE \textbf{Step 2: Estimation}
\FOR{each seller $i$}
\STATE Set $\bar{\mathbf p}_{\mathcal R_i}^k\gets I_k^{-1}\sum_{t\in\mathcal T_k}
\mathbf p_{\mathcal R_i}^t$
\STATE Set
\[
\widehat G_i^k
\gets
I_k^{-1}\sum_{t\in\mathcal T_k}
(\mathbf p_{\mathcal R_i}^t-\bar{\mathbf p}_{\mathcal R_i}^k)
(\mathbf p_{\mathcal R_i}^t-\bar{\mathbf p}_{\mathcal R_i}^k)^\top .
\]
\IF{$\widehat G_i^k$ is singular}
\STATE Set $\hat z_i^{k+1}\gets\hat p_i^k$
\ELSE
\STATE Compute the unique OLS coefficients
\[
(\hat\alpha_i^{k+1},\{\hat b_{ir}^{k+1}:r\in\mathcal R_i\})
\in
\arg\min_{\alpha,\{b_r\}}
\sum_{t\in\mathcal T_k}
\left[D_i^t-\alpha-\sum_{r\in\mathcal R_i}b_rp_r^t\right]^2 .
\]
\STATE Set $\hat\beta_i^{k+1}\gets-\hat b_{ii}^{k+1}$ and
$\hat\theta_{ij}^{k+1}\gets\hat b_{ij}^{k+1}$ for $j\in\mathcal S_i$
\STATE \textbf{Step 3: Reoptimization}
\STATE Set
\[
\hat z_i^{k+1}
\gets
\begin{cases}
p_i^h, & \hat\beta_i^{k+1}\le\underline\beta_i,\\[1ex]
\dfrac{
\hat\alpha_i^{k+1}
+
\sum_{j\in\mathcal S_i}\hat\theta_{ij}^{k+1}\bar p_j^k
}{
2\hat\beta_i^{k+1}
}, & \hat\beta_i^{k+1}>\underline\beta_i .
\end{cases}
\]
\ENDIF
\STATE Update
\[
\hat p_i^{k+1}
\gets
\mathrm{proj}_{\mathcal P_i^{k+1}}
\left((1-u_i)\hat p_i^k+u_i\hat z_i^{k+1}\right).
\]
\ENDFOR
\ENDFOR
\end{algorithmic}
\end{algorithm}

\subsection{Main Theorem: Convergence}\label{sec:convergence}

We first describe the population objects that determine the limiting equilibrium. For each batch $k$, define the centered experimentation variables $\tilde Z_i^{t,k}:=Z_i^t-I_k^{-1}\sum_{s\in\mathcal T_k}Z_i^s$. Note that because the regressions include intercepts, mean experimentation levels do not induce bias; only centered within-batch co-movement matters.

For seller $i$, the empirical covariance matrix of the observed experimentation is $I_k^{-1}\sum_{t\in\mathcal T_k} \tilde Z_{\mathcal R_i}^{t,k} (\tilde Z_{\mathcal R_i}^{t,k})^\top$, which captures the variation in the prices included in her regression. For an unobserved rival $\ell\in\mathcal U_i$, the empirical
covariance vector
$
I_k^{-1}\sum_{t\in\mathcal T_k}
\tilde Z_{\mathcal R_i}^{t,k}\tilde Z_\ell^{t,k}
$ 
captures how seller $\ell$'s unobserved experimentation co-moves with the
experimentation variables observed by seller $i$. When these objects converge (see Assumption \ref{assu:A-star}),
write their deterministic limits as $\Sigma_i^\star$ and $\mathbf c_{i\ell}^\star$.
Thus $\Sigma_i^\star$ is the limiting observed-design covariance matrix for
seller $i$, and $\mathbf c_{i\ell}^\star$ is the limiting covariance vector between
unobserved rival $\ell$ and seller $i$'s observed price regressors.
Given these limits, define
$\pi_{i\ell}^\star:=(\Sigma_i^\star)^{-1}\mathbf c_{i\ell}^\star$ for
$\ell\in\mathcal U_i$. This is a population partial linear-projection
coefficient, with coordinates indexed by \(\mathcal R_i\): for
\(r\in\mathcal R_i\), \([\pi_{i\ell}^\star]_r\) denotes the coordinate
associated with seller \(r\). Thus \([\pi_{i\ell}^\star]_r\) measures how much seller
$\ell$'s unobserved experimentation is linearly predicted by seller $r$'s
observed experimentation, after controlling for the other observed price
movements in $\mathcal R_i$.

With this notation, define the conjecture matrix \(A^{\mathcal G,\star}\), the
statistical object that links experimentation and feedback to CV equilibria, by
\begin{equation}\label{eq:A-star-partial-feedback}
A_{ij}^{\mathcal G,\star}
:=
\begin{cases}
[\pi_{ij}^\star]_i, & j\in\mathcal U_i,\\
0, & j\in\mathcal R_i.
\end{cases}
\end{equation}
If rival $j$'s price is observed by seller $i$, it is included directly as a regressor in seller $i$'s linear demand model. Because this observed price is held fixed when seller $i$ computes the fitted myopic target, the corresponding conjecture coefficient is zero. Conversely, if rival $j$ is unobserved, $j$'s price acts as an omitted variable in the OLS regression. The coefficient $[\pi_{ij}^\star]_i$ captures exactly the portion of $j$'s unobserved price experimentation that is linearly predicted by seller $i$'s own experimentation, after controlling for the other observed prices. Hence, the limiting conjecture matrix is endogenously determined by the interaction between the feedback structure $\mathcal G$ and the covariance structure of experimentation. As we establish in Theorem~\ref{thm:main_theorem}, under the assumptions we now present, the LDL algorithm converges (in expectation) to this CV$(A^{\mathcal G,\star})$ equilibrium.

For any conjecture matrix $A=(A_{ij})_{n\times n}$ with $A_{ii}=0$, define
\begin{equation}\label{eq:beta_alpha}
\beta_i^{(A)}(\mathbf p)
:=
-\left(\partial_{p_i}\lambda_i(\mathbf p)
+
\sum_{j\ne i}A_{ij}\partial_{p_j}\lambda_i(\mathbf p)\right)
\end{equation}
which is the adjusted own-price demand slope faced by seller \(i\)
under conjecture \(A\).

\begin{assumption}
\label{assu:A-star}
For every seller
$i$, $I_k^{-1}\sum_{t\in\mathcal T_k}\tilde Z_{\mathcal R_i}^{t,k}
(\tilde Z_{\mathcal R_i}^{t,k})^\top\xrightarrow{a.s.}\Sigma_i^\star$ as
$k\to\infty$. In addition, for every seller $i$ and every unobserved rival
$\ell\in\mathcal U_i$,
$I_k^{-1}\sum_{t\in\mathcal T_k}\tilde Z_{\mathcal R_i}^{t,k}
\tilde Z_\ell^{t,k}\xrightarrow{a.s.}\mathbf c_{i\ell}^\star$ as $k\to\infty$.
The matrix $\Sigma_i^\star$ is positive definite for every seller $i$.
Finally, for \(A^{\mathcal G,\star}\) defined in
\eqref{eq:A-star-partial-feedback}, there exists \(\beta_{\min}>0\) such that
$
\beta_i^{(A^{\mathcal G,\star})}(\mathbf p)\ge \beta_{\min}
$ 
for every seller \(i\) and every \(\mathbf p\in\mathcal P\), and the slope
tolerances in Algorithm~\ref{alg:partial-feedback-ldl} satisfy
\(\underline\beta_i<\beta_{\min}/2\) for all \(i\).
\end{assumption}

Assumption~\ref{assu:A-star} is mainly a technical regularity condition for the
LDL dynamics. The stabilization
condition is a law-of-large-numbers requirement for the empirical design: within
large batches, the second moments that determine the OLS projection converge to
deterministic limits. It is satisfied, for example, by bounded i.i.d.\
experimentation within each batch, conditional on the past, and by many
stationary designs with stable second moments.
Positive definiteness of $\Sigma_i^\star$ rules out asymptotic collinearity
among the observed price regressors. The final slope condition ensures that the
limiting own-price slope learned by OLS is uniformly positive. 

When $\beta_i^{(A)}(\mathbf p)>0$, let
\[
z_i^{(A)}(\mathbf p)
:=
\frac{\lambda_i(\mathbf p)+p_i\beta_i^{(A)}(\mathbf p)}
{2\beta_i^{(A)}(\mathbf p)} 
\]
be the myopic revenue-maximizing price under the corresponding linear demand
approximation and conjecture $A$. As usual, for a differentiable vector-valued function $z$, write $Dz(\mathbf p)$
for its Jacobian at $\mathbf p$.\footnote{For
$z:\mathcal P\to\mathbb R^n$, $Dz(\mathbf p)$ is the matrix
$(\partial z_i(\mathbf p)/\partial p_j)_{i,j\in[n]}$. For a matrix
$X=(x_{ij})$, $\|X\|_\infty:=\max_i\sum_j|x_{ij}|$ is the maximum row-sum
norm.}

Some stability condition is needed to obtain convergence in multi-agent learning
dynamics, especially given the misspecification in our model.
Assumption~\ref{assu:ldl-contraction} below is the key stability condition that
allows us to prove convergence despite the sellers' misspecification from
ignoring some competitive effects and fitting a local linear demand model. As we discuss in Section~\ref{sec:examples}, this stability condition has a
simple economic interpretation: own-price demand effects must be strong enough
relative to aggregate cross-price effects and local nonlinearities of demand. In
linear demand models, where there is no demand-curvature misspecification, this
becomes the usual diagonal-dominance condition. In nonlinear models, it requires
demand to be sufficiently close to locally linear in the relevant price region.
Section~\ref{sec:examples} verifies this condition for linear demand and gives
simple sufficient conditions for MNL demand over a wide range of parameters.

In addition, convergence is not the only sense in which our algorithm is
informative. In Section~\ref{sec:beyond-contraction}, we show that even without
Assumption~\ref{assu:ldl-contraction}, the conjecture matrix induced by the
experimentation and feedback structure still determines the  direction
of the price updates.

\begin{assumption}
\label{assu:ldl-contraction}
For $A^{\mathcal G,\star}$ defined in
\eqref{eq:A-star-partial-feedback},
\[
\sup_{\mathbf p\in\mathcal P}
\|Dz^{(A^{\mathcal G,\star})}(\mathbf p)\|_\infty<1 .
\]
\end{assumption}

Under Assumption~\ref{assu:ldl-contraction}, let
\(\mathbf p^{(A^{\mathcal G,\star})}\) denote the unique fixed point in
\(\mathcal P\) of the deterministic population LDL map
$
\mathrm{proj}_{\mathcal P}\!\left((I-U)\mathbf p
+Uz^{(A^{\mathcal G,\star})}(\mathbf p)\right)$ where
$U:=\operatorname{diag}(u_1,\ldots,u_n)$. 

We now present our main theorem: 

\begin{theorem}
\label{thm:main_theorem}
Consider the matrix $A^{\mathcal G,\star}$ defined in
\eqref{eq:A-star-partial-feedback}. 
Suppose that Assumption~\ref{assu:cv-foc-sufficiency} holds for
$A=A^{\mathcal G,\star}$ and that Assumptions~\ref{assu:A-star}
and~\ref{assu:ldl-contraction} hold. For the parameters in
Algorithm~\ref{alg:partial-feedback-ldl}, suppose
$\delta_i^k\downarrow0$, $\delta_i^k/\delta_j^k$ is uniformly bounded and
converges to $1$ for all $i,j\in[n]$,
$\delta_i^k I_k^{1/2}/\log(eI_k)\to\infty$ for all $i\in[n]$, and
$u_i\in(0,1]$ for all $i\in[n]$.

Then the focal prices generated when all sellers use
Algorithm~\ref{alg:partial-feedback-ldl} converge in expectation:
\[
\mathbb E\left[
\left\|\hat{\mathbf p}^k-\mathbf p^{(A^{\mathcal G,\star})}\right\|_\infty
\right]\to0,
\qquad k\to\infty,
\]
 Moreover, if $\mathbf p^{(A^{\mathcal G,\star})}$ lies in
$\mathrm{int}(\mathcal P)$, then it is a CV$(A^{\mathcal G,\star})$
equilibrium.\footnote{Boundary solutions are interpreted as
in Remark~\ref{rem:boundary-cv}.}
\end{theorem}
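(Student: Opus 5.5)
The plan is to compare the stochastic LDL recursion with the deterministic population map $\Phi(\mathbf p):=\mathrm{proj}_{\mathcal P}\big((I-U)\mathbf p+Uz^{(A^{\mathcal G,\star})}(\mathbf p)\big)$. There are three steps: (i) show that each seller's fitted target $\hat z_i^{k+1}$ equals $z_i^{(A^{\mathcal G,\star})}(\hat{\mathbf p}^k)$ plus an error $\eta_i^k$; (ii) show that $\mathbb E\|\eta^k\|_\infty\to0$; (iii) use the contraction in Assumption~\ref{assu:ldl-contraction} to turn this into convergence of the focal prices. The CV interpretation of the fixed point then follows from the first-order conditions.

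\textbf{Step 1: bias decomposition.} Within batch $k$, Taylor-expand $\lambda_i(\hat{\mathbf p}^k+\delta^k\odot Z^t)$ around $\hat{\mathbf p}^k$. Using $\delta_j^k/\delta_i^k\to1$, this gives
\[
D_i^t=\lambda_i(\hat{\mathbf p}^k)+\textstyle\sum_j\partial_j\lambda_i(\hat{\mathbf p}^k)\,\delta_j^kZ_j^t+O\big((\delta^k)^2\big)+\varepsilon_i^t .
\]
Regress on the intercept and $\mathbf p^t_{\mathcal R_i}$. By the Frisch--Waugh decomposition, the OLS slope on $p_r$, $r\in\mathcal R_i$, equals the true slope $\partial_r\lambda_i$ plus the omitted-variable terms $\sum_{\ell\in\mathcal U_i}\partial_\ell\lambda_i\,(\delta_\ell^k/\delta_r^k)\,[\hat\pi^k_{i\ell}]_r$, where $\hat\pi^k_{i\ell}:=(\widehat\Sigma_i^k)^{-1}\hat{\mathbf c}_{i\ell}^k$ is the empirical projection. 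On top of this come a noise term $(\widehat G_i^k)^{-1}I_k^{-1}\sum_t\tilde{\mathbf p}^t_{\mathcal R_i}\varepsilon_i^t$ and a curvature remainder of order $O(\delta^k)$. By Assumption~\ref{assu:A-star} and $\delta_\ell^k/\delta_r^k\to1$, the own-price coefficient satisfies $\hat\beta_i^{k+1}\to\beta_i^{(A^{\mathcal G,\star})}(\hat{\mathbf p}^k)$ and the observed cross-slopes satisfy $\hat\theta_{ij}^{k+1}\to\partial_j\lambda_i+\sum_\ell\partial_\ell\lambda_i[\pi^\star_{i\ell}]_j$. Next, plug the intercept identity $\hat\alpha=\bar D_i-\sum_r\hat b_r\bar p_r$ into the target. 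The $\hat\theta_{ij}\bar p_j$ terms cancel exactly, giving $\hat z_i^{k+1}=(\bar D_i^k+\hat\beta_i^{k+1}\bar p_i^k)/(2\hat\beta_i^{k+1})$. Since $\bar D_i^k=\lambda_i(\hat{\mathbf p}^k)+O(\delta^k)+\bar\varepsilon_i^k$ and $\bar p_i^k=\hat p_i^k+O(\delta^k)$, the cancellation is what makes the target converge to $z_i^{(A^{\mathcal G,\star})}(\hat{\mathbf p}^k)$. It also explains why observed rivals carry zero conjecture.

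\textbf{Step 2: controlling $\eta^k$.} This is the main obstacle, because the noise in the slopes is divided by $\delta^k$: the slope error is of order $I_k^{-1/2}/\delta^k$. I will first condition on $\mathcal F_{t_{k-1}}$ and use the conditional independence of $Z$ and $\varepsilon$ together with the sub-exponential tail of $\varepsilon$. A Bernstein-type bound then gives $\|I_k^{-1}\sum_t\tilde Z^t\varepsilon_i^t\|\lesssim\sqrt{\log(eI_k)/I_k}$ with failure probability polynomially small in $I_k$. Second, I will define a good event on which $\widehat\Sigma_i^k$ is within a fixed neighborhood of $\Sigma_i^\star$ (positive definite), the design covariances are close to their limits, and the scaled noise is small. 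On this event $\hat\beta_i^{k+1}>\beta_{\min}/2>\underline\beta_i$, so the safeguards in the algorithm are inactive, and $|\eta_i^k|\lesssim\delta^k+\log(eI_k)/(\delta^kI_k^{1/2})+o(1)\to0$ by the rate hypothesis. On the complement, all prices lie in the compact set $\mathcal P$, so $\eta^k$ is bounded after projection. The complement has vanishing probability, by almost-sure convergence in Assumption~\ref{assu:A-star} (hence in probability) plus the tail bound. Dominated convergence then gives $\mathbb E\|\eta^k\|_\infty\to0$, where $\eta^k$ is understood as the post-projection discrepancy.

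\textbf{Step 3: contraction, recursion, and CV interpretation.} By Assumption~\ref{assu:ldl-contraction} and the mean value theorem, $z^{(A^{\mathcal G,\star})}$ is $\kappa$-Lipschitz in $\|\cdot\|_\infty$ with $\kappa<1$. Hence $(I-U)\mathbf p+Uz(\mathbf p)$ has sup-norm Lipschitz constant $\max_i(1-u_i+u_i\kappa)=:\rho<1$, and projection onto a rectangle is nonexpansive in $\|\cdot\|_\infty$. Projection onto the shrinking sets $\mathcal P^{k+1}$ rather than $\mathcal P$ adds only $O(\delta^{k+1})$. Therefore
\[
e_k:=\mathbb E\|\hat{\mathbf p}^k-\mathbf p^{(A^{\mathcal G,\star})}\|_\infty\quad\text{satisfies}\quad e_{k+1}\le\rho\,e_k+\max_iu_i\,\mathbb E\|\eta^k\|_\infty+O(\delta^{k+1}),
\]
and a standard lemma on such recursions with vanishing forcing yields $e_k\to0$. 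Finally, if the fixed point is interior, then $z_i(\mathbf p)=p_i$ for every $i$, which rearranges to $\lambda_i+p_i\big(\partial_i\lambda_i+\sum_jA_{ij}\partial_j\lambda_i\big)=0$, i.e.\ \eqref{eq:CV-FOC}. Assumption~\ref{assu:cv-foc-sufficiency} then makes each $p_i$ a CV$(A^{\mathcal G,\star})$ best reply.
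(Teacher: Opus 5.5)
Your proposal is correct and follows essentially the same route as the paper: on a good event you get a Taylor/OLS omitted-variable decomposition of the fitted coefficients, and the bad event has vanishing probability by the design-stabilization assumption together with conditional concentration of the shocks. You then run the contraction recursion $e_{k+1}\le\gamma e_k+w_k$ in expectation and finish with the interior fixed-point/FOC argument. Your sample-level cancellation $\hat z_i^{k+1}=(\bar D_i^k+\hat\beta_i^{k+1}\bar p_i^k)/(2\hat\beta_i^{k+1})$ is a neat shortcut that removes the need for the paper's separate intercept and $\hat\theta$ error bounds. Bounding the post-projection discrepancy by $D_{\mathcal P}$ and using convergence in probability also replaces the paper's construction of deterministic tolerances $\Delta^k$. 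One small slip: the sets $\mathcal P^{k}$ expand rather than shrink, which does not affect your $O(\delta^{k+1})$ bound.
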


We now discuss two benchmark cases of Theorem~\ref{thm:main_theorem}.

\textbf{Full price feedback.} If every seller observes every rival's price, then
$\mathcal U_i=\emptyset$ for every seller $i$, so $A^{\mathcal G,\star}=0$:
all rival price movements are controlled for directly. Hence LDL converges to
the Nash equilibrium even under correlated experimentation. Full feedback is generally a strong assumption in competitive markets with many
sellers, since it requires including every other
seller's price in the learning procedure. Hence, even if only some rival prices
are omitted, correlated experimentation among those omitted rivals can still
generate a nonzero conjecture matrix and lead the dynamics to a CV$(A)$
equilibrium rather than the Nash equilibrium.

\textbf{Bandit feedback.} At the other extreme, if $\mathcal S_i=\emptyset$ for
every seller $i$, then seller $i$ controls only for her own price. In this case,
for $j\ne i$,
\[
A_{ij}^{\mathcal G,\star}
=
\frac{\operatorname{Cov}^\star(Z_i,Z_j)}
{\operatorname{Var}^\star(Z_i)}
:=
\frac{\lim_k I_k^{-1}\sum_{t\in\mathcal T_k}\tilde Z_i^{t,k}\tilde Z_j^{t,k}}
{\lim_k I_k^{-1}\sum_{t\in\mathcal T_k}(\tilde Z_i^{t,k})^2}.
\]
Thus the conjecture is the limiting regression coefficient of rival $j$'s
experimentation on seller $i$'s experimentation, which is exactly the
omitted-variable-bias term: when seller $j$'s unobserved experimentation
co-moves with seller $i$'s experimentation, part of seller $j$'s cross-price
effect is attributed to seller $i$'s own-price effect, thereby connecting biased
demand learning to the CV equilibrium that is learned.

We next illustrate how familiar switchback experimentation designs translate into the conjecture coefficients.

\begin{example}[Two-point switchback experiments] \label{Example:Two-Point}
The standard two-point A/B design is recovered by taking
$Z_i^t\in\{0,1\}$, with $\mathbb P(Z_i^t=1)=q_i \in (0,1)$, where $Z_i^t=1$ means that
seller $i$ assigns period $t$ to the treatment price. In the bandit-feedback
case,
\[
A_{ij}^{\mathcal G,\star}
=
\frac{\mathrm{Cov}(Z_i,Z_j)}{\mathrm{Var}(Z_i)}
=
\mathbb P(Z_j=1\mid Z_i=1)
-
\mathbb P(Z_j=1\mid Z_i=0),
\]
where the probabilities are under the limiting within-batch experimentation
law given in Assumption \ref{assu:A-star}. Thus the coefficient is the difference in the probability that seller $j$
assigns a period to treatment when seller $i$ assigns that period to treatment
versus when seller $i$ does not.

Many reduced-form models can generate such correlation. One stylized way to
capture the common repricing windows discussed in the Introduction is to suppose
that some sellers share the same implementation environment (such as a
third-party repricer or shared software infrastructure). Let $g(i)$ denote seller $i$'s implementation group. For each group $g$, draw a common implementation state
$B_g^t\sim\mathrm{Bernoulli}(\pi_g)$, where $B_g^t=1$ represents a period in which the shared implementation environment increases the probability that sellers in group $g$ post their treatment price.\footnote{A more literal model, which could also
capture API scheduling, would specify a continuous-time process in which
repricer batch jobs, API throttling, or other software constraints create common
update opportunities. We do not model this explicitly; the common state
\(B_g^t\) is a reduced-form representation of the resulting period-level
co-movement in experimentation decisions.}  Conditional
on the group states, experimentation decisions are independent across sellers,
with
\[
\mathbb P(Z_i^t=1\mid B_{g(i)}^t=1)=q_i+(1-\pi_{g(i)})d_i,\qquad
\mathbb P(Z_i^t=1\mid B_{g(i)}^t=0)=q_i-\pi_{g(i)}d_i,
\]
where $d_i$ measures how strongly seller $i$'s period-level treatment
probability responds to the common implementation state, and is chosen so that
both conditional probabilities lie in $[0,1]$. Seller $i$'s marginal
experimentation probability remains $q_i$, but the common implementation state
makes treatment assignments co-move within a group. If sellers $i$ and $j$ are
in the same group $g$, then\footnote{Indeed, note that the construction implies
$\mathbb E[Z_i^t\mid B_g^t]=q_i+d_i(B_g^t-\pi_g)$ and
$\mathbb E[Z_j^t\mid B_g^t]=q_j+d_j(B_g^t-\pi_g)$. Conditional on $B_g^t$,
the experimentation decisions are independent, so the law of total covariance
gives
$\mathrm{Cov}(Z_i^t,Z_j^t)=
\mathrm{Cov}(\mathbb E[Z_i^t\mid B_g^t],
\mathbb E[Z_j^t\mid B_g^t])
=\pi_g(1-\pi_g)d_id_j$.
Since $Z_i^t$ is Bernoulli with marginal probability $q_i$,
$\mathrm{Var}(Z_i^t)=q_i(1-q_i)$, yielding the stated expression.}
\[
A_{ij}^{\mathcal G,\star}
=
\frac{\pi_g(1-\pi_g)d_id_j}{q_i(1-q_i)}.
\]
If their implementation states are independent, then
$A_{ij}^{\mathcal G,\star}=0$.
\end{example}

\begin{example}[Three-point switchback experiments]
A three-point switchback design, which can be natural in strategic environments, uses  $Z_i^t\in\{-1,0,1\}$ corresponding to
$\hat p_i^k-\delta_i^k$, $\hat p_i^k$, and
$\hat p_i^k+\delta_i^k$; see \citet{wu2024switchback}. In the bandit-feedback
case, suppose seller $i$'s design is symmetric:
$\mathbb P(Z_i=1)=\mathbb P(Z_i=-1)=q$, where \(0<q\le 1/2\). Then
\[
A_{ij}^{\mathcal G,\star}
=
\frac12\left(
\mathbb E[Z_j\mid Z_i=1]
-
\mathbb E[Z_j\mid Z_i=-1]
\right).
\]
Thus the induced conjecture is the conditional swing in seller $j$'s
experimentation as seller $i$ moves from the low experimental price to the high
experimental price, normalized by the length of that two-step movement.\footnote{
Under symmetry, $\mathbb E[Z_i]=0$ and $\mathrm{Var}(Z_i)=2q$. Hence
$\mathrm{Cov}(Z_i,Z_j)=
q\,\mathbb E[Z_j\mid Z_i=1]
-
q\,\mathbb E[Z_j\mid Z_i=-1]$, and dividing by $2q$ gives the expression in the
text.} Independent switchbacks give $A_{ij}^{\mathcal G,\star}=0$, while a
common switchback clock with $Z_j^t=Z_i^t$ gives
$A_{ij}^{\mathcal G,\star}=1$.
\end{example}

\subsection{Convergence to Nash Equilibrium}

A particularly important case is when the induced conjecture matrix is zero.
This happens whenever seller \(i\)'s observed experimentation variables carry no
limiting linear information about the experimentation variables of rivals whose
prices seller \(i\) does not observe. Beyond the full-feedback case discussed
above, independent experimentation across sellers is the simplest example: then
the unobserved experimentation movements are asymptotically orthogonal to the
observed ones, so the omitted-variable bias in the learned coefficients
disappears. The following corollary formalizes this zero-bias case.

\begin{corollary}
\label{corr:nash}
Suppose that all the assumptions of
Theorem~\ref{thm:main_theorem} hold. Suppose, in addition, that
for every seller $i$ and every unobserved rival $\ell\in\mathcal U_i$, the
limiting covariance vector satisfies $\mathbf c_{i\ell}^\star=0$. Then
$A^{\mathcal G,\star}=0$. Consequently, the sequence
$\{\hat{\mathbf p}^k:k\ge1\}$ generated when all sellers use
Algorithm~\ref{alg:partial-feedback-ldl} satisfies
\[
\mathbb E\big[\|\hat{\mathbf p}^k-\mathbf p^{(0)}\|_\infty\big]\to0,
\qquad k\to\infty.
\]
If $\mathbf p^{(0)}$ lies in $\mathrm{int}(\mathcal P)$, then the limiting CV$(0)$ equilibrium
coincides with the Nash equilibrium.
\end{corollary}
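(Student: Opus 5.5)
The corollary follows directly from Theorem~\ref{thm:main_theorem} once we show that the vanishing covariance vectors force the induced conjecture matrix to be zero. We check the entries of $A^{\mathcal G,\star}$ in \eqref{eq:A-star-partial-feedback} one at a time. If $j\in\mathcal R_i$, then $A^{\mathcal G,\star}_{ij}=0$ by definition. If $j\in\mathcal U_i$, then $A^{\mathcal G,\star}_{ij}=[\pi_{ij}^\star]_i$ with $\pi_{ij}^\star=(\Sigma_i^\star)^{-1}\mathbf c_{ij}^\star$. Assumption~\ref{assu:A-star} makes $\Sigma_i^\star$ positive definite, hence invertible, so $\mathbf c_{ij}^\star=0$ gives $\pi_{ij}^\star=0$ and therefore $A^{\mathcal G,\star}_{ij}=0$. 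Combining the two cases, $A^{\mathcal G,\star}=0$.

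Next, substitute $A=0$ into the objects defined before the theorem. In \eqref{eq:beta_alpha} we get $\beta_i^{(0)}(\mathbf p)=-\partial_{p_i}\lambda_i(\mathbf p)$, and $z^{(0)}$ is the corresponding myopic target. The hypotheses of Theorem~\ref{thm:main_theorem} are assumed to hold, and they are stated for $A^{\mathcal G,\star}$, which we have just shown equals $0$. So Assumption~\ref{assu:cv-foc-sufficiency} holds for $A=0$, and Assumptions~\ref{assu:A-star} and~\ref{assu:ldl-contraction} hold with $z^{(0)}$. Consequently $\mathbf p^{(A^{\mathcal G,\star})}=\mathbf p^{(0)}$ is the unique fixed point of the population LDL map. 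Applying Theorem~\ref{thm:main_theorem} then yields $\mathbb E\|\hat{\mathbf p}^k-\mathbf p^{(0)}\|_\infty\to0$.

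For the last claim, suppose $\mathbf p^{(0)}\in\mathrm{int}(\mathcal P)$. Theorem~\ref{thm:main_theorem} says it is a CV$(0)$ equilibrium. With $A=0$, condition \eqref{eq:CV-FOC} reduces to $\lambda_i+p_i\partial_{p_i}\lambda_i=0$, which is the ordinary first-order condition for maximizing $p_i\lambda_i(\mathbf p)$ over $p_i$ with $\mathbf p_{-i}$ held fixed. Moreover, $v_i(0)=e_i$, so Assumption~\ref{assu:cv-foc-sufficiency} with $A=0$ states exactly that this first-order condition identifies a best reply to $\mathbf p_{-i}$. Each $p_i$ is therefore a best response to the rivals' prices, which makes $\mathbf p^{(0)}$ a Nash equilibrium; this is the remark in Section~\ref{Sec:CV} that CV$(0)$ coincides with Nash. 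The only step that is more than substitution is the invertibility of $\Sigma_i^\star$, and the positive-definiteness clause of Assumption~\ref{assu:A-star} supplies it directly.
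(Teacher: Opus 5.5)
Your proposal is correct and follows the paper's proof essentially step for step. You show $A^{\mathcal G,\star}=0$ entrywise: observed rivals are zero by definition, and for unobserved rivals $\pi_{i\ell}^\star=(\Sigma_i^\star)^{-1}\mathbf c_{i\ell}^\star=0$. You then invoke Theorem~\ref{thm:main_theorem} with $A=0$, and use Assumption~\ref{assu:cv-foc-sufficiency} at $A=0$ to read the reduced first-order conditions as Nash best replies.
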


\section{Proof Intuition and Beyond Linear Demand Learning}
\label{sec:intuition-beyond-ldl}

Theorem \ref{thm:main_theorem} is stated for the LDL algorithm, in which each seller fits a
linear demand model by OLS. This linear regression step is useful because it
produces a simple closed-form conjecture matrix: the entries of
\(A^{\mathcal G,\star}\) are partial linear-projection coefficients of
unobserved experimentation on the price variation observed by a seller.
However, the mechanism behind the CV limit and the connection to the feedback structure and experimentation design is not an artifact of linear models. The key object is the demand relationship that a seller can
learn from the data she actually observes.
For seller \(i\), define the batch-\(k\) conditional mean demand by
\begin{equation}
\label{eq:conditional-mean-demand}
m_{i,k}(\mathbf x_{\mathcal R_i};\hat{\mathbf p}^k)
:=
\mathbb E\!\left[
D_i^t
\,\middle|\,
\mathbf p_{\mathcal R_i}^t=\mathbf x_{\mathcal R_i},\mathcal F_{t_{k-1}}
\right].
\end{equation}
This is the demand object generated by seller \(i\)'s data in batch \(k\). Since
the demand shock has conditional mean zero,
$
m_{i,k}(\mathbf x_{\mathcal R_i};\hat{\mathbf p}^k)
=
\mathbb E\!\left[
\lambda_i(\mathbf x_{\mathcal R_i},\mathbf p_{\mathcal U_i}^t)
\,\middle|\,
\mathbf p_{\mathcal R_i}^t=\mathbf x_{\mathcal R_i},\mathcal F_{t_{k-1}}
\right]
$. 
Thus, when seller \(i\) does not observe all rival prices, the object she can
learn is not the primitive demand function \(\lambda_i(\mathbf p)\). It is the
mean demand after averaging over the unobserved prices according to their
conditional distribution given the observed prices. Under full price feedback,
there are no unobserved prices and this conditional mean coincides with the
true demand function evaluated at the observed price vector. Under partial
feedback, however, unobserved rival prices enter through their statistical
co-movement with the prices seller \(i\) observes.

To see the connection to CV, consider for simplicity a common experimentation
magnitude \(\delta^k\). Under standard regularity assumptions, at an experimental value
\(\mathbf x_{\mathcal R_i}=\hat{\mathbf p}_{\mathcal R_i}^k+\delta^k \mathbf z_{\mathcal R_i}\), the
regular conditional mean demand can be written as
\[
m_{i,k}(\hat{\mathbf p}_{\mathcal R_i}^k+\delta^k z_{\mathcal R_i};
\hat{\mathbf p}^k)
=
\mathbb E\!\left[
\lambda_i(
\hat{\mathbf p}_{\mathcal R_i}^k+\delta^k \mathbf z_{\mathcal R_i},
\hat{\mathbf p}_{\mathcal U_i}^k+\delta^k Z_{\mathcal U_i}^t)
\,\middle|\,
Z_{\mathcal R_i}^t= \mathbf z_{\mathcal R_i},
\mathcal F_{t_{k-1}}
\right]
\]
and the first-order expansion around the focal price gives
\begin{align}
m_{i,k}(\hat{\mathbf p}_{\mathcal R_i}^k+\delta^k z_{\mathcal R_i};
\hat{\mathbf p}^k)
&=
\lambda_i(\hat{\mathbf p}^k)
+
\delta^k
\sum_{r\in\mathcal R_i}
\partial_{p_r}\lambda_i(\hat{\mathbf p}^k)z_r
\nonumber\\
&\quad+
\delta^k
\sum_{\ell\in\mathcal U_i}
\partial_{p_\ell}\lambda_i(\hat{\mathbf p}^k)
\mathbb E[
Z_\ell^t
\mid
Z_{\mathcal R_i}^t= \mathbf z_{\mathcal R_i},
\mathcal F_{t_{k-1}}]
+
O((\delta^k)^2).
\label{eq:conditional-mean-local-expansion}
\end{align}
The first term is the local demand level at the focal price. The second term is
the direct effect of the price coordinates included in seller \(i\)'s learning
problem. The third term is the effect of unobserved rivals' price movements,
averaged according to how those unobserved movements co-move with the observed
experimental prices. This third term is the source of the
conjectural-variation component. Even without imposing a linear learning rule,
a seller who learns a local marginal effect from this conditional demand object
does not learn only the primitive own-price derivative. She also incorporates
the demand effect of unobserved rivals' price movements that predictably
accompany the observed price variation. Thus the learned marginal effect behaves
as if seller \(i\)'s own price movement were accompanied by induced movements
in unobserved rivals' prices. This is precisely the idea of a conjectural variation.

The OLS demand learning identifies exactly these conjectures. In the stabilized
population design of Assumption~\ref{assu:A-star}, the own-price coefficient
learned by OLS is, to first order,
$ 
\partial_{p_i}\lambda_i(\hat{\mathbf p}^k)
+
\sum_{j\ne i}
A_{ij}^{\mathcal G,\star}
\partial_{p_j}\lambda_i(\hat{\mathbf p}^k),
$ 
where \(A_{ir}^{\mathcal G,\star}=0\) for observed rivals
\(r\in\mathcal S_i\), and
\(A_{i\ell}^{\mathcal G,\star}=[\pi_{i\ell}^{\star}]_i\) for unobserved rivals
\(\ell\in\mathcal U_i\). Thus, OLS turns the linear projection of unobserved  experimentation on observed
experimentation into the conjecture matrix
\(A^{\mathcal G,\star}\) defined in \eqref{eq:A-star-partial-feedback}.
Observed rivals do not generate conjectural-variation coefficients because their
prices are included in the regression and are held fixed when seller \(i\)
computes the fitted myopic target.

Using this learned slope to differentiate the fitted local revenue (price times learned local demand) yields the local marginal revenue
\[
\lambda_i(\hat{\mathbf p}^k)
+
\hat p_i^k
\left(
\partial_{p_i}\lambda_i(\hat{\mathbf p}^k)
+
\sum_{j\ne i}
A_{ij}^{\mathcal G,\star}
\partial_{p_j}\lambda_i(\hat{\mathbf p}^k)
\right)
+
o(1).
\]
This is the CV marginal-revenue expression associated with the conjecture
matrix \(A^{\mathcal G,\star}\). Thus the algorithm behaves as if seller \(i\)
believes that an incremental change in her own price is accompanied by
contemporaneous movements of unobserved rivals' prices according to the
coefficients in \(A^{\mathcal G,\star}\) that are induced by the interaction between the feedback
structure and the correlation structure of experimentation.
The rest of the convergence argument formalizes this intuition. 

This also clarifies the role of nonlinear demand. LDL fits a linear model, but
only as a local approximation inside each shrinking experimental batch. The
leading first-order terms are governed by the derivatives of the true demand
function and by the conditional co-movement of observed and unobserved
experimentation. Nonlinearities enter through the local approximation error,
which is controlled by shrinking the experimentation magnitude. Thus the CV
limit is not driven by a globally linear demand assumption. Linear regression  provides a tractable and explicit way to identify the induced
conjecture matrix \(A^{\mathcal G,\star}\) from the observed experimental
variation.

\section{Discussion on Assumptions and Implications}
\label{sec:supra_competitive}

In this section we discuss some of our modeling assumptions and implications of our results.

\textbf{Exogeneity of feedback and experimentation.}
We assume that the feedback structure and the correlation structure of
experimentation are exogenous and focus on the long-run prices generated given
these structures. One could, in principle, study a meta-game in which sellers
choose whom to monitor or how to experiment in order to influence the induced
equilibrium. We do not model such a game here. Doing so would require sellers
to have strong information about demand primitives, rivals' experimentation
rules, and the feedback structures used by other sellers. These are objects
that are generally not known in the learning environment we study. Moreover, in
the Appendix, we show that even in simple two-seller linear-demand environments
there is no dominant choice of feedback or experimentation intensity: observing another seller or experimenting more frequently can
increase or decrease revenues depending on the market primitives and on the
other seller's behavior. Intuitively, these choices affect revenues through the induced conjecture matrix, and even when larger conjectures raise equilibrium prices, an individual seller's revenue need not be monotone because higher prices can reduce demand  too much. Thus, we believe that it would be quite hard in
practice for a seller to be strategic about these choices. 

In addition, in marketplace settings, if a platform or repricer deliberately used detailed
cross-seller information to influence market prices through feedback or
experimentation design, this would raise separate legal and regulatory
questions. We therefore take these structures as given and study their
implications for equilibrium selection.

\textbf{Equilibrium implications.}
Our main result (Theorem~\ref{thm:main_theorem}) shows that the correlation
structure of experimentation and the feedback structure jointly determine the
conjecture matrix \(A^{\mathcal G,\star}\), and therefore the long-run
equilibrium prices. 

Proposition~\ref{Prop:Price_increase} provides a useful
comparative-static benchmark. Under natural strategic-complementarity conditions
stated there, the extremal CV equilibria are coordinatewise nondecreasing in
the conjecture matrix \(A\). Hence, when the CV equilibrium is unique, larger
nonnegative conjectures lead to higher equilibrium prices. 
This result is especially transparent under bandit feedback. If
\(\mathcal S_i=\emptyset\) for every seller, then
\(A_{ij}^{\mathcal G,\star}
=\operatorname{Cov}^\star(Z_i,Z_j)/\operatorname{Var}^\star(Z_i)\).
Thus, under the conditions of
Proposition~\ref{Prop:Price_increase}, positively correlated experimentation
raises equilibrium prices relative to the Nash benchmark \(A=0\) that is generated by the full feedback case. 

With partial feedback, however, positive experimentation correlation does not
necessarily imply prices above Nash. The entries of \(A^{\mathcal G,\star}\)
are partial projection coefficients, not pairwise correlations. Hence, after
conditioning on the rival prices a seller observes, an omitted rival's
experimentation can be negatively associated with the seller's own
experimentation even when all pairwise correlations are positive. 
Consequently, under partial feedback, positively correlated experimentation can
generate negative conjectures and may lower equilibrium prices relative to the
Nash benchmark. Hence, as feedback expands from bandit feedback toward full
feedback, equilibrium prices need not move monotonically toward Nash: they can
decrease below the Nash level under an intermediate feedback structure and then
increase back toward Nash as feedback becomes full. 

This situation can arise in a market with one technologically sophisticated
major seller. In the Appendix, we consider a feedback structure in which smaller
sellers monitor the major seller but not each other, while the major seller has
the technology to track all sellers. We note that the assumption that the major seller tracks all smaller sellers is not essential
for the negative conjectures among the smaller sellers: if the major seller did
not track them, the conjectures among smaller sellers would remain negative,
while the major seller's own conjectures would be positive in this example. In this case, all experimentation
variables can be positively correlated unconditionally, but after controlling
for the major seller's price, the residual co-movement among smaller sellers is
negative. This illustrates that the sign of the induced conjectures is determined by the
resulting residual correlations, not by the feedback structure alone. The example generates only negative nonzero conjectures; hence, under
the monotone comparative-statics conditions of
Proposition~\ref{Prop:Price_increase}, the induced CV equilibrium prices are
 below the Nash benchmark.

We now present the formal details needed to state Proposition \ref{Prop:Price_increase}. 

Let
$
\mathcal A\subseteq
\{A\in\mathbb R^{n\times n}:A_{ii}=0,\ \forall i\in[n]\}
$ 
be a nonempty compact set of admissible conjecture matrices 
endowed with the coordinatewise partial order $\preceq$:
$
A\preceq A'$ if and only if $ A_{ij}\le A'_{ij}\ \ \forall\,i,j\in[n].
$

Recall that \(\mathcal P:=\prod_{i=1}^n[p_i^l,p_i^h]\subset\mathbb R_+^n\)
is endowed with the coordinatewise product order \(\le\). 
 Let $G:\mathcal P\times\mathcal A\to\mathbb R^n$ denote the marginal-revenue (FOC) mapping, i.e., 
\[
G_i(\mathbf{p};A)
:=
\lambda_i(\mathbf{p})
+
p_i\Big(\partial_{p_i}\lambda_i(\mathbf{p})+\sum_{j\neq i}A_{ij}\,\partial_{p_j}\lambda_i(\mathbf{p})\Big).
\]
Note that $G_{i}$ is continuous for every $i$.

 \begin{proposition} \label{Prop:Price_increase}
Assume that Assumption~\ref{assu:cv-foc-sufficiency} holds for every
\(A\in\mathcal A\).     Assume that for each $i$, each $j\neq i$, and for all $(\mathbf{p},A)\in\mathcal P\times\mathcal A$ we have 
$\partial_{p_j}G_i(\mathbf{p};A)\ \ge\ 0$. In addition, assume that for each $i$, all $\mathbf{p}_{-i}\in\prod_{j\neq i}[p_j^l,p_j^h]$,
and all $A\in\mathcal A$,
 we have\footnote{This condition rules out boundary fixed points so fixed points coincide
with interior solutions to $G(\mathbf{p};A)=0$. This condition can be established in linear and MNL demand models under a relevant set of conjectures.}
$G_i( p_i^l,\mathbf{p}_{-i};A)>0$
 and 
$G_i( p_i^h,\mathbf{p}_{-i};A)<0$. 

Then the lowest and highest (interior) solutions to $G(\mathbf{p};A)=0$, i.e., the extremal CV equilibria, are coordinatewise nondecreasing in $A$. If the CV equilibrium is unique, then $\mathbf{p}(A)$ is coordinatewise nondecreasing in $A$.
 \end{proposition}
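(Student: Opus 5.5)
We will recast the CV equilibrium problem as the fixed-point problem of a monotone best-reply map indexed by $A$, and then apply Tarski's fixed-point theorem together with the standard monotone comparative statics argument (Topkis/Milgrom--Roberts).

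\textbf{Step 1: the best-reply map is well defined and single-valued.} For each $i$, $\mathbf p_{-i}$ and $A\in\mathcal A$, consider the continuous function $p_i\mapsto G_i(p_i,\mathbf p_{-i};A)$ on $[p_i^l,p_i^h]$. The boundary conditions $G_i(p_i^l,\cdot)>0>G_i(p_i^h,\cdot)$ and the intermediate value theorem give at least one interior root. By Assumption~\ref{assu:cv-foc-sufficiency}, every root is a CV$(A)$ best reply. Conversely, any interior maximizer satisfies \eqref{eq:CV-FOC}, and the boundary signs rule out boundary maximizers.

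Define $b_i(\mathbf p_{-i};A)$ as the largest root (and $\underline b_i$ as the smallest). Both exist because the zero set is closed and is bounded away from the endpoints by the sign conditions. Set $b=(b_i)_i$. The fixed points of $b$ then contain all solutions of $G(\mathbf p;A)=0$. More precisely, the solution set of $G(\cdot;A)=0$ equals the set of profiles $\mathbf p$ with $p_i$ a root for each $i$. Its largest element will be the largest fixed point of $b$, and its smallest element the smallest fixed point of $\underline b$.

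\textbf{Step 2: monotonicity of $b$ in $(\mathbf p_{-i},A)$.} This is the key step.
\begin{itemize}
\item Since $\partial_{p_j}G_i\ge0$, the function $G_i(\cdot,\mathbf p_{-i};A)$ shifts up pointwise when $\mathbf p_{-i}$ increases.
\item Since $\partial_{A_{ij}}G_i=p_i\,\partial_{p_j}\lambda_i(\mathbf p)>0$ (positive cross effects and $p_i>0$), it also shifts up when $A$ increases.
\end{itemize}
Now take $(\mathbf p_{-i},A)\le(\mathbf p'_{-i},A')$ and let $x=b_i(\mathbf p_{-i};A)$. Then $G_i(x,\mathbf p'_{-i};A')\ge 0$ and $G_i(p_i^h,\mathbf p'_{-i};A')<0$, so by the intermediate value theorem there is a root in $[x,p_i^h)$. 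Hence $b_i(\mathbf p'_{-i};A')\ge x$. A symmetric argument, using $G_i(p_i^l,\cdot)>0$ and the downward shift, shows $\underline b_i$ is nondecreasing. This avoids any single-crossing or differentiability requirement on the roots themselves.

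\textbf{Step 3: Tarski and comparative statics.} For fixed $A$, $b(\cdot;A)$ is an order-preserving self-map of the complete lattice $\mathcal P$. By Tarski, its fixed-point set is nonempty and has a largest element $\bar{\mathbf p}(A)=\sup\{\mathbf p:\mathbf p\le b(\mathbf p;A)\}$.

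Suppose $A\preceq A'$. Then
\[
\bar{\mathbf p}(A)=b(\bar{\mathbf p}(A);A)\le b(\bar{\mathbf p}(A);A'),
\]
so $\bar{\mathbf p}(A)$ belongs to the set whose supremum defines $\bar{\mathbf p}(A')$, giving $\bar{\mathbf p}(A)\le\bar{\mathbf p}(A')$.

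The same argument applied to $\underline b$, using the infimum characterization $\underline{\mathbf p}(A)=\inf\{\mathbf p:\mathbf p\ge \underline b(\mathbf p;A)\}$, gives monotonicity of the lowest solution. By Step 1, these extremal fixed points are the extremal solutions of $G=0$, which are interior. When the equilibrium is unique the two coincide, which yields the final claim.

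I expect the main obstacle to be Step 1–2's bookkeeping. One has to verify that the largest fixed point of $b$ is really the largest solution of $G=0$, even though $b$ selects only the largest root coordinatewise. The argument is that any solution $\mathbf q$ satisfies $q_i\le b_i(\mathbf q_{-i})$, so $\mathbf q\le b(\mathbf q)$, and therefore $\mathbf q\le\bar{\mathbf p}$ by Tarski's supremum formula.
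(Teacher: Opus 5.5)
Your proof is correct, and it goes by a somewhat different route from the paper's. Both proofs rest on the same key observation: $G_i(p_i,\mathbf p_{-i};A)$ is nondecreasing in $(\mathbf p_{-i},A)$, because $\partial_{p_j}G_i\ge0$ and $G_i$ is affine in $A$ with coefficients $p_i\,\partial_{p_j}\lambda_i\ge0$.

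The endgames differ. The paper forms the projected map $f(\mathbf p,A)=\Pi(\mathbf p+G(\mathbf p;A))$, which is monotone in $(\mathbf p_{-i},A)$ but not in its own coordinate $p_i$. It then cites Milgrom and Roberts' comparing-equilibria theorem, after rescaling $\mathcal P$ to $[0,1]^n$, to get monotone extremal fixed points. Finally it uses the boundary sign conditions to identify those fixed points with interior solutions of $G(\cdot;A)=0$.

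You instead build the largest- and smallest-root selections $b_i,\underline b_i$, prove they are monotone directly by the intermediate value theorem, and apply Tarski with the sup/inf characterizations of the extremal fixed points. This is essentially the mechanism inside the cited theorem, so your route is more elementary and self-contained. It also makes explicit a step the paper leaves to the citation: the extremal fixed points of your root selections are exactly the extremal solutions of $G=0$, via your argument that every solution $\mathbf q$ satisfies $\mathbf q\le b(\mathbf q)$ and $\mathbf q\ge\underline b(\mathbf q)$. The paper's route is shorter, at the cost of relying on an external result whose hypotheses must be matched to the projected map.

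Two small slips. In Step 1 you say the fixed points of $b$ \emph{contain} all solutions of $G(\cdot;A)=0$. The inclusion runs the other way: a fixed point of $b$ is a solution in which every coordinate is the largest root. What you actually need, and correctly prove at the end, is that $\bar{\mathbf p}(A)$ is a solution and dominates every solution. Also, calling the best-reply map ``single-valued'' is a misnomer, since you select extremal roots rather than proving uniqueness.
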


\section{Convergence Rate} \label{sec:converge_rate}

While Theorem~\ref{thm:main_theorem} establishes convergence,
it does not quantify the speed of convergence. We now state a finite-sample
rate for fixed $n$. The rate is governed by the usual bias-variance tradeoff:
the local linearization error is of order $\delta^k$, whereas the statistical
error in the fitted slope is of order
$\sqrt{\log(eI_k)}/(\delta^k\sqrt{I_k})$. Balancing these terms gives
$\delta^k=(\log(eI_k)/I_k)^{1/4}$, which leads to a root-mean-squared price
error of order $\widetilde O(T^{-1/4})$ and hence a mean squared price error of
order $\widetilde O(T^{-1/2})$.

The result applies to the bounded local experimentation and partial-feedback
structures covered by Theorem~\ref{thm:main_theorem}. Thus the
same rate describes convergence to the CV equilibrium selected by correlated
experimentation. It also covers the Nash case, which arises when the induced
conjecture matrix is zero.\footnote{This rate matches the state-of-the-art mean
squared error rate for broader classes of bandit-feedback games, e.g.,
\citet{ba2025doubly}. Their algorithms use mirror descent with randomized
gradient estimators, whereas here the rate is obtained by price experimentation
and repeated least-squares fitting of local linear demand models.}

We now introduce some notation needed to present the next theorem. 
Let $\mathbf p(T)$ denote the vector of prices posted in period $T$ when all
sellers use Algorithm~\ref{alg:partial-feedback-ldl}. As usual,
$f(T)=\widetilde O(g(T))$ means
$f(T)\le Cg(T)\log^c(eT)$ for some constants $C,c<\infty$. 

For the finite-sample result, we need a batch-level version of the population
projection objects that determine the limiting conjecture matrix. Recall that
$\tilde Z_i^{t,k}:=Z_i^t-I_k^{-1}\sum_{s\in\mathcal T_k}Z_i^s$ is seller $i$'s
centered experimentation variable in batch $k$. For seller $i$, write
$\tilde Z_{\mathcal R_i}^{t,k}:=(\tilde Z_r^{t,k}:r\in\mathcal R_i)$ for the
vector of centered experimentation variables corresponding to the prices that
seller $i$ observes and includes in her regression. The coordinates of this
vector are indexed by $\mathcal R_i$; thus, for a vector $\mathbf x$ indexed by
$\mathcal R_i$, $[\mathbf x]_r$ denotes the coordinate associated with seller
$r\in\mathcal R_i$.

Define the empirical covariance matrix of seller $i$'s observed experimentation
variables in batch $k$ by $\widehat\Sigma_i^k:=I_k^{-1}\sum_{t\in\mathcal T_k}
\tilde Z_{\mathcal R_i}^{t,k}(\tilde Z_{\mathcal R_i}^{t,k})^\top$. For an
unobserved rival $\ell\in\mathcal U_i$, define the empirical covariance vector
between seller $\ell$'s omitted experimentation and seller $i$'s observed
experimentation variables by $\widehat{\mathbf{c}}_{i\ell}^k:=I_k^{-1}\sum_{t\in\mathcal
T_k}\tilde Z_{\mathcal R_i}^{t,k}\tilde Z_\ell^{t,k}$. When
$\widehat\Sigma_i^k$ is nonsingular, let
$\pi_{i\ell}^k:=(\widehat\Sigma_i^k)^{-1}\widehat{\mathbf{c}}_{i\ell}^k$; if
$\widehat\Sigma_i^k$ is singular, set $\pi_{i\ell}^k=0$. Thus
$\pi_{i\ell}^k$ is the batch-$k$ empirical linear-projection coefficient from
projecting the omitted rival's experimentation $\tilde Z_\ell^{t,k}$ on the
experimentation variables observed by seller $i$. In particular,
$[\pi_{i\ell}^k]_i$ measures the part of rival $\ell$'s omitted experimentation
that is predicted by seller $i$'s own experimentation, after controlling for the
other prices observed by seller $i$.

For $\Delta \geq 0$, let $\mathcal E_k(\Delta)$ be the event that, in batch $k$, the
observed experimental design is well conditioned and these empirical projection
coefficients are within $\Delta$ of their population limits:
\begin{equation}\label{eq:good_event_experiment}
\mathcal E_k(\Delta)
:=
\left\{
\min_i\lambda_{\min}(\widehat\Sigma_i^k)
\ge
\frac12\min_i\lambda_{\min}(\Sigma_i^\star)
\right\}
\cap
\left\{
\max_{\substack{i,\ r\in\mathcal R_i,\ \ell\in\mathcal U_i}}
\left|
\frac{\delta_\ell^k}{\delta_r^k}[\pi_{i\ell}^k]_r
-
[\pi_{i\ell}^\star]_r
\right|
\le \Delta
\right\},
\end{equation}
where $\lambda_{\min}(B)$ denotes the smallest eigenvalue of  a symmetric matrix $B$.

The first event rules out near-collinearity among the price variations included
in seller $i$'s regression. The second event says that the batch-level
omitted-variable-bias coefficients are close to their limiting values. If the maximum is over an empty set, the second event is
interpreted as the whole sample space.

We also define 
$
\gamma
:=
\max_i
\left\{
1-u_i
+
u_i
\sup_{\mathbf p\in\mathcal P}
\left\|Dz^{(A^{\mathcal G,\star})}(\mathbf p)\right\|_\infty
\right\}
$
which is smaller than $1$ under the conditions of Theorem \ref{thm:main_theorem}. 

The proof of
Theorem~\ref{thm:cv_rate_general} is deferred to the Appendix.

\begin{theorem}[Finite-sample rate]\label{thm:cv_rate_general}
Suppose all assumptions of Theorem~\ref{thm:main_theorem} hold,
and let $\mathbf p^\star:=\mathbf p^{(A^{\mathcal G,\star})}$ be the limiting
fixed point. Suppose, in addition, that there exist deterministic sequences
$\Delta^k=\widetilde O(I_k^{-1/4})$ and
$\tilde r_k=\widetilde O(I_k^{-1/2})$ such that, for all sufficiently large
$k$, $\mathbb P(\mathcal E_k(\Delta^k)^c)\le\tilde r_k$.

Assume common experimentation magnitudes
$\delta_i^k=\delta^k=(\log(eI_k)/I_k)^{1/4}$ for all $i\in[n]$.\footnote{Theorem~\ref{thm:main_theorem} only requires
$\delta_j^k/\delta_i^k\to1$ for all $i,j\in[n]$, so the experimentation
magnitudes vanish at a common asymptotic rate. For the finite-sample rate, we
impose the simpler normalization $\delta_i^k=\delta_j^k$ for all sellers. The
same argument extends to unequal magnitudes if the convergence of
$\delta_j^k/\delta_i^k$ to $1$ is itself controlled at the required finite-batch
rate.}
If $I_k\asymp b^k$ for some $1<b\le\gamma^{-4}$, then, for all $T\ge1$,
\[
\mathbb E\big[\|\mathbf p(T)-\mathbf p^\star\|_\infty^2\big]
=
\widetilde O(T^{-1/2}).
\]
\end{theorem}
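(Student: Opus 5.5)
The argument is a one-step stochastic recursion for the batch error $e_k:=\|\hat{\mathbf p}^k-\mathbf p^\star\|_\infty$. From it we derive a bound on $\mathbb E[e_k^2]$ and then translate batch indices into calendar time $T$.

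\textbf{Step 1 (deterministic contraction).} Write the fitted target as $\hat z_i^{k+1}=z_i^{(A^{\mathcal G,\star})}(\hat{\mathbf p}^k)+\xi_i^k$. The Euclidean projection onto a rectangle is coordinatewise nonexpansive in $\|\cdot\|_\infty$. The feasible sets $\mathcal P^{k+1}$ differ from $\mathcal P$ by $O(\delta^{k+1})$, which adds a term of order $\delta^k$. By the mean value theorem and Assumption~\ref{assu:ldl-contraction},
\[
e_{k+1}\le \gamma\, e_k + \max_i u_i|\xi_i^k| + C\delta^k .
\]

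\textbf{Step 2 (one-batch error $\xi^k$).} Condition on $\mathcal F_{t_{k-1}}$ and work on the good event $\mathcal E_k(\Delta^k)$ intersected with a concentration event for the noise. The OLS coefficients on $\mathbf p_{\mathcal R_i}$ equal the population projection plus a noise term.

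For the projection part, Taylor-expand $\lambda_i$ to second order around $\hat{\mathbf p}^k$. Combined with the omitted-variable formula, this shows that the own-slope satisfies $\hat\beta_i^{k+1}=\beta_i^{(A^{\mathcal G,\star})}(\hat{\mathbf p}^k)+O(\Delta^k+\delta^k)$. Here we use common $\delta^k$, so that $\frac{\delta_\ell^k}{\delta_r^k}[\pi^k_{i\ell}]_r=[\pi^k_{i\ell}]_r$. The intercept combined with the observed-rival terms evaluated at batch averages reproduces $\lambda_i(\hat{\mathbf p}^k)+\hat p_i^k\beta_i$ up to $O(\delta^k+\Delta^k)$. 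The quadratic remainder divided by the regressor scale $\delta^k$ gives the $O(\delta^k)$ bias.

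For the noise part, note that $\varepsilon^t$ is independent of $Z^t$ given the past. The design matrix has eigenvalues of order $(\delta^k)^2$ on the good event. A sub-exponential Bernstein bound, using the moment generating function assumption, then gives slope errors of order $\sqrt{\log(eI_k)}/(\delta^k\sqrt{I_k})$ with probability at least $1-I_k^{-1/2}$.

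Since $\beta\ge\beta_{\min}$ and $\hat\beta>\underline\beta_i$ on this event, the map $(\alpha,\beta)\mapsto\alpha/(2\beta)$ is Lipschitz there. Hence, on the good event,
\[
|\xi_i^k|\lesssim \delta^k+\Delta^k+\frac{\sqrt{\log(eI_k)}}{\delta^k\sqrt{I_k}}=\widetilde O(I_k^{-1/4}).
\]
On the bad event, prices stay in the compact set $\mathcal P$, so $|\xi_i^k|$ is bounded. The bad event has probability at most $\tilde r_k+I_k^{-1/2}=\widetilde O(I_k^{-1/2})$, which contributes $\widetilde O(I_k^{-1/2})$ to $\mathbb E[(\xi^k)^2]$.

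\textbf{Step 3 (solve the recursion).} Square the recursion and use $(a+b)^2\le(1+\eta)a^2+(1+\eta^{-1})b^2$ with $\eta$ chosen so that $(1+\eta)\gamma^2<1$. This yields
\[
\mathbb E[e_{k+1}^2]\le\bar\gamma\,\mathbb E[e_k^2]+\widetilde O(I_k^{-1/2}),\qquad \bar\gamma<1 .
\]
Unroll it with $I_k\asymp b^k$. The forcing term decays geometrically at rate $b^{-1/2}$, and $b\le\gamma^{-4}$ makes this decay no faster than $\gamma^{2k}$ (with a slightly adjusted $\eta$ and log factors absorbed). The sum $\sum_j\bar\gamma^{k-j}b^{-j/2}$ is therefore $\widetilde O(b^{-k/2})$, up to a polynomial factor in $k=O(\log T)$ in the boundary case, and the initial term $\bar\gamma^k e_1^2$ is of the same order. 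Hence $\mathbb E[e_k^2]=\widetilde O(I_k^{-1/2})$.

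\textbf{Step 4 (calendar time).} For $T\in\mathcal T_k$ we have $t_k\asymp I_k\asymp T$ because the batch sizes grow geometrically, and
\[
\|\mathbf p(T)-\mathbf p^\star\|_\infty\le e_k+\max_i B_i\delta^k .
\]
Therefore $\mathbb E\|\mathbf p(T)-\mathbf p^\star\|_\infty^2\lesssim\mathbb E[e_k^2]+(\delta^k)^2=\widetilde O(T^{-1/2})$. Small $T$ is handled by boundedness of $\mathcal P$.

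\textbf{Main obstacle.} The delicate step is Step 2. The noise bound must hold uniformly and conditionally on the past, for sub-exponential noise and a design that may itself depend on history. The regression also has to be converted into the target error with explicit constants, which requires controlling the conditioning of the scaled design, $(\delta^k)^{-2}\widehat G_i^k=\widehat\Sigma_i^k$. I would handle this by rescaling the regressors by $\delta^k$, so that the regression is on $\tilde Z$ and is well conditioned on $\mathcal E_k$, and then applying a conditional Bernstein inequality to $\sum_t\tilde Z^{t,k}_{\mathcal R_i}\varepsilon_i^t$.
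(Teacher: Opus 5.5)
Your overall architecture matches the paper's. You bound the one-step error by $\widetilde O(I_k^{-1/4})$ on a good event, bound the bad-event probability by $\widetilde O(I_k^{-1/2})$, unroll geometrically using $I_k\asymp b^k$ and $b\le\gamma^{-4}$, and convert to calendar time via $\|\mathbf p(T)-\hat{\mathbf p}^k\|_\infty\le\max_iB_i\,\delta^k$. Two steps, however, fail as written.

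\textbf{First issue: $\xi$ is not bounded on the bad event.} $\hat z_i^{k+1}$ is the unprojected target, the demand shocks are unbounded, and $\hat\beta_i^{k+1}$ can be arbitrarily close to $\underline\beta_i\ge0$ (which may be $0$). So $|\xi_i^k|$ is unbounded there, and $\mathbb E[(\xi_i^k)^2]$ need not even be finite. Your squared recursion therefore cannot be integrated as stated. Compactness of $\mathcal P$ bounds the price error, not the target error. Use it directly, as the paper does: write $e_{k+1}\le(\gamma e_k+v_k)\mathbf 1_{G_k}+D_{\mathcal P}\mathbf 1_{G_k^c}$. Here $G_k$ is your good event, $v_k=\widetilde O(I_k^{-1/4})$ is a deterministic bound on the good-event errors, and $D_{\mathcal P}=\|\mathbf p^h-\mathbf p^l\|_\infty$.

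\textbf{Second issue: the boundary case $b=\gamma^{-4}$.} Squaring with a fixed $\eta$ replaces $\gamma^2$ by $\bar\gamma=(1+\eta)\gamma^2$. Then $\sum_{j\le k}\bar\gamma^{k-j}b^{-j/2}=\widetilde O(b^{-k/2})$ requires $\bar\gamma\le b^{-1/2}$, not merely $\bar\gamma<1$. For $b<\gamma^{-4}$ you can pick $\eta$ small enough, but the theorem also allows $b=\gamma^{-4}$. There, any fixed $\eta>0$ leaves $\bar\gamma^k=(1+\eta)^k b^{-k/2}$, and $(1+\eta)^k\asymp I_k^{\log(1+\eta)/\log b}$ is a positive power of $I_k$. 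It is not the ``polynomial factor in $k$'' you claim, and the same excess factor multiplies the initial term.

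There are two ways to repair this:
\begin{itemize}
\item Take $\eta=1/K$ when bounding the error at batch $K$. Then $(1+\eta)^K\le e$, and $1+\eta^{-1}=O(K)=O(\log T)$ is absorbed into the polylog.
\item Follow the paper and do not square at all. Apply Minkowski's inequality to the truncated pathwise recursion to get $y_{k+1}\le\gamma y_k+v_k+D_{\mathcal P}\mathbb P(G_k^c)^{1/2}$ for $y_k:=(\mathbb E e_k^2)^{1/2}$. The forcing term is $\widetilde O(I_k^{-1/4})$, so unrolling gives $\widetilde O(I_K^{-1/4})\sum_{j\le K}(\gamma b^{1/4})^j$. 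Since $\gamma b^{1/4}\le1$, this costs at most a factor $K=O(\log T)$ at the boundary. The initial term is handled by $\gamma^K\le b^{-K/4}\asymp I_K^{-1/4}$, and squaring only at the end yields $\widetilde O(T^{-1/2})$.
\end{itemize}
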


\begin{remark}
\label{rem:finite-batch-stabilization-sufficient}
The finite-batch condition
\(\mathbb P(\mathcal E_k(\Delta^k)^c)\le\tilde r_k\) for all sufficiently large
\(k\) is a standard concentration requirement on the experimentation design and
is not restrictive in general. It is satisfied, for
example, when, within each batch and conditional on the past, the
experimentation vectors $\{Z^t:t\in\mathcal T_k\}$ are i.i.d., bounded,
independent of the demand shocks, and have common first and second moments whose
induced covariance matrices satisfy the positive-definiteness condition in
Assumption~\ref{assu:A-star}.\footnote{Indeed, in this case, because $n$ is
fixed, entrywise Hoeffding bounds for the empirical first and second moments
imply that, with probability at least $1-CI_k^{-2}$,
\[
\max_i\|\widehat\Sigma_i^k-\Sigma_i^\star\|_\infty
+
\max_{\substack{i,\ \ell\in\mathcal U_i}}
\|\widehat{\mathbf c}_{i\ell}^k- \mathbf c_{i\ell}^\star\|_\infty
\le
C\sqrt{\frac{\log(eI_k)}{I_k}} .
\]
Since the dimension is fixed, the same bound also controls
$\|\widehat\Sigma_i^k-\Sigma_i^\star\|_2$ up to a constant. Weyl's inequality
therefore gives
$
\lambda_{\min}(\widehat\Sigma_i^k)
\ge
\lambda_{\min}(\Sigma_i^\star)
-
\|\widehat\Sigma_i^k-\Sigma_i^\star\|_2
\ge
\frac12\min_j\lambda_{\min}(\Sigma_j^\star)
$ 
for all sellers $i$ and all sufficiently large $k$ on this event. Thus the
empirical observed-design matrices are uniformly nonsingular with high
probability.

On the same event, the usual matrix-inverse stability bound gives, uniformly
over $i$ and $\ell\in\mathcal U_i$,
\[
\|(\widehat\Sigma_i^k)^{-1}\widehat{\mathbf c}_{i\ell}^k
-
(\Sigma_i^\star)^{-1}\mathbf c_{i\ell}^\star\|_\infty
\le
C\sqrt{\frac{\log(eI_k)}{I_k}} .
\]
Under the common experimentation magnitudes imposed in
Theorem~\ref{thm:cv_rate_general}, $\delta_\ell^k/\delta_r^k=1$. Hence the
finite-batch stabilization event holds with
$\Delta^k=C\sqrt{\log(eI_k)/I_k}$ and $\tilde r_k=CI_k^{-2}$ for suitable
constants $C<\infty$. }
\end{remark}

\begin{remark}
The rate in Theorem~\ref{thm:cv_rate_general} is a fixed-market rate. That is, the number of sellers, the feedback structure, the demand system, and
the limiting experimentation covariance objects are held fixed. Thus the theorem describes
the dependence on the time horizon \(T\), not a uniform rate over \(n\).
An \(n\)-dependent rate would require specifying a sequence of markets indexed
by \(n\), including how the feedback, demand
curvature, covariance matrices, conditioning constants, and contraction modulus
scale with \(n\). Without such additional structure, there is no intrinsic
dependence on \(n\). 
\end{remark}

\section{Demand Function Examples}\label{sec:examples}

In this section we provide examples and conditions where
Assumption~\ref{assu:ldl-contraction}, the contraction condition needed for
Theorem~\ref{thm:main_theorem}, holds. We analyze two standard demand
specifications: Linear and Multinomial Logit (MNL). The discussion is useful
both for the Nash case, which arises when $A^{\mathcal G,\star}=0$, and for the
general case with nonzero conjectures. The latter can arise when sellers have
partial feedback and unobserved rivals' experimentation is correlated with the
price variation included in the regression.

Throughout this section, write $A=A^{\mathcal G,\star}$ and
$\beta_i(\mathbf p)=\beta_i^{(A)}(\mathbf p)$. We have 
\begin{equation}\label{eq:z-simpler-general-A}
z_i^{(A)}(\mathbf p)
=
\frac12 p_i+\frac12\frac{\lambda_i(\mathbf p)}{\beta_i(\mathbf p)} .
\end{equation}
Differentiating \eqref{eq:z-simpler-general-A} and using the quotient rule gives
\begin{align}
\frac{\partial z_i^{(A)}}{\partial p_j}(\mathbf p)
&=
\frac12\mathbf 1_{\{i=j\}}
+
\frac12\frac{\partial}{\partial p_j}
\left(\frac{\lambda_i}{\beta_i}\right)(\mathbf p) \nonumber\\
&=
\frac12\mathbf 1_{\{i=j\}}
+
\frac12
\frac{
\beta_i(\mathbf p)\partial_{p_j}\lambda_i(\mathbf p)
-
\lambda_i(\mathbf p)\partial_{p_j}\beta_i(\mathbf p)
}{
\beta_i(\mathbf p)^2
}.
\label{eq:dz-general-A}
\end{align}
Split the second term into a ``frozen $\beta$'' first-derivative part and a
curvature part:
\begin{equation}\label{eq:split-general-A}
\frac{
\beta_i(\mathbf p)\partial_{p_j}\lambda_i(\mathbf p)
-
\lambda_i(\mathbf p)\partial_{p_j}\beta_i(\mathbf p)
}{
\beta_i(\mathbf p)^2
}
=
\underbrace{\frac{\partial_{p_j}\lambda_i(\mathbf p)}{\beta_i(\mathbf p)}}_{\text{competition and conjectural first-order effects}}
-
\underbrace{
\frac{\lambda_i(\mathbf p)}{\beta_i(\mathbf p)^2}
\partial_{p_j}\beta_i(\mathbf p)
}_{\text{curvature and slope-rotation effects}} .
\end{equation}
Define $L^{\mathrm{comp},A}(\mathbf p)$ and
$L^{\mathrm{curv},A}(\mathbf p)$ entrywise by
\begin{equation}\label{eq:Lcomp-Lcurv-general-A}
L^{\mathrm{comp},A}_{ij}(\mathbf p)
:=
\frac12\mathbf 1_{\{i=j\}}
+
\frac12\frac{\partial_{p_j}\lambda_i(\mathbf p)}{\beta_i(\mathbf p)},
\qquad
L^{\mathrm{curv},A}_{ij}(\mathbf p)
:=
-\frac12
\frac{\lambda_i(\mathbf p)}{\beta_i(\mathbf p)^2}
\partial_{p_j}\beta_i(\mathbf p).
\end{equation}
Then \eqref{eq:dz-general-A}--\eqref{eq:Lcomp-Lcurv-general-A} yield the exact
decomposition
\begin{equation}\label{eq:Jac-decomp-general-A}
Dz^{(A)}(\mathbf p)
=
L^{\mathrm{comp},A}(\mathbf p)+L^{\mathrm{curv},A}(\mathbf p).
\end{equation}

This decomposition characterizes the two sources of sensitivity in the limiting
LDL update. The matrix $L^{\mathrm{comp},A}$ captures how the fitted target
moves when the demand level changes but the learned own-price slope is held
fixed. Its off-diagonal entries reflect cross-price effects, and hence the
strategic interaction among sellers. In the Nash case $A=0$, the diagonal
entries of $L^{\mathrm{comp},A}$ are zero, and the off-diagonal entries reduce
to the familiar cross-effect terms. Under full feedback, these off-diagonal
terms are not omitted-variable bias: rivals' prices are observed and controlled
for directly, but the fitted target can still move with rivals' prices. Under
partial feedback, nonzero entries of $A$ additionally reflect the conjectural
slope created by correlated experimentation with unobserved rivals. The matrix
$L^{\mathrm{curv},A}$ captures the contribution of demand curvature: its entries
depend on $\partial_{p_j}\beta_i$, the rate at which the learned own-price slope
changes with the price vector. In a linear demand model, this curvature term
vanishes entirely.\footnote{
More precisely, the diagonal entries $L^{\mathrm{curv},A}_{ii}$ capture the
relative concavity of each seller's demand, analogous to the $\alpha$-convexity
metrics used to quantify the degree of convexity \cite{light2021family}. The
off-diagonal entries $L^{\mathrm{curv},A}_{ij}$ depend on mixed partial
derivatives and on the conjecture row $A_{i\cdot}$ through
$\partial_{p_j}\beta_i$. While these terms clearly involve competitors, we
classify them as ``curvature'' effects because they quantify how rival prices
rotate the relevant demand slope, whereas $L^{\mathrm{comp},A}$ captures how
they shift the demand level holding the slope fixed. Thus, the off-diagonal
curvature error is small when rival price changes shift demand without
significantly altering the learned own-price slope.
}

This decomposition explicitly characterizes why Assumption~\ref{assu:ldl-contraction}
is a stability condition. The fitted linear model is local and possibly
misspecified. Stability requires that the first-order competitive effects and
the curvature-induced changes in the learned slope do not make the fitted target
too sensitive to the current price vector. We now show that this condition has
simple forms in standard models.

\noindent\textbf{Linear Demand Model}: Under linear demands,
\begin{equation}\label{eq:linear_model}
\lambda_i(\mathbf p)=a_i-b_{ii}p_i+\sum_{j\ne i}b_{ij}p_j,
\qquad
b_{ii}>0,\quad b_{ij}\ge0,\quad j\ne i .
\end{equation}
For this model, Assumption~\ref{assu:cv-foc-sufficiency} holds whenever the
learned own-price slope is positive (see Lemma~\ref{lm:best_reply_satisfying_models}
in the Appendix). Moreover, $\beta_i^{(A)}=b_{ii}-\sum_{j\ne i}A_{ij}b_{ij}$ is
constant in $\mathbf p$, so $L^{\mathrm{curv},A}=0$. Hence there is no curvature
error. Lemma~\ref{lem:jacobian-z-linear-general} in the Appendix shows that, for
any conjecture matrix $A$ with $\beta_i^{(A)}>0$,
\[
\|Dz^{(A)}\|_\infty
=
\max_i
\frac{
\left|\sum_{j\ne i}A_{ij}b_{ij}\right|+\sum_{j\ne i}b_{ij}
}{
2\left(b_{ii}-\sum_{j\ne i}A_{ij}b_{ij}\right)
}.
\]
In the nonnegative conjectures case, the contraction condition
becomes
\[
\sum_{j\ne i}(1+3A_{ij})b_{ij}<2b_{ii},
\qquad i\in[n].
\]
For $A=0$, this reduces to
$\frac12\sum_{j\ne i}b_{ij}<b_{ii}$ for every seller $i$ which is a familiar diagonal dominance condition used to prove the stability of Nash equilibrium in linear demand models. Thus convergence is
ensured when own-price sensitivity dominates the aggregate cross-price effects.

\noindent\textbf{MNL Demand Model}: Consider the multinomial logit demand
\begin{equation}\label{eq:mnl_linear}
\lambda_i(\mathbf p)
=
\frac{\exp(a_i-b_ip_i)}
{1+\sum_{j=1}^n\exp(a_j-b_jp_j)},
\qquad i\in[n].
\end{equation}
This model satisfies Assumption~\ref{assu:cv-foc-sufficiency} for any conjecture
matrix $A$ (see Lemma~\ref{lm:best_reply_satisfying_models} in the Appendix).
For MNL demand, the learned own-price slope under conjecture $A$ is
$\beta_i^{(A)}(\mathbf p)=\lambda_i(\mathbf p)
\{b_i(1-\lambda_i(\mathbf p))-\sum_{j\ne i}A_{ij}b_j\lambda_j(\mathbf p)\}$.
Lemma~\ref{lem:mnl_contraction} gives exact Jacobian formulas for arbitrary
$A$ whenever this learned own-price slope is positive. These formulas can be
checked directly for any given set of parameters and feedback-induced
conjectures.

A simple sufficient condition is especially transparent in the Nash case
$A=0$. If price sensitivities are symmetric, $b_j=b$ for all $j$, then
Lemma~\ref{lem:mnl_contraction} shows that
$\sup_{\mathbf p\in\mathcal P}\lambda_i(\mathbf p)<3/5$ for every seller $i$
implies $\sup_{\mathbf p\in\mathcal P}\|Dz^{(0)}(\mathbf p)\|_\infty<1$.
More generally, even with heterogeneous price sensitivities and nonzero
conjectures, maintaining sufficiently small market shares is the key force behind contraction. Intuitively, as a seller's
market share grows toward dominance, cross-price and curvature effects become
large relative to the own-price slope, making the fitted target more sensitive
to competitors' prices and potentially destabilizing the learning dynamics.

\section{Beyond Convergence}
\label{sec:beyond-contraction}

In this section, we discuss what happens to the learning dynamics if the global
stability condition (Assumption~\ref{assu:ldl-contraction}) fails. Contraction
provides a clean sufficient condition for convergence, and such conditions are
typically used in games to ensure that learning dynamics converge. We show,
however, that the failure of this condition does not by itself mean that the
learning algorithm becomes economically meaningless. Instead, the algorithm
remains connected to the key conjecture matrix determined by the experimentation
and information structure across sellers, even though the learning procedure is
misspecified in two ways: sellers may ignore the prices of unobserved
competitors, and they fit a local linear demand model to a nonlinear,
multi-agent demand system.

In particular, we show that the direction of the algorithm's unprojected price adjustment is componentwise aligned with the
true CV marginal revenue under the conjecture
matrix selected by the experimentation and feedback structure.

Let $z^{(A)}=(z_1^{(A)},\ldots,z_n^{(A)})$, let
$U:=\operatorname{diag}(u_i)$, and define 
\begin{equation}\label{eq:Operator_F_A_Star}
F^{(A)}(\mathbf p)
:=
\mathrm{proj}_{\mathcal P}\left((I-U)\mathbf p+Uz^{(A)}(\mathbf p)\right).
\end{equation}
Thus $F^{(A)}$ is the deterministic update map obtained by replacing the fitted
OLS coefficients in the LDL algorithm with their population limits.

Let $A^\star:=A^{\mathcal G,\star}$ and define seller $i$'s true CV marginal revenue under the induced conjecture
row $A^\star_{i\cdot}$ as
\begin{equation}
\label{eq:cv_marginal_revenue}
\mathcal M_i^{(A^\star)}(\mathbf p)
:=
\nabla r_i(\mathbf p)^\top v_i(A^\star_{i\cdot})
=
\lambda_i(\mathbf p)
+
p_i
\left(
\partial_{p_i}\lambda_i(\mathbf p)
+
\sum_{j\ne i}A_{ij}^\star
\partial_{p_j}\lambda_i(\mathbf p)
\right).
\end{equation}
Here $v_i(A^\star_{i\cdot})=e_i+\sum_{j\ne i}A_{ij}^\star e_j$ is the
conjectured local direction introduced in Section~\ref{Sec:CV}. As we discussed in Section \ref{sec:convergence}, 
the omitted-variable bias generated by correlated experimentation makes seller
$i$ behave as if a change in her own price is evaluated along this direction.

Hence, using the definitions from Section~\ref{sec:convergence},
$\beta_i^{(A^\star)}(\mathbf p)$ is the own-price slope that seller $i$ learns
in the population limit under the conjecture matrix $A^\star$, and
$z_i^{(A^\star)}(\mathbf p)$ is the corresponding revenue-maximizing target
price under the fitted local linear demand model,  we have 
\begin{align}
z_i^{(A^\star)}(\mathbf p)-p_i
&=
\frac{
\lambda_i(\mathbf p)+p_i\beta_i^{(A^\star)}(\mathbf p)
-2p_i\beta_i^{(A^\star)}(\mathbf p)
}
{2\beta_i^{(A^\star)}(\mathbf p)}
\nonumber\\
&=
\frac{
\lambda_i(\mathbf p)-p_i\beta_i^{(A^\star)}(\mathbf p)
}
{2\beta_i^{(A^\star)}(\mathbf p)}
=
\frac{1}{2\beta_i^{(A^\star)}(\mathbf p)}
\mathcal M_i^{(A^\star)}(\mathbf p).
\label{eq:fundamental_identity}
\end{align}

Equation~\eqref{eq:fundamental_identity} is the key link between the statistical
learning rule and the induced (limiting) CV game. The algorithm moves the focal price
toward $z_i^{(A^\star)}(\mathbf p)$. Hence, before projection and statistical
error, the deterministic update satisfies 
\[
p_i^{\mathrm{new}}-p_i
=
u_i\left(z_i^{(A^\star)}(\mathbf p)-p_i\right)
=
\frac{u_i}{2\beta_i^{(A^\star)}(\mathbf p)}
\mathcal M_i^{(A^\star)}(\mathbf p).
\]
Assumption~\ref{assu:A-star} guarantees
\(\beta_i^{(A^\star)}(\mathbf p)\ge \beta_{\min}>0\) on \(\mathcal P\), so the multiplier
$u_i/(2\beta_i^{(A^\star)}(\mathbf p))$ is strictly positive. Thus, seller $i$'s
price increases exactly when her true CV marginal revenue is positive and decreases
exactly when her true CV marginal revenue is negative. 

This is true even though sellers neither know the conjecture matrix $A^\star$
nor explicitly optimize the CV objective, and even though they fit a
misspecified local linear demand model. The observation shows why the algorithm
remains meaningful beyond the contraction regime: asymptotically,
the sign of seller $i$'s price adjustment coincides with the sign of her true
CV marginal revenue under $A^\star$, while the magnitude of the adjustment is
scaled by the learned own-price slope 
$\beta_i^{(A^\star)}$.

In addition, using \eqref{eq:fundamental_identity}, the update map
$F^{(A^\star)}$ from \eqref{eq:Operator_F_A_Star} can be written as
\begin{equation}
\label{eq:operator_gradient_identity}
F^{(A^\star)}(\mathbf p)
=
\mathrm{proj}_{\mathcal P}
\left(
\mathbf p
+
U\mathcal D^{(A^\star)}(\mathbf p)
\mathcal M^{(A^\star)}(\mathbf p)
\right).
\end{equation}
where
$\mathcal M^{(A^\star)}(\mathbf p):=(\mathcal M_i^{(A^\star)}(\mathbf p))_{i\in[n]}$
and 
\[
\mathcal D^{(A^\star)}(\mathbf p)
:=
\operatorname{diag}
\left(
\frac{1}{2\beta_i^{(A^\star)}(\mathbf p)}
\right)_{i\in[n]} .
\]
Thus $F^{(A^\star)}$ is the projected version of a seller-specific rescaling of
the true CV marginal-revenue adjustment.
This connects the dynamics directly to the primitives emphasized in the
paper. The feedback structure and the correlation structure of
experimentation determine the conjecture matrix $A^\star$. The conjecture
matrix $A^\star$ determines the CV marginal-revenue vector
$\mathcal M^{(A^\star)}$. The algorithm then applies a positive diagonal rescaling of the CV marginal-revenue vector. Consequently, correlated experimentation
and partial feedback remain economically meaningful even when global convergence
is not guaranteed: they determine not only the CV equilibria that may be reached
under contraction, but also the CV adjustment dynamics tracked by the algorithm
outside the contraction regime.

The same identity also clarifies the equilibrium interpretation. If
$\mathbf p^\star\in\mathrm{int}(\mathcal P)$ is a fixed point of
$F^{(A^\star)}$, then the projection is inactive and
$z^{(A^\star)}(\mathbf p^\star)=\mathbf p^\star$. By
\eqref{eq:fundamental_identity}, this is equivalent to
$\mathcal M_i^{(A^\star)}(\mathbf p^\star)=0$ for every seller $i$, which is
exactly the CV first-order system. Under
Assumption~\ref{assu:cv-foc-sufficiency}, such an interior fixed point is a
CV$(A^\star)$ equilibrium.

We now formalize the tracking property. We show that the stochastic
focal-price sequence is an asymptotic pseudo-orbit of
$F^{(A^\star)}$. This means that the one-step statistical approximation error
vanishes almost surely. Therefore, if prices do not converge to a single point,
their long-run behavior is still governed by the deterministic CV adjustment map
in \eqref{eq:operator_gradient_identity}, rather than by persistent estimation
noise.
\begin{theorem}
\label{thm:apt_tracking}
Suppose Assumption~\ref{assu:A-star} holds, but
Assumption~\ref{assu:ldl-contraction} does not necessarily hold. Consider the
update rule in Algorithm~\ref{alg:partial-feedback-ldl}. 
Assume common experimentation magnitudes
$\delta_i^k=\delta^k=(\log(eI_k)/I_k)^{1/4},\,\forall i\in[n]$. Suppose there
exist deterministic sequences $\Delta^k=\widetilde O(I_k^{-1/4})$ and
$\tilde r_k=\widetilde O(I_k^{-1/2})$ such that, for all sufficiently large
$k$, the finite-batch stabilization event satisfies
$\mathbb P(\mathcal E_k(\Delta^k)^c)\le \tilde r_k$, where
$\mathcal E_k(\Delta^k)$ is defined in Section~\ref{sec:converge_rate}.
Finally, suppose $I_k\asymp b^k$ for some $b>1$.

Then, almost surely, the stochastic focal-price sequence
$\{\hat{\mathbf p}^k\}_{k\ge1}$ forms an asymptotic pseudo-orbit of the
discrete-time dynamical system induced by $F^{(A^{\mathcal G,\star})}$. That is,\footnote{Because $F^{(A^{\mathcal G,\star})}$ is continuous, maps the
compact set $\mathcal P$ into itself, and the focal prices lie in
$\mathcal P$, the sequence is precompact. Hence the standard asymptotic
pseudo-orbit result for discrete-time maps applies. By \citet[Definition 2.3
and Lemma 2.3]{hirsch2001chain}, the omega-limit set of
$\{\hat{\mathbf p}^k\}$ is nonempty, compact, invariant, and internally chain
transitive for $F^{(A^{\mathcal G,\star})}$ on the almost-sure event on which
the sequence is an asymptotic pseudo-orbit. In words, any nonconvergent
long-run behavior must be recurrent behavior of the deterministic CV adjustment
map. }
\begin{equation}
\label{eq:apt_limit}
\lim_{k\to\infty}
\left\|
\hat{\mathbf p}^{k+1}
-
F^{(A^{\mathcal G,\star})}(\hat{\mathbf p}^k)
\right\|_\infty
=
0
\qquad\text{a.s.}
\end{equation}
\end{theorem}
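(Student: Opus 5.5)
The plan is to compare one step of the stochastic recursion with one step of the deterministic map, and to use Borel--Cantelli to upgrade a probability bound on the per-batch error to an almost-sure statement. Write $w_i^k:=(1-u_i)\hat p_i^k+u_i\hat z_i^{k+1}$ and $\bar w_i^k:=(1-u_i)\hat p_i^k+u_iz_i^{(A^\star)}(\hat{\mathbf p}^k)$, where $A^\star=A^{\mathcal G,\star}$. Projections onto intervals are $1$-Lipschitz. However, $\hat p_i^{k+1}$ is projected onto $\mathcal P_i^{k+1}$, whereas $F^{(A^\star)}$ projects onto $\mathcal P_i$. These two intervals differ by $B_i\delta^{k+1}\to0$ at each endpoint, so
\[
|\hat p_i^{k+1}-F_i^{(A^\star)}(\hat{\mathbf p}^k)|\le u_i|\hat z_i^{k+1}-z_i^{(A^\star)}(\hat{\mathbf p}^k)|+B_i\delta^{k+1}.
\]
It therefore suffices to show that $\max_i|\hat z_i^{k+1}-z_i^{(A^\star)}(\hat{\mathbf p}^k)|\to0$ almost surely. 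None of this uses Assumption~\ref{assu:ldl-contraction}; contraction only mattered for iterating the errors, which we do not need here.

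The core step is a one-batch estimation bound, which I would take from the same per-batch analysis used for Theorems~\ref{thm:main_theorem} and~\ref{thm:cv_rate_general}; its content is the OLS decomposition behind \eqref{eq:conditional-mean-local-expansion}. Condition on $\mathcal F_{t_{k-1}}$ and work on $\mathcal E_k(\Delta^k)$. Taylor-expand $\lambda_i(\hat{\mathbf p}^k+\delta^kZ^t)$ to first order, with a remainder of size $O((\delta^k)^2)$ that is uniform because $\lambda_i\in C^2$ on compact $\mathcal P$. Decompose the unobserved centered regressors $\tilde Z_\ell^{t,k}$ into their projection $\pi_{i\ell}^k$ on $\tilde Z_{\mathcal R_i}^{t,k}$ plus an orthogonal residual. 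Then the OLS slope vector, scaled back through $p_r^t=\hat p_r^k+\delta^kZ_r^t$, equals
\[
\Big(\partial_{p_r}\lambda_i+\sum_{\ell\in\mathcal U_i}[\pi_{i\ell}^k]_r\partial_{p_\ell}\lambda_i\Big)_{r\in\mathcal R_i}
\]
plus three errors. The first is $O(\delta^k)$ from curvature, using the eigenvalue lower bound in $\mathcal E_k$. The second is $O(\Delta^k)$ from replacing $\pi^k$ by $\pi^\star$. The third is the noise term $(\delta^k)^{-1}(\widehat\Sigma_i^k)^{-1}I_k^{-1}\sum_t\tilde Z^{t,k}_{\mathcal R_i}\varepsilon_i^t$. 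For this last term, conditional independence of $Z$ from the shocks, boundedness of $Z$, and the sub-exponential tails of $\varepsilon$ give a Bernstein bound. With probability at least $1-CI_k^{-2}$ it is at most $C\sqrt{\log(eI_k)/I_k}/\delta^k=C(\log(eI_k)/I_k)^{1/4}$.

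The intercept, evaluated at the batch means, reproduces $\lambda_i(\hat{\mathbf p}^k)$ up to the same orders. The fitted slope therefore satisfies $\hat\beta_i^{k+1}\ge\beta_{\min}-o(1)>\underline\beta_i$, because $\underline\beta_i<\beta_{\min}/2$. The ratio map $(\alpha,\theta,\beta)\mapsto(\alpha+\sum_j\theta_j\bar p_j)/(2\beta)$ is Lipschitz on $\{\beta\ge\beta_{\min}/2\}$ over the bounded region, and $\bar{\mathbf p}^k-\hat{\mathbf p}^k=O(\delta^k)$. Together these give
\[
|\hat z_i^{k+1}-z_i^{(A^\star)}(\hat{\mathbf p}^k)|\le \eta_k:=C\big(\delta^k+\Delta^k+(\log(eI_k)/I_k)^{1/4}\big)\to0
\]
outside an event of probability at most $\tilde r_k+CI_k^{-2}$, with constants uniform in the history.

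For the almost-sure conclusion, note that $I_k\asymp b^k$ with $b>1$, so $\tilde r_k=\widetilde O(b^{-k/2})$ and $I_k^{-2}$ are both summable in $k$. By Borel--Cantelli, almost surely the bad events occur only finitely often. On the bad event the algorithm's fallback conventions (singular design, or small fitted slope) still keep $\hat{\mathbf p}^{k+1}\in\mathcal P$, so they cannot spoil the conclusion. Hence $\|\hat{\mathbf p}^{k+1}-F^{(A^\star)}(\hat{\mathbf p}^k)\|_\infty\le\max_iu_i\eta_k+\max_iB_i\delta^{k+1}\to0$ almost surely, which is \eqref{eq:apt_limit}.

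The main obstacle is making the one-batch bound uniform over the random focal price and sharp enough in probability to be summable. Concretely, this means controlling the noise term divided by $\delta^k$ with a tail of order $I_k^{-2}$, uniformly in $\hat{\mathbf p}^k$. I would handle it by conditioning on $\mathcal F_{t_{k-1}}$, where the Bernstein constants do not depend on the focal price, and then integrating out.
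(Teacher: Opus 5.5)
Your proposal is correct and follows essentially the same route as the paper, which reuses the good-event one-step bound \eqref{eq:partial-one-step-to-F} from the proof of Theorem~\ref{thm:main_theorem}. That bound is derived from exactly the ingredients you sketch: the OLS/Taylor/omitted-projection decomposition, noise concentration, the Lipschitz target map, and the $B_{\max}\delta^{k+1}$ gap between $\mathcal P^{k+1}$ and $\mathcal P$. The paper then observes that contraction plays no role in that bound and applies Borel--Cantelli to the summable bound $\mathbb P(\mathcal A_k^c)\le\tilde r_k+CI_k^{-2}$. One small precision: the noise concentration should be done conditional on $\mathcal F_{t_{k-1}}\vee\sigma(Z^t:t\in\mathcal T_k)$ rather than only $\mathcal F_{t_{k-1}}$, because the weights $\tilde Z_r^{t,k}$ depend on the batch draws; your appeal to conditional independence of $Z$ and the shocks already supplies this.
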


Theorem~\ref{thm:apt_tracking} formalizes the interpretation of the
algorithm beyond the contraction regime. Without contraction, the deterministic
CV adjustment map may have multiple equilibria, periodic cycles, or more general
recurrent invariant sets. The theorem says that the stochastic learning process
tracks this deterministic adjustment map asymptotically.

In particular, away from binding projection constraints, the tracking statement
and \eqref{eq:operator_gradient_identity} imply that, for large $k$,
\[
\hat p_i^{k+1}-\hat p_i^k
=
\frac{u_i}{2\beta_i^{(A^\star)}(\hat{\mathbf p}^k)}
\mathcal M_i^{(A^\star)}(\hat{\mathbf p}^k)
+
o(1)
\qquad \text{a.s.}
\]
Thus the algorithmic price adjustments follow the marginal-revenue adjustment
dynamics of the true demand system under the conjectures induced by the
correlation structure of experimentation and the feedback structure.

\section{Conclusion}

This paper studies competitive dynamic pricing when multiple sellers learn demand using partial feedback, running local price experiments and repeatedly refitting a misspecified linear demand model using only the prices they observe. Our main contribution is to show that the long-run outcome of such practical experimentation and learning is pinned down by a systematic learning bias induced by the interaction between the market's feedback structure and the correlation structure of experimentation. In particular, Theorem~\ref{thm:main_theorem} establishes that, under certain conditions on demand, the dynamics converge to a CV$(A^{\mathcal{G},\star})$ equilibrium, where the conjecture matrix $A^{\mathcal{G},\star}$ is endogenously determined by which rivals a seller monitors and how their respective experimentation is correlated. Here, CV$(A^{\mathcal{G},\star})$ denotes a conjectural variations equilibrium in which each seller best responds under an implicit local conjecture about how rivals'
prices co-move with its own. 
This result provides a learning foundation for conjectural variations that does not posit conjectures as behavioral primitives: sellers behave \emph{as if} unobserved rivals co-move with their own prices because their demand estimates are biased. 

Complementing our results, Theorem~\ref{thm:cv_rate_general} provides a finite-sample guarantee, showing that with appropriate scaling of parameters, the mean squared price error decays on the order of $\widetilde{O}(T^{-1/2})$. In addition, the comparative statics in Proposition~\ref{Prop:Price_increase} show that under strategic complementarity, higher induced conjectures lead to higher prices. 

Several directions are promising for future work. First, it would be valuable to move beyond locally linear fitted demand models to analyze learning dynamics under non-linear learning algorithms. As discussed in Section~\ref{sec:intuition-beyond-ldl}, we expect that other learning rules that estimate marginal demand from observed data  will inherit a similar conjectural component. Second, while our paper is theoretical in nature, it would be interesting to conduct empirical work that maps actual market
feedback structures and experimentation correlations, and quantifies how these
objects relate to observed price dynamics and market outcomes.

\bibliographystyle{apalike}
\bibliography{bib_arxivVersion.bib}

\newpage
\appendix

\section{Proofs of Theorem~\ref{thm:main_theorem}, Theorem \ref{thm:cv_rate_general}, Theorem \ref{thm:apt_tracking}, and Corollary \ref{corr:nash}}
\label{sec:proof-thm-partial-feedback-convergence}
\textbf{Notation.}
Write $A^\star:=A^{\mathcal G,\star}$,
$B_{\max}:=\max_i B_i$, and
$D_{\mathcal P}:=\|\mathbf p^h-\mathbf p^l\|_\infty$. All constants below are
deterministic, finite, independent of $k$, $K$, and $T$, and may change from
line to line. For any finite index set $J\subseteq[n]$, vectors indexed by $J$
are written in increasing numerical order. For a vector $\mathbf x_J=(x_j:j\in J)$ indexed by a finite set $J\subseteq[n]$, $[\mathbf x]_j$
denotes the coordinate associated with index $j\in J$; similarly, $e_j$ denotes
the corresponding coordinate vector in $\mathbb R^{|J|}$. 

For each batch $k$, let $\mathcal T_k=\{t_{k-1}+1,\ldots,t_k\}$,
$\bar p_j^k:=I_k^{-1}\sum_{t\in\mathcal T_k}p_j^t$, and for any \(J\subseteq[n]\), let
\(\bar{\mathbf p}_J^k:=(\bar p_j^k:j\in J)\),  
$\bar D_i^k:=I_k^{-1}\sum_{t\in\mathcal T_k}D_i^t$,
$\bar Z_i^k:=I_k^{-1}\sum_{t\in\mathcal T_k}Z_i^t$, and
$\tilde Z_i^{t,k}:=Z_i^t-\bar Z_i^k$. For seller $i$, set
$\tilde Z_{\mathcal R_i}^{t,k}:=(\tilde Z_r^{t,k})_{r\in\mathcal R_i}$ and
define
\begin{equation}\label{eq:partial-empirical-covariances}
\widehat\Sigma_i^k
:=
I_k^{-1}\sum_{t\in\mathcal T_k}
\tilde Z_{\mathcal R_i}^{t,k}
(\tilde Z_{\mathcal R_i}^{t,k})^\top,
\qquad
\widehat{\mathbf{c}}_{i\ell}^k
:=
I_k^{-1}\sum_{t\in\mathcal T_k}
\tilde Z_{\mathcal R_i}^{t,k}\tilde Z_\ell^{t,k},
\quad \ell\in\mathcal U_i .
\end{equation}
Let $\Gamma_i^k:=\operatorname{diag}(\delta_r^k:r\in\mathcal R_i)$. Since
$\mathbf p_{\mathcal R_i}^t-\bar{\mathbf p}_{\mathcal R_i}^k
=\Gamma_i^k\tilde Z_{\mathcal R_i}^{t,k}$, the centered OLS Gram matrix
satisfies $\widehat G_i^k=\Gamma_i^k\widehat\Sigma_i^k\Gamma_i^k$. Notice that $\delta_i^k>0,\,\forall i\in[n],\, k\ge 1$. Hence
$\widehat G_i^k$ is nonsingular if and only if $\widehat\Sigma_i^k$ is
nonsingular.

Whenever $\widehat\Sigma_i^k$ is nonsingular, set
$\pi_{i\ell}^k:=(\widehat\Sigma_i^k)^{-1}\widehat{\mathbf{c}}_{i\ell}^k$ for
$\ell\in\mathcal U_i$; otherwise set $\pi_{i\ell}^k=0$. This convention makes
$\pi_{i\ell}^k$ measurable on every sample path. Its deterministic limit is
$\pi_{i\ell}^\star:=(\Sigma_i^\star)^{-1}\mathbf c_{i\ell}^\star$. Also set
$s_{\min}:=\min_i\lambda_{\min}(\Sigma_i^\star)>0$.

For $r\in\mathcal R_i$, define
\[
\psi_{ir}^{\mathcal G,\star}(\mathbf p)
:=
\partial_{p_r}\lambda_i(\mathbf p)
+
\sum_{\ell\in\mathcal U_i}
[\pi_{i\ell}^\star]_r\,\partial_{p_\ell}\lambda_i(\mathbf p).
\]
Set
\[
\beta_i^{\mathcal G,\star}(\mathbf p):=
-\psi_{ii}^{\mathcal G,\star}(\mathbf p),
\qquad
\theta_{ij}^{\mathcal G,\star}(\mathbf p):=
\psi_{ij}^{\mathcal G,\star}(\mathbf p),\quad j\in\mathcal S_i,
\]
and
\[
\alpha_i^{\mathcal G,\star}(\mathbf p)
:=
\lambda_i(\mathbf p)
-
\sum_{r\in\mathcal R_i}\psi_{ir}^{\mathcal G,\star}(\mathbf p)p_r
=
\lambda_i(\mathbf p)
+
\beta_i^{\mathcal G,\star}(\mathbf p)p_i
-
\sum_{j\in\mathcal S_i}
\theta_{ij}^{\mathcal G,\star}(\mathbf p)p_j .
\]
For $\mathbf p_{\mathcal S_i}=(p_j:j\in\mathcal S_i)$, write
$h_i^{\mathcal S_i}(\alpha_i,\beta_i,\theta_i; \mathbf p_{\mathcal S_i})
:=(\alpha_i+\sum_{j\in\mathcal S_i}\theta_{ij}p_j)/(2\beta_i)$. By
\eqref{eq:A-star-partial-feedback} and \eqref{eq:beta_alpha},
$\beta_i^{\mathcal G,\star}(\mathbf p)=\beta_i^{(A^\star)}(\mathbf p)$ and,
for every $\mathbf p\in\mathcal P$,
\begin{equation}\label{eq:shared-target-identity}
h_i^{\mathcal S_i}
(\alpha_i^{\mathcal G,\star}(\mathbf p),
\beta_i^{\mathcal G,\star}(\mathbf p),
\theta_i^{\mathcal G,\star}(\mathbf p);\mathbf p_{\mathcal S_i})
=
z_i^{(A^\star)}(\mathbf p).
\end{equation}

For $k\ge1$, define
\begin{equation}\label{eq:shared-Fm}
F_k^{(A^\star)}(\mathbf p)
:=
\mathrm{proj}_{\mathcal P^k}
\left((I-U)\mathbf p+Uz^{(A^\star)}(\mathbf p)\right),
\end{equation}
and let
$F^{(A^\star)}(\mathbf p)
:=\mathrm{proj}_{\mathcal P}((I-U)\mathbf p+Uz^{(A^\star)}(\mathbf p))$.

Finally, define
$\bar\varepsilon_i^k:=I_k^{-1}\sum_{t\in\mathcal T_k}\varepsilon_i^t$ and
$W_{ir}^k:=I_k^{-1}\sum_{t\in\mathcal T_k}
\tilde Z_r^{t,k}\varepsilon_i^t$ for $r\in\mathcal R_i$. Throughout,
$a_k:=(c_a\log(eI_k))^{1/2}$, where $c_a>0$ is chosen large enough in the
concentration step.

\begin{lemma}\label{lem:partial-projection-convergence}
Suppose Assumption~\ref{assu:A-star} holds and
$\delta_j^k/\delta_r^k\to1$ for all $j,r\in[n]$. Then, for every seller $i$,
$r\in\mathcal R_i$, and $\ell\in\mathcal U_i$,
\[
\frac{\delta_\ell^k}{\delta_r^k}[\pi_{i\ell}^k]_r
\xrightarrow{p}
[\pi_{i\ell}^\star]_r.
\]
Consequently, there exist deterministic sequences
$\Delta_{ir\ell}^k\downarrow0$ such that
\[
\mathbb P\left(
\left|
\frac{\delta_\ell^k}{\delta_r^k}[\pi_{i\ell}^k]_r
-
[\pi_{i\ell}^\star]_r
\right|
>
\Delta_{ir\ell}^k
\right)\to0 .
\]
\end{lemma}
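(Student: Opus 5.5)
The plan is to combine the almost sure convergence in Assumption~\ref{assu:A-star} with continuity of matrix inversion, and then to absorb the scale factor $\delta_\ell^k/\delta_r^k$ using the hypothesis $\delta_j^k/\delta_r^k\to1$.

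First, fix a seller $i$. By Assumption~\ref{assu:A-star}, $\widehat\Sigma_i^k\to\Sigma_i^\star$ almost surely and $\widehat{\mathbf c}_{i\ell}^k\to\mathbf c_{i\ell}^\star$ almost surely, with $\Sigma_i^\star$ positive definite. Since $\lambda_{\min}(\cdot)$ is continuous (Weyl's inequality), on the almost sure event where these limits hold there is a random $K$ such that $\lambda_{\min}(\widehat\Sigma_i^k)\ge s_{\min}/2>0$ for all $k\ge K$. In particular $\widehat\Sigma_i^k$ is nonsingular for $k\ge K$. Hence, for $k\ge K$, the singular-case convention $\pi_{i\ell}^k=0$ is inactive and $\pi_{i\ell}^k=(\widehat\Sigma_i^k)^{-1}\widehat{\mathbf c}_{i\ell}^k$.

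Second, the map $(\Sigma,\mathbf c)\mapsto\Sigma^{-1}\mathbf c$ is continuous on the set of nonsingular matrices, so $\pi_{i\ell}^k\to\pi_{i\ell}^\star$ almost surely, coordinatewise. The ratio $\delta_\ell^k/\delta_r^k$ is deterministic and converges to $1$, so the product $\frac{\delta_\ell^k}{\delta_r^k}[\pi_{i\ell}^k]_r$ converges almost surely to $[\pi_{i\ell}^\star]_r$. Almost sure convergence implies convergence in probability, which gives the first claim.

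Third, for the deterministic sequences I would use a standard diagonal argument. Let
\[
X_k:=\left|\frac{\delta_\ell^k}{\delta_r^k}[\pi_{i\ell}^k]_r-[\pi_{i\ell}^\star]_r\right|.
\]
Convergence in probability means that for each $m\in\mathbb N$ there is $k_m$, which we may take strictly increasing in $m$, such that $\mathbb P(X_k>1/m)\le 1/m$ for all $k\ge k_m$. Set $\Delta_{ir\ell}^k:=1/m$ for $k_m\le k<k_{m+1}$, and $\Delta_{ir\ell}^k:=1$ for $k<k_1$. This sequence is nonincreasing and tends to $0$, and $\mathbb P(X_k>\Delta_{ir\ell}^k)\le 1/m\to0$. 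Because there are finitely many triples $(i,r,\ell)$, one could also take the maximum to get a common sequence.

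The only mildly delicate point is the singular-case convention for $\pi_{i\ell}^k$. It is handled by the eventual nonsingularity established in the first step, so I expect no real obstacle.
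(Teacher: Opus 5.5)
Your proposal is correct and follows the paper's proof essentially step for step. The shared steps are: almost sure convergence of $\widehat\Sigma_i^k$ and $\widehat{\mathbf c}_{i\ell}^k$; eventual nonsingularity from positive definiteness of $\Sigma_i^\star$, so the singular-case convention is eventually inactive; continuity of $(\Sigma,\mathbf c)\mapsto\Sigma^{-1}\mathbf c$ combined with the deterministic ratio $\delta_\ell^k/\delta_r^k\to1$; and the same diagonal construction of $\Delta_{ir\ell}^k$ from strictly increasing thresholds $k_m$.
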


\begin{proof}
By Assumption~\ref{assu:A-star},
$\widehat\Sigma_i^k\to\Sigma_i^\star$ and
$\widehat{\mathbf c}_{i\ell}^k\to  \mathbf c_{i\ell}^\star$ almost surely. Since
$\Sigma_i^\star$ is positive definite, $\widehat\Sigma_i^k$ is nonsingular
eventually almost surely. On this eventual full-probability event,
\[
\pi_{i\ell}^k
=
(\widehat\Sigma_i^k)^{-1}\widehat{\mathbf c}_{i\ell}^k
\to
(\Sigma_i^\star)^{-1}\mathbf c_{i\ell}^\star
=
\pi_{i\ell}^\star
\qquad a.s.
\]
The deterministic ratio $\delta_\ell^k/\delta_r^k\to1$, so the product rule
gives the claimed convergence in probability.

It remains to choose deterministic tolerances. Let
$X_{ir\ell}^k:=|(\delta_\ell^k/\delta_r^k)[\pi_{i\ell}^k]_r-[\pi_{i\ell}^\star]_r|$.
Since $X_{ir\ell}^k\to0$ in probability, for every $m\ge1$ there is $K_m$ such that
$\mathbb P(X_{ir\ell}^k>1/m)\le1/m$ for all $k\ge K_m$. Increasing the $K_m$'s if
necessary, assume $K_m<K_{m+1}$. Define $\Delta_{ir\ell}^k:=1$ for $k<K_1$ and
$\Delta_{ir\ell}^k:=1/m$ for $K_m\le k<K_{m+1}$. Then
$\Delta_{ir\ell}^k\downarrow0$ and
$\mathbb P(X_{ir\ell}^k>\Delta_{ir\ell}^k)\to0$.
\end{proof}

\begin{proof}[Proof of Theorem~\ref{thm:main_theorem}]
The algorithm preserves feasibility. Indeed, if
$\hat p_i^k\in\mathcal P_i^k$, then
$p_i^t=\hat p_i^k+\delta_i^kZ_i^t\in[p_i^l,p_i^h]$ because
$|Z_i^t|\le B_i$. Since $\delta_i^k\downarrow0$,
$\mathcal P_i^k\subseteq\mathcal P_i^{k+1}$, and the projected update gives
$\hat p_i^{k+1}\in\mathcal P_i^{k+1}$. Hence
$\hat{\mathbf p}^k\in\mathcal P^k\subseteq\mathcal P$ for all $k$.

Let $\delta_{\min}^k:=\min_i\delta_i^k$,
$\delta_{\max}^k:=\max_i\delta_i^k$, and
\[
\Delta^k:=
\max_{\substack{i\in[n],\,r\in\mathcal R_i,\,\ell\in\mathcal U_i}}
\Delta_{ir\ell}^k,
\]
with $\Delta^k:=0$ if the maximum is over an empty set. Since there are only
finitely many triples, $\Delta^k\to0$. Together with the set definition in \eqref{eq:good_event_experiment}, define

\begin{equation}\label{eq:good-event-partial-feedback}
\begin{aligned}
\mathcal A_k
:=
\mathcal{E}_k(\Delta^k)\cap
\left\{
|\bar\varepsilon_i^k|\le a_kI_k^{-1/2},
\quad
|W_{ir}^k|\le a_kI_k^{-1/2},
\quad
\forall i,\ r\in\mathcal R_i
\right\}.
\end{aligned}
\end{equation}

If the maximum in \(\mathcal E_k(\Delta^k)\) is over an empty set, the corresponding event is interpreted as the whole sample space. 

By the twice continuous differentiability of the demand functions and the
positive-slope condition in Assumption~\ref{assu:A-star},
$z^{(A^\star)}$ is $C^1$ on $\mathcal P$. Let
$\rho:=\sup_{\mathbf p\in\mathcal P}\|Dz^{(A^\star)}(\mathbf p)\|_\infty$. By Assumption~\ref{assu:ldl-contraction}, $\rho<1$.
For any $\mathbf x,\mathbf y\in\mathcal P$, non-expansiveness of projection and
the mean-value theorem give
\begin{equation}\label{eq:partial-F-contraction}
\begin{aligned}
\|F^{(A^\star)}(\mathbf x)-F^{(A^\star)}(\mathbf y)\|_\infty
&\le
\|(I-U)(\mathbf x-\mathbf y)
+
U[z^{(A^\star)}(\mathbf x)-z^{(A^\star)}(\mathbf y)]\|_\infty
\\
&\le
\gamma\|\mathbf x-\mathbf y\|_\infty,
\end{aligned}
\end{equation}
where $\gamma:=\max_i\{1-u_i+u_i\rho\}<1$. Thus $F^{(A^\star)}$ is a
contraction on the complete metric space $\mathcal P$. Let $\mathbf p^\star$
denote its unique fixed point.

We decompose
\begin{equation}\label{eq:partial-main-decomposition}
\begin{aligned}
\mathbb E\|\hat{\mathbf p}^{k+1}-\mathbf p^\star\|_\infty
=
&\mathbb E\left[
\|\hat{\mathbf p}^{k+1}-\mathbf p^\star\|_\infty 1_{\mathcal A_k}
\right]
\\
&+
\mathbb E\left[
\|\hat{\mathbf p}^{k+1}-\mathbf p^\star\|_\infty 1_{\mathcal A_k^c}
\right].
\end{aligned}
\end{equation}

\textbf{Part I: the good event.}
Fix seller $i$ and work on $\mathcal A_k$. Since $\mathbf p_{\mathcal R_i}^t-\bar{\mathbf p}_{\mathcal R_i}^k
=\Gamma_i^k\tilde Z_{\mathcal R_i}^{t,k}$,
\begin{equation}\label{eq:partial-gram}
\sum_{t\in\mathcal T_k}
(\mathbf p_{\mathcal R_i}^t-\bar{\mathbf p}_{\mathcal R_i}^k)
(\mathbf p_{\mathcal R_i}^t-\bar{\mathbf p}_{\mathcal R_i}^k)^\top
=
I_k\Gamma_i^k\widehat\Sigma_i^k\Gamma_i^k .
\end{equation}
On $\mathcal A_k$, this matrix is nonsingular, so the singular-design fallback
is not used.

Let $\delta^k=(\delta_1^k,\ldots,\delta_n^k)$. For $t\in\mathcal T_k$, define
\[
R_i^t
:=
\lambda_i(\hat{\mathbf p}^k+\operatorname{diag}(\delta^k)Z^t)
-
\lambda_i(\hat{\mathbf p}^k)
-
\sum_{\ell=1}^n
\partial_{p_\ell}\lambda_i(\hat{\mathbf p}^k)\delta_\ell^k Z_\ell^t .
\]
Since $\lambda_i$ is $C^2$ on compact $\mathcal P$, the line segment between
$\hat{\mathbf p}^k$ and
$\hat{\mathbf p}^k+\operatorname{diag}(\delta^k)Z^t$ lies in $\mathcal P$, and
$Z^t$ is uniformly bounded. Hence
\begin{equation}\label{eq:partial-remainder-bound}
|R_i^t|\le C_1(\delta_{\max}^k)^2
\qquad \forall i,t,k ,
\end{equation}
for some $C_1>0$.

Set
$\widehat R_i^k:=I_k^{-1}\sum_{t\in\mathcal T_k}
\tilde Z_{\mathcal R_i}^{t,k}R_i^t$ and
$\widehat W_i^k:=(W_{ir}^k)_{r\in\mathcal R_i}$.

Using centered regressors, the OLS slope vector
$\hat b_i^{k+1}:=(\hat b_{ir}^{k+1}:r\in\mathcal R_i)$ satisfies
\[
\hat b_i^{k+1}
=
\left(
\sum_{t\in\mathcal T_k}
(\mathbf p_{\mathcal R_i}^t-\bar{\mathbf p}_{\mathcal R_i}^k)
(\mathbf p_{\mathcal R_i}^t-\bar{\mathbf p}_{\mathcal R_i}^k)^\top
\right)^{-1}
\sum_{t\in\mathcal T_k}
(\mathbf p_{\mathcal R_i}^t-\bar{\mathbf p}_{\mathcal R_i}^k)(D_i^t-\bar D_i^k).
\]
The centered regressors sum to zero, so the same formula is obtained if
$D_i^t-\bar D_i^k$ is replaced by $D_i^t$. Also,
$I_k^{-1}\sum_t\tilde Z_{\mathcal R_i}^{t,k}Z_r^t
=\widehat\Sigma_i^k e_r$ for $r\in\mathcal R_i$, and
$I_k^{-1}\sum_t\tilde Z_{\mathcal R_i}^{t,k}Z_\ell^t
=\widehat{\mathbf{c}}_{i\ell}^k$ for $\ell\in\mathcal U_i$, where $e_r$ is the coordinate
vector corresponding to index $r$. Then one has that
\[\sum_{t\in\mathcal T_k}
( \mathbf p_{\mathcal R_i}^t-\bar{\mathbf p}_{\mathcal R_i}^k)(D_i^t-\bar D_i^k)=I_k\Gamma_i^k\left(\widehat\Sigma_i^k\Gamma_i^k(\partial_{p_r}\lambda_i(\hat{\mathbf p}^k))_{r\in\mathcal R_i}+\sum_{\ell\in\mathcal U_i}
\delta_\ell^k\widehat{\mathbf{c}}_{i\ell}^k
\partial_{p_\ell}\lambda_i(\hat{\mathbf p}^k)+\widehat R_i^k+\widehat W_i^k\right).\]

Combining these identities with \eqref{eq:partial-gram} gives
\[
\begin{aligned}
\hat b_i^{k+1}
=(\partial_{p_r}\lambda_i(\hat{\mathbf p}^k))_{r\in\mathcal R_i}
+
(\Gamma_i^k)^{-1}
\Bigg[
\sum_{\ell\in\mathcal U_i}
\delta_\ell^k\pi_{i\ell}^k
\partial_{p_\ell}\lambda_i(\hat{\mathbf p}^k)
+
(\widehat\Sigma_i^k)^{-1}
(\widehat R_i^k+\widehat W_i^k)
\Bigg].
\end{aligned}
\]
Therefore, for $r\in\mathcal R_i$,
\[
\begin{aligned}
\hat b_{ir}^{k+1}
-
\psi_{ir}^{\mathcal G,\star}(\hat{\mathbf p}^k)
=
&\sum_{\ell\in\mathcal U_i}
\left(
\frac{\delta_\ell^k}{\delta_r^k}[\pi_{i\ell}^k]_r
-
[\pi_{i\ell}^\star]_r
\right)
\partial_{p_\ell}\lambda_i(\hat{\mathbf p}^k)
\\
&+
\left[
(\Gamma_i^k)^{-1}
(\widehat\Sigma_i^k)^{-1}
(\widehat R_i^k+\widehat W_i^k)
\right]_r .
\end{aligned}
\]
On $\mathcal A_k$, the first line is bounded by $C_2\Delta^k$ for some $C_2>0$. Also,
$\lambda_{\min}(\widehat\Sigma_i^k)\ge s_{\min}/2$ and
$|\tilde Z_r^{t,k}|\le2B_r$ imply, since $n$ is fixed, that
$\|\widehat R_i^k\|_\infty\le B_{\max}C_1(\delta_{\max}^k)^2$,
$\|\widehat W_i^k\|_\infty\le a_kI_k^{-1/2}$, and
$\|(\widehat\Sigma_i^k)^{-1}\|_\infty\le C_3$. The uniform boundedness of
$\delta_j^k/\delta_i^k$ gives
$(\delta_{\max}^k)^2/\delta_{\min}^k\le C_4\delta_{\max}^k$. Thus, on
$\mathcal A_k$,
\begin{equation}\label{eq:partial-coeff-error}
|\hat b_{ir}^{k+1}
-
\psi_{ir}^{\mathcal G,\star}(\hat{\mathbf p}^k)|
\le C_5\eta_k
\qquad
\forall i,\ r\in\mathcal R_i .
\end{equation}

where $\eta_k:=\Delta^k+\delta_{\max}^k+a_kI_k^{-1/2}/\delta_{\min}^k$. The theorem's
scaling implies $I_k\to\infty$ and, since $n$ is fixed,
$\delta_{\min}^k I_k^{1/2}/\log(eI_k)\to\infty$. Hence $\eta_k\to0$.

In particular,
$|\hat\beta_i^{k+1}-\beta_i^{\mathcal G,\star}(\hat{\mathbf p}^k)|\le C_5\eta_k$
and
$|\hat\theta_{ij}^{k+1}-\theta_{ij}^{\mathcal G,\star}(\hat{\mathbf p}^k)|
\le C_5\eta_k$ for $j\in\mathcal S_i$.

Next, because the regression includes an intercept,
$\hat\alpha_i^{k+1}
=\bar D_i^k-\sum_{r\in\mathcal R_i}\hat b_{ir}^{k+1}\bar p_r^k$. The Taylor
expansion gives
\[
\bar D_i^k
=
\lambda_i(\hat{\mathbf p}^k)
+
\sum_{\ell=1}^n
\partial_{p_\ell}\lambda_i(\hat{\mathbf p}^k)\delta_\ell^k\bar Z_\ell^k
+
\bar R_i^k+\bar\varepsilon_i^k,
\]
where $\bar R_i^k:=I_k^{-1}\sum_{t\in\mathcal T_k}R_i^t$ and
$|\bar R_i^k|\le C_1(\delta_{\max}^k)^2$. Since
$\bar p_r^k=\hat p_r^k+\delta_r^k\bar Z_r^k$,
\[
\begin{aligned}
\hat\alpha_i^{k+1}
-
\alpha_i^{\mathcal G,\star}(\hat{\mathbf p}^k)
=
&\sum_{\ell=1}^n
\partial_{p_\ell}\lambda_i(\hat{\mathbf p}^k)
\delta_\ell^k\bar Z_\ell^k
-
\sum_{r\in\mathcal R_i}
\psi_{ir}^{\mathcal G,\star}(\hat{\mathbf p}^k)
\delta_r^k\bar Z_r^k
\\
&-
\sum_{r\in\mathcal R_i}
\bigl(\hat b_{ir}^{k+1}
-
\psi_{ir}^{\mathcal G,\star}(\hat{\mathbf p}^k)\bigr)
(\hat p_r^k+\delta_r^k\bar Z_r^k)
+
\bar R_i^k+\bar\varepsilon_i^k .
\end{aligned}
\]
The quantities $\bar Z_r^k$ are uniformly bounded,
$\hat{\mathbf p}^k\in\mathcal P$, and the limiting coefficients are bounded on
compact $\mathcal P$. Using \eqref{eq:partial-coeff-error}, and absorbing
$|\bar\varepsilon_i^k|\le a_kI_k^{-1/2}$ into
$a_kI_k^{-1/2}/\delta_{\min}^k$ because $\delta_{\min}^k\le1$ eventually,
we obtain, on $\mathcal A_k$ and for all sufficiently large $k$,
\begin{equation}\label{eq:partial-alpha-error}
|\hat\alpha_i^{k+1}
-
\alpha_i^{\mathcal G,\star}(\hat{\mathbf p}^k)|
\le C_6\eta_k .
\end{equation}

By Assumption~\ref{assu:A-star},
\(\beta_i^{\mathcal G,\star}(\mathbf p)\ge \beta_{\min}\) on $\mathcal P$. Since
$\eta_k\to0$, \eqref{eq:partial-coeff-error} and
Assumption~\ref{assu:A-star} imply that, for all
sufficiently large $k$, on $\mathcal A_k$ the small-slope fallback is not used
and \(\hat\beta_i^{k+1}>\beta_{\min}/2>\underline\beta_i\). The singular-design fallback
has already been ruled out by \eqref{eq:partial-gram} and $\mathcal E_{k}(\Delta^k)$.

On the compact coefficient set reached for all sufficiently large $k$ on
$\mathcal A_k$, with \(\beta\ge \beta_{\min}/2\) and
$\mathbf p_{\mathcal S_i}\in\mathcal P_{\mathcal S_i}$, the derivatives of
$h_i^{\mathcal S_i}$ are bounded. Moreover,
$\|\bar{\mathbf p}_{\mathcal S_i}^k-\hat{\mathbf p}_{\mathcal S_i}^k\|_\infty
\le B_{\max}\delta_{\max}^k$. Therefore
\eqref{eq:partial-coeff-error}, \eqref{eq:partial-alpha-error}, and
\eqref{eq:shared-target-identity} imply
\[
|\hat z_i^{k+1}
-
z_i^{(A^\star)}(\hat{\mathbf p}^k)|
\le C_7\eta_k .
\]
Using non-expansiveness of coordinatewise projection onto rectangles,
\[
\left\|
\mathrm{proj}_{\mathcal P^{k+1}}
\left((I-U)\hat{\mathbf p}^k+U\hat z^{k+1}\right)
-
F_{k+1}^{(A^\star)}(\hat{\mathbf p}^k)
\right\|_\infty
\le C_8\eta_k .
\]
For every $\mathbf x\in\mathbb R^n$ and every $k$,
$\|\mathrm{proj}_{\mathcal P^k}(\mathbf x)
-\mathrm{proj}_{\mathcal P}(\mathbf x)\|_\infty
\le B_{\max}\delta_{\max}^{k}$. Hence
\[
\|F_{k+1}^{(A^\star)}(\hat{\mathbf p}^k)
-F^{(A^\star)}(\hat{\mathbf p}^k)\|_\infty
\le B_{\max}\delta_{\max}^{k+1}\le B_{\max}\delta_{\max}^{k}.
\]
After increasing the constant if needed, the actual update satisfies
\begin{equation}\label{eq:partial-one-step-to-F}
\|\hat{\mathbf p}^{k+1}
-
F^{(A^\star)}(\hat{\mathbf p}^k)\|_\infty
\le C_{\eta}\eta_k
\qquad\text{on }\mathcal A_k .
\end{equation}
Combining \eqref{eq:partial-one-step-to-F} with the contraction
\eqref{eq:partial-F-contraction} gives, on $\mathcal A_k$ and for all
sufficiently large $k$,
\begin{equation}\label{eq:partial-good-recursion}
\|\hat{\mathbf p}^{k+1}-\mathbf p^\star\|_\infty
\le
\gamma\|\hat{\mathbf p}^{k}-\mathbf p^\star\|_\infty
+
v_k,
\qquad
v_k:=C_v\eta_k .
\end{equation}
Since $\eta_k\to0$, $v_k\to0$.

\textbf{Part II: the bad event has vanishing probability.}
We prove $\mathbb P(\mathcal A_k^c)\to0$. First,
$\mathbb P(\mathcal E_k(\Delta^k)^c)\to0$ by
Assumption~\ref{assu:A-star}, Lemma~\ref{lem:partial-projection-convergence}, continuity of eigenvalues, and
positive definiteness of $\Sigma_i^\star$. 

It remains to control the shock terms. Fix $i$ and define
$\mathcal C_k:=\mathcal F_{t_{k-1}}\vee\sigma(Z^t:t\in\mathcal T_k)$. By the
over-time independence of the demand shocks and the conditional independence of
the batch experimentation variables from the batch shocks, conditional on
$\mathcal C_k$ the variables $\{\varepsilon_i^t:t\in\mathcal T_k\}$ are
independent, mean zero, and have the same local log-MGF bound as
unconditionally. Let $\{\mu_t:t\in\mathcal T_k\}$ be $\mathcal C_k$-measurable
weights with $|\mu_t|\le L$. Since the shocks have zero mean and finite log-MGF
in a neighborhood of zero, and since $n$ is fixed, there are
$K_\varepsilon<\infty$ and $s_0>0$ such that
$\log\mathbb E[e^{s\varepsilon_i^t}]\le K_\varepsilon s^2$ for all
$|s|\le s_0$, uniformly over $i$. Thus, for $s\in(0,s_0/L)$,
\[
\mathbb E\left[
\exp\left(s\sum_{t\in\mathcal T_k}\mu_t\varepsilon_i^t\right)
\ \middle|\ \mathcal C_k
\right]
\le
\exp(K_\varepsilon L^2s^2I_k).
\]
Therefore, for $x>0$ and $s\in(0,s_0/L)$,
\[
\mathbb P\left(
I_k^{-1}\sum_{t\in\mathcal T_k}\mu_t\varepsilon_i^t>x
\ \Big|\ \mathcal C_k
\right)
\le
\exp\{-sI_kx+K_\varepsilon L^2s^2I_k\}.
\]
Taking $x=a_kI_k^{-1/2}$ and $s=x/(2K_\varepsilon L^2)$ is valid for all
sufficiently large $k$, and gives
\begin{equation}\label{eq:partial-noise-concentration}
\mathbb P\left(
\left|
I_k^{-1}\sum_{t\in\mathcal T_k}\mu_t\varepsilon_i^t
\right|
>
a_kI_k^{-1/2}
\ \Big|\ \mathcal C_k
\right)
\le
2\exp(-c_La_k^2)
\end{equation}
for some $c_L>0$, where the lower tail is obtained by applying the same bound
to $-\mu_t$.

Apply \eqref{eq:partial-noise-concentration} with $\mu_t=1$ to control
$\bar\varepsilon_i^k$, and with $\mu_t=\tilde Z_r^{t,k}$ to control $W_{ir}^k$.
Since $|\tilde Z_r^{t,k}|\le2B_r\le2B_{\max}$, choosing \(c_a\) sufficiently large, depending only on
the shock tail constants and \(B_{\max}\), makes the right-hand side of
\eqref{eq:partial-noise-concentration} of order \(I_k^{-2}\) in both
cases. Hence, 
\[
\mathbb P(|\bar\varepsilon_i^k|>a_kI_k^{-1/2})\le CI_k^{-2},
\qquad
\mathbb P(|W_{ir}^k|>a_kI_k^{-1/2})\le CI_k^{-2},
\]
for some constant $C>0$.

A finite union bound over $i$ and $r\in\mathcal R_i$, together with the two
preceding probability bounds, yields $\mathbb P(\mathcal A_k^c)\to0$.

Since $\hat{\mathbf p}^{k+1},\mathbf p^\star\in\mathcal P$,
\begin{equation}\label{eq:partial-bad-event}
\mathbb E\left[
\|\hat{\mathbf p}^{k+1}-\mathbf p^\star\|_\infty1_{\mathcal A_k^c}
\right]
\le
D_{\mathcal P}\mathbb P(\mathcal A_k^c)
\to0 .
\end{equation}

\textbf{Part III: convergence in expectation.}
Using \eqref{eq:partial-good-recursion},
\[
\mathbb E\left[
\|\hat{\mathbf p}^{k+1}-\mathbf p^\star\|_\infty1_{\mathcal A_k}
\right]
\le
\gamma\mathbb E\|\hat{\mathbf p}^{k}-\mathbf p^\star\|_\infty+v_k .
\]
Combining this with \eqref{eq:partial-main-decomposition} and
\eqref{eq:partial-bad-event}, for all sufficiently large $k$,
\[
\mathbb E\|\hat{\mathbf p}^{k+1}-\mathbf p^\star\|_\infty
\le
\gamma\mathbb E\|\hat{\mathbf p}^{k}-\mathbf p^\star\|_\infty+w_k,
\]
where $w_k:=v_k+D_{\mathcal P}\mathbb P(\mathcal A_k^c)$. After changing
finitely many initial terms if necessary, $w_k$ is bounded and $w_k\to0$. Let
$x_k:=\mathbb E\|\hat{\mathbf p}^{k}-\mathbf p^\star\|_\infty$, and choose
$k_0$ such that the recursion holds for every $k\ge k_0$. Iterating gives, for
$k\ge k_0$,
\[
x_{k+1}
\le
\gamma^{k+1-k_0}x_{k_0}
+
\sum_{m=k_0}^{k}\gamma^{k-m}w_m .
\]
The first term converges to zero. For the convolution term, fix
$\varepsilon>0$ and choose $M\ge k_0$ such that
$w_m\le\varepsilon(1-\gamma)/2$ for all $m\ge M$. The finite sum over
$m<M$ vanishes as $k\to\infty$, while the tail is at most $\varepsilon/2$.
Thus $x_{k+1}\to0$, and equivalently
\[
\mathbb E\left[
\left\|\hat{\mathbf p}^{k}
-
\mathbf p^{\star}\right\|_\infty
\right]\to0 .
\]

Finally, suppose
$\mathbf p^\star=\mathbf p^{(A^{\mathcal G,\star})}\in\mathrm{int}(\mathcal P)$.
Since $\mathbf p^\star$ is a fixed point of $F^{(A^\star)}$ and the projection
onto $\mathcal P$ returns an interior point, the projection is inactive at the
fixed point. Therefore
$(I-U)\mathbf p^\star+Uz^{(A^\star)}(\mathbf p^\star)=\mathbf p^\star$. Since
$u_i\in(0,1]$ for all $i$, $U$ is invertible, and hence
$z^{(A^\star)}(\mathbf p^\star)=\mathbf p^\star$. Coordinatewise,
\[
\frac{
\lambda_i(\mathbf p^\star)
+
\beta_i^{(A^\star)}(\mathbf p^\star)p_i^\star
}{
2\beta_i^{(A^\star)}(\mathbf p^\star)
}
=
p_i^\star .
\]
Because $\beta_i^{(A^\star)}(\mathbf p^\star)>0$, this is equivalent to
\[
\lambda_i(\mathbf p^\star)
+
p_i^\star
\left(
\partial_{p_i}\lambda_i(\mathbf p^\star)
+
\sum_{j\ne i}
A_{ij}^{\mathcal G,\star}\partial_{p_j}\lambda_i(\mathbf p^\star)
\right)
=0,
\qquad \forall\,i\in[n].
\]
These are the CV$(A^{\mathcal G,\star})$ first-order conditions. By
Assumption~\ref{assu:cv-foc-sufficiency}, they characterize
CV$(A^{\mathcal G,\star})$ best replies. Hence $\mathbf p^\star$ is a
CV$(A^{\mathcal G,\star})$ equilibrium.
\end{proof}

\begin{proof}[Proof of Theorem~\ref{thm:cv_rate_general}]
The shared notation from Section~\ref{sec:proof-thm-partial-feedback-convergence}
is used throughout.

\textbf{Step 1: Good-event estimates and bad-event probability bound.}
Throughout this proof, when we refer to the good event $\mathcal A_k$ defined in
\eqref{eq:good-event-partial-feedback}, the tolerance $\Delta^k$ is understood
to be the finite-rate sequence appearing in the statement of
Theorem~\ref{thm:cv_rate_general}.

Recall the good event $\mathcal{A}_k$ definition in \eqref{eq:good-event-partial-feedback}, and the bound definition

\[
\eta_k:=
\Delta^k+\delta_{\max}^k+\frac{a_kI_k^{-1/2}}{\delta_{\min}^k}.
\]
Since $\Delta^k=\widetilde O(I_k^{-1/4})$, the common experimentation magnitudes satisfy
$\delta_i^k=\delta^k=(\log(eI_k)/I_k)^{1/4},\,\forall i\in[n]$, implying $\delta_{\min}^k=\delta_{\max}^k$, 
and
$a_k\asymp\sqrt{\log(eI_k)}$, we have
\begin{equation}\label{eq:rate-eta-order}
\eta_k=\widetilde O(I_k^{-1/4}).
\end{equation}

On $\mathcal A_k$, the finite-batch event $\mathcal E_k(\Delta^k)$ gives
$\min_i\lambda_{\min}(\widehat\Sigma_i^k)\ge s_{\min}/2$ and
\[
\left|
\frac{\delta_\ell^k}{\delta_r^k}[\pi_{i\ell}^k]_r
-
[\pi_{i\ell}^\star]_r
\right|
\le \Delta^k
\quad
\forall i,\ r\in\mathcal R_i,\ \ell\in\mathcal U_i .
\]

Combining \eqref{eq:partial-one-step-to-F} and \eqref{eq:partial-good-recursion}, we have that
\begin{equation}\label{eq:rate-good-recursion}
\|\hat{\mathbf p}^{k+1}-\mathbf p^\star\|_\infty
\le
\gamma\|\hat{\mathbf p}^{k}-\mathbf p^\star\|_\infty+v_k
\qquad\text{on }\mathcal A_k,
\end{equation}
where $v_k:=C_v\eta_k=\widetilde{O}(I_k^{-1/4})$ for all sufficiently large $k$.

By assumption,
$\mathbb P(\mathcal E_k(\Delta^k)^c)\le\tilde r_k$ for all sufficiently large
$k$. It remains to control the noise part of $\mathcal A_k$. The conditional
concentration argument leading to \eqref{eq:partial-noise-concentration} applies
with the same conditioning
$\mathcal C_k=\mathcal F_{t_{k-1}}\vee\sigma(Z^t:t\in\mathcal T_k)$, because
the batch experimentation variables are conditionally independent of the batch
shocks. Applying \eqref{eq:partial-noise-concentration} with $\mu_t=1$ controls
$\bar\varepsilon_i^k$, and applying it with $\mu_t=\tilde Z_r^{t,k}$ controls
$W_{ir}^k$. Since $|\tilde Z_r^{t,k}|\le2B_r\le2B_{\max}$, choosing $c_a$
large enough and taking a finite union bound over $i$ and $r\in\mathcal R_i$
gives
\[
\mathbb P\left(
\max_i|\bar\varepsilon_i^k|>a_kI_k^{-1/2}
\ \text{or}\
\max_{i,\ r\in\mathcal R_i}|W_{ir}^k|>a_kI_k^{-1/2}
\right)
\le CI_k^{-2}.
\]
Thus, for all sufficiently large $k$,
\begin{equation}\label{eq:rate-bad-prob}
\mathbb P(\mathcal A_k^c)\le \tilde r_k+CI_k^{-2}.
\end{equation}

\textbf{Step 2: an $L^2$ recursion at batch endpoints.}
By the feasibility argument in the proof of
Theorem~\ref{thm:main_theorem}, both
$\hat{\mathbf p}^{k+1}$ and $\mathbf p^\star$ lie in $\mathcal P$. Hence, on
$\mathcal A_k^c$,
$\|\hat{\mathbf p}^{k+1}-\mathbf p^\star\|_\infty\le D_{\mathcal P}$. Combining
this bound with \eqref{eq:rate-good-recursion}, for all sufficiently large $k$,
\[
\|\hat{\mathbf p}^{k+1}-\mathbf p^\star\|_\infty
\le
\left(\gamma\|\hat{\mathbf p}^{k}-\mathbf p^\star\|_\infty+v_k\right)
1_{\mathcal A_k}
+
D_{\mathcal P}1_{\mathcal A_k^c}.
\]
Taking $L^2$ norms and using Minkowski's inequality gives
\[
\left(\mathbb E\|\hat{\mathbf p}^{k+1}-\mathbf p^\star\|_\infty^2\right)^{1/2}
\le
\gamma
\left(\mathbb E\|\hat{\mathbf p}^{k}-\mathbf p^\star\|_\infty^2\right)^{1/2}
+
\bar w_k,
\]
where
$\bar w_k:=v_k+D_{\mathcal P}\mathbb P(\mathcal A_k^c)^{1/2}=C_v\eta_k+D_{\mathcal P}\mathbb P(\mathcal A_k^c)^{1/2}$. By
\eqref{eq:rate-bad-prob}, $\sqrt{x+y}\le\sqrt x+\sqrt y$, and
\eqref{eq:rate-eta-order},
\[
\bar w_k
\le
C\left(
\Delta^k+\delta^k+
\frac{a_kI_k^{-1/2}}{\delta^k}
+
\sqrt{\tilde r_k}
+
I_k^{-1}
\right)
=
\widetilde O(I_k^{-1/4}).
\]
Let
$y_k:=(\mathbb E\|\hat{\mathbf p}^{k}-\mathbf p^\star\|_\infty^2)^{1/2}$.
Choose $k_0$ large enough that the preceding recursion holds for every
$k\ge k_0$. Then, for every $K\ge k_0$,
\begin{equation}\label{eq:rate-convolution-L2}
y_{K+1}
\le
\gamma^{K+1-k_0}y_{k_0}
+
\sum_{m=k_0}^{K}\gamma^{K-m}\bar w_m .
\end{equation}
Since all focal prices lie in $\mathcal P$, $y_{k_0}\le D_{\mathcal P}$.

Because $I_k\asymp b^k$, $I_K=\Theta(t_K)$. Moreover, for
$0\le j\le K-k_0$, $I_{K-j}\asymp I_K/b^j$. The bound
$\bar w_k=\widetilde O(I_k^{-1/4})$ gives, after increasing the
polylogarithmic constant if needed,
$\bar w_{K-j}\le\widetilde C\, b^{j/4}I_K^{-1/4}$ for some $\widetilde C>0$. Therefore
\[
\sum_{m=k_0}^{K}\gamma^{K-m}\bar w_m
=
\sum_{j=0}^{K-k_0}\gamma^j\bar w_{K-j}
\le
\widetilde C I_K^{-1/4}
\sum_{j=0}^{K-k_0}(\gamma b^{1/4})^j .
\]
Since $b\le\gamma^{-4}$, $\gamma b^{1/4}\le1$. If
$\gamma b^{1/4}<1$, the last sum is bounded by a constant; if
$\gamma b^{1/4}=1$, it is $O(K)$, which is absorbed by the
$\widetilde O(\cdot)$ notation because $K=O(\log t_K)$. Hence
\[
\sum_{m=k_0}^{K}\gamma^{K-m}\bar w_m
=
\widetilde O(I_K^{-1/4})
=
\widetilde O(t_K^{-1/4}).
\]
Also, $b\le\gamma^{-4}$ implies $\gamma^K=O(I_K^{-1/4})$, and therefore
$\gamma^{K+1-k_0}y_{k_0}=\widetilde O(t_K^{-1/4})$. Substituting these two
bounds into \eqref{eq:rate-convolution-L2} gives
\[
\left(
\mathbb E
\left\|
\hat{\mathbf p}^{K+1}-\mathbf p^\star
\right\|_\infty^2
\right)^{1/2}
=
\widetilde O(t_K^{-1/4}).
\]
Squaring both sides yields
\begin{equation}\label{eq:rate-endpoint}
\mathbb E
\left\|
\hat{\mathbf p}^{K+1}-\mathbf p^\star
\right\|_\infty^2
=
\widetilde O(t_K^{-1/2}).
\end{equation}

\textbf{Step 3: arbitrary periods.}
Fix $T\ge1$ and let $K(T):=\max\{K\ge0:t_K<T\}$. For all sufficiently large
$T$, $K(T)\ge k_0$. Since $t_{K(T)}<T\le t_{K(T)+1}$, period $T$ lies in batch
$K(T)+1$, and hence
\[
\mathbf p(T)=\hat{\mathbf p}^{K(T)+1}+\delta^{K(T)+1}Z^T .
\]
Because $|Z_i^T|\le B_i$,
$\|\mathbf p(T)-\hat{\mathbf p}^{K(T)+1}\|_\infty
\le B_{\max}\delta^{K(T)+1}$. Therefore,
\[
\mathbb E\|\mathbf p(T)-\mathbf p^\star\|_\infty^2
\le
2\mathbb E\|\hat{\mathbf p}^{K(T)+1}-\mathbf p^\star\|_\infty^2
+
2B_{\max}^2(\delta^{K(T)+1})^2 .
\]
Since $I_k\asymp b^k$ with fixed $b>1$, both
$t_{K(T)}=\Theta(T)$ and $I_{K(T)+1}=\Theta(T)$ for all sufficiently large $T$.
Using \eqref{eq:rate-endpoint},
\[
\mathbb E\|\hat{\mathbf p}^{K(T)+1}-\mathbf p^\star\|_\infty^2
=
\widetilde O(T^{-1/2}).
\]
Moreover,
\[
(\delta^{K(T)+1})^2
=
\left(\frac{\log(eI_{K(T)+1})}{I_{K(T)+1}}\right)^{1/2}
=
\widetilde O(T^{-1/2}).
\]
Combining the last three displays gives
\[
\mathbb E\left[
\|\mathbf p(T)-\mathbf p^\star\|_\infty^2
\right]
=
\widetilde O(T^{-1/2})
\]
for all sufficiently large $T$. Enlarging the implicit constant handles the
finitely many remaining periods, completing the proof.
\end{proof}

\begin{proof}[Proof of Theorem~\ref{thm:apt_tracking}]
Throughout this proof, when we refer to the good event $\mathcal A_k$ defined in
\eqref{eq:good-event-partial-feedback}, the tolerance $\Delta^k$ is understood
to be the finite-rate sequence appearing in the statement of
Theorem~\ref{thm:apt_tracking}. We use the one-step estimate already established in the proof of Theorem~\ref{thm:main_theorem} and Theorem~\ref{thm:cv_rate_general}.  In particular, the proof of \eqref{eq:partial-one-step-to-F} shows that, on the event
$\mathcal A_k$ defined in \eqref{eq:good-event-partial-feedback},
\[
\left\|
\hat{\mathbf p}^{k+1}
-
F^{(A^\star)}(\hat{\mathbf p}^{k})
\right\|_\infty
\le C_{\eta}\eta_k,
\]
for all sufficiently large $k$, where $\eta_k$ is defined at the start of the proof of 
Theorem \ref{thm:cv_rate_general}.

Indeed, the derivation of \eqref{eq:partial-one-step-to-F} does not use Assumption~\ref{assu:ldl-contraction}. 

Importantly, the derivation of this one-step bound
uses only the Taylor approximation, the OLS concentration within batch $k$, the
finite-batch stabilization event, the positivity of the limiting learned slopes
from Assumption~\ref{assu:A-star}, and the shrinking experimentation magnitude.
It does not use the contraction property
Assumption~\ref{assu:ldl-contraction}. In the proof of
Theorem~\ref{thm:main_theorem}, contraction is used only after
\eqref{eq:partial-one-step-to-F}, to turn the one-step approximation bound into a
recursive bound around the fixed point.

Under the present assumptions,
$\eta_k=\widetilde O(I_k^{-1/4})$. Since $I_k\asymp b^k$ with $b>1$,
$\eta_k\to0$. Moreover, the bad-event estimate \eqref{eq:rate-bad-prob} gives
\[
\mathbb P(\mathcal A_k^c)
\le
\tilde r_k+CI_k^{-2}
=
\widetilde O(I_k^{-1/2}).
\]
Because $I_k\asymp b^k$, the sequence
$\{\mathbb P(\mathcal A_k^c)\}_{k\ge1}$ is summable. Hence, by the first
Borel-Cantelli lemma, almost surely there exists a finite random integer
$K(\omega)$ such that $\mathcal A_k$ occurs for every $k\ge K(\omega)$.

On this almost-sure event, \eqref{eq:partial-one-step-to-F} holds for all
sufficiently large $k$. Since $\eta_k\to0$, we obtain
\[
\lim_{k\to\infty}
\left\|
\hat{\mathbf p}^{k+1}
-
F^{(A^\star)}(\hat{\mathbf p}^k)
\right\|_\infty
=
0
\qquad\text{a.s.}
\]
This is exactly the definition of an asymptotic pseudo-orbit of the continuous
map $F^{(A^\star)}$ on the metric space $\mathcal P$. Continuity of
$F^{(A^\star)}$ follows from the continuity of $\lambda_i$, the positivity of
$\beta_i^{(A^\star)}$ on $\mathcal P$ under Assumption~\ref{assu:A-star}, and
the continuity of projection onto the compact rectangle $\mathcal P$.

Finally, suppose that Assumption~\ref{assu:cv-foc-sufficiency} holds for
$A=A^\star$ and that, on a given sample path,
$\hat{\mathbf p}^k\to\bar{\mathbf p}\in\mathrm{int}(\mathcal P)$. Then also
$\hat{\mathbf p}^{k+1}\to\bar{\mathbf p}$. By continuity of $F^{(A^\star)}$ and
the vanishing one-step error in \eqref{eq:apt_limit}, we have
$\bar{\mathbf p}=F^{(A^\star)}(\bar{\mathbf p})$. Since
$\bar{\mathbf p}$ is interior, the projection is inactive at the fixed point.
Because $u_i>0$ for all $i$, $U$ is invertible, and hence
$z^{(A^\star)}(\bar{\mathbf p})=\bar{\mathbf p}$. By
\eqref{eq:fundamental_identity},
$\mathcal M_i^{(A^\star)}(\bar{\mathbf p})=0$ for every seller $i$. These are
the CV$(A^\star)$ first-order conditions, and by
Assumption~\ref{assu:cv-foc-sufficiency}, $\bar{\mathbf p}$ is a
CV$(A^\star)$ equilibrium.
\end{proof}

\begin{proof}[Proof of Corollary~\ref{corr:nash}]
By definition of the information-dependent conjecture matrix,
$A_{ij}^{\mathcal G,\star}=0$ for every observed rival
$j\in\mathcal S_i$. Now consider an unobserved rival
$\ell\in\mathcal U_i$. Since $\mathbf c_{i\ell}^\star=0$, the limiting partial
linear-projection coefficient vector is
$\pi_{i\ell}^\star=(\Sigma_i^\star)^{-1}\mathbf c_{i\ell}^\star=0$. Hence
$A_{i\ell}^{\mathcal G,\star}=[\pi_{i\ell}^\star]_i=0$. Therefore
$A^{\mathcal G,\star}=0$.

The convergence claim now follows directly from
Theorem~\ref{thm:main_theorem}, with
$A^{\mathcal G,\star}=0$. Thus
$\hat{\mathbf p}^k$ converges in expectation to the unique fixed point
$\mathbf p^{(0)}$ of $F^{(0)}$.

If $\mathbf p^{(0)}\in\mathrm{int}(\mathcal P)$, then the fixed point satisfies
the CV$(0)$ first-order conditions. Since $A=0$, Equation~\eqref{eq:CV-FOC}
reduces to
\[
\lambda_i(\mathbf p)+p_i\partial_{p_i}\lambda_i(\mathbf p)=0,
\qquad \forall i\in[n].
\]
These are exactly the first-order conditions for the static pricing game in
which each seller chooses $p_i$ taking rivals' prices as fixed. By
Assumption~\ref{assu:cv-foc-sufficiency}, applied with $A=0$, these
first-order conditions characterize best replies. Therefore
$\mathbf p^{(0)}$ is a Nash equilibrium.
\end{proof}

\section{Proofs and Examples for Section~\ref{sec:supra_competitive}}

We now present the examples discussed in Section~\ref{sec:supra_competitive}, showing that more feedback or more frequent experimentation can hurt a seller's limiting revenue, and that positive experimentation correlations can induce negative conjectures and hence prices below the Nash benchmark. We then prove Proposition \ref{Prop:Price_increase}. 

\begin{example}[Feedback and experimentation frequency can help or hurt]
\label{ex:feedback-frequency-can-help-or-hurt}
Consider $\lambda_1(p_1,p_2)=100-10p_1+bp_2$ and
$\lambda_2(p_1,p_2)=100-10p_2+bp_1$, where $b>0$. Seller \(2\) observes seller
\(1\)'s price. Suppose first that seller \(1\) does not observe seller \(2\)'s
price, and let $a:=A_{12}^{\mathcal G,\star}>0$, while
$A_{21}^{\mathcal G,\star}=0$. 

The  CV first-order conditions are
$100+bp_2-(20-ba)p_1=0$ and $100+bp_1-20p_2=0$. Solving gives
$p_1(a;b)=(2000+100b)/(400-b^2-20ba)$ and
$p_2(a;b)=(100+bp_1(a;b))/20$. At this equilibrium,
$\lambda_1(p(a;b))=(10-ba)p_1(a;b)$, so seller \(1\)'s limiting revenue is
$r_1(a;b)=p_1(a;b)^2(10-ba)$. Differentiating,
\[
\frac{\partial r_1(a;b)}{\partial a}
=
\frac{(2000+100b)^2b^2(b-20a)}
{(400-b^2-20ba)^3}.
\]
Thus, in the parameter region considered below, \(r_1(a;b)\) is increasing in
\(a\) for $a<\min\{b/20,\frac{400-b^2}{20b}\}$ or $a>\max\{b/20,\frac{400-b^2}{20b}\}$ and decreasing elsewhere. Hence, a stronger
positive induced coefficient can either raise or lower seller \(1\)'s limiting revenue. 

\medskip
\noindent\textbf{Changing feedback.}
A positive coefficient \(a\) can be generated, for example, by a two-point
design with $\mathbb P(Z_1=1)=1/2$,
$\mathbb P(Z_2=1\mid Z_1=1)=1/2+a/2$, and
$\mathbb P(Z_2=1\mid Z_1=0)=1/2-a/2$. Then
$A_{12}^{\mathcal G,\star}
=\operatorname{Cov}(Z_1,Z_2)/\operatorname{Var}(Z_1)
=\mathbb E[Z_2\mid Z_1=1]-\mathbb E[Z_2\mid Z_1=0]=a$.
If seller \(1\) also observes seller \(2\)'s price, then no rival price is
omitted, and the induced matrix becomes \(A^{\mathcal G,\star}=0\). Thus the
limiting equilibrium changes from the CV equilibrium indexed by \(a\) to the
Nash equilibrium.

For $b=9$ and $a=2/5$,
$r_1(2/5;9)=53824000/61009\approx882.230$, while
$r_1(0;9)=100000/121\approx826.446$. Thus observing seller \(2\)'s price lowers
seller \(1\)'s  revenue. In this case, removing the omitted-variable
term moves the price away from the privately favorable CV outcome.

For $b=4$ and $a=1/2$,
$r_1(1/2;4)=720000/1849\approx389.400$, while
$r_1(0;4)=3125/8=390.625$. Thus observing seller \(2\)'s price raises seller
\(1\)'s revenue. In this case, the positive induced coefficient is too
large: it moves seller \(1\)'s price above the revenue-improving range,
and feedback removes this harmful omitted-variable term.

\medskip
\noindent\textbf{Changing experimentation frequency.}
The same non-monotonicity can arise from changing experimentation frequency.
Assume the same feedback structure, so seller \(1\) does not observe seller
\(2\)'s price. Let $a\in(0,1/2)$ be seller \(1\)'s treatment probability,
$\mathbb P(Z_1=1)=a$, and suppose
$\mathbb P(Z_2=1\mid Z_1=1)=1/2+a(1-a)$ and
$\mathbb P(Z_2=1\mid Z_1=0)=1/2-a^2$. Then $\mathbb P(Z_2=1)=1/2$, so seller
\(2\)'s marginal experimentation frequency is fixed, while
$A_{12}^{\mathcal G,\star}
=\operatorname{Cov}(Z_1,Z_2)/\operatorname{Var}(Z_1)
=\mathbb P(Z_2=1\mid Z_1=1)-\mathbb P(Z_2=1\mid Z_1=0)=a$.
Thus increasing seller \(1\)'s treatment probability increases the positive
coefficient induced by correlated experimentation.

Now we can compare $a=1/5$ and $a=2/5$. If $b=9$, then both values are below
$\min\{b/20,\frac{400-b^2}{20b}\}$, and increasing seller \(1\)'s
experimentation frequency raises her  revenue. If instead $b=4$, then
the same increase lowers seller
\(1\)'s  revenue.

Also we note that the relevant linear-demand
convergence conditions discussed in Section \ref{sec:examples}  hold in all the numerical cases.

The example is stylized, but the mechanism is quite general. Feedback changes the
set of omitted prices and therefore changes the induced conjecture matrix.
Experimentation frequency can also change the induced conjecture matrix by
changing the correlation between observed and omitted experimentation. These
changes can be harmful or beneficial for a seller depending on the demand structure and other sellers' behavior. 
\end{example}

\begin{example}[Positive pairwise correlation and negative induced conjectures] \label{Example:NegativeConj}
This example illustrates that, under partial feedback, positive pairwise
correlation in experimentation need not imply positive induced conjectures.
Consider four sellers. Sellers \(1,2,3\) are smaller sellers, and seller \(4\)
is a technologically sophisticated major seller. The smaller sellers monitor the
major seller but not each other, while the major seller tracks all smaller
sellers:
\[
\mathcal S_i=\{4\},\quad i=1,2,3,
\qquad
\mathcal S_4=\{1,2,3\}.
\]
Thus, for \(i=1,2,3\), \(\mathcal R_i=\{i,4\}\) and
\(\mathcal U_i=\{1,2,3\}\setminus\{i\}\), while \(\mathcal U_4=\emptyset\).

Let \(\rho=\sqrt{0.2}\), and suppose the limiting covariance matrix of
\(Z=(Z_1,Z_2,Z_3,Z_4)\) is
\[
\Sigma_Z^\star=
\begin{pmatrix}
1 & 0.1 & 0.1 & \rho\\
0.1 & 1 & 0.1 & \rho\\
0.1 & 0.1 & 1 & \rho\\
\rho & \rho & \rho & 1
\end{pmatrix}.
\]
All pairwise correlations are positive. The matrix is positive definite: two
eigenvalues are \(0.9\), and the remaining two are the eigenvalues of
\(\begin{pmatrix}1.2 & \sqrt{3}\rho\\ \sqrt{3}\rho & 1\end{pmatrix}\), whose
determinant is \(1.2-3\rho^2=0.6>0\). This covariance structure can be generated
by bounded experimentation, e.g., by taking \(Z=L\xi\), where
\(LL^\top=\Sigma_Z^\star\) and \(\xi\) has independent Rademacher coordinates.

Now fix a smaller seller \(i\in\{1,2,3\}\) and an omitted smaller seller
\(j\in\{1,2,3\}\setminus\{i\}\). With seller \(i\)'s observed regressors ordered
as \((Z_i,Z_4)\), we have
\[
\Sigma_i^\star=
\begin{pmatrix}
1 & \rho\\
\rho & 1
\end{pmatrix},
\qquad
\mathbf c_{ij}^\star=(0.1,\rho)^\top .
\]
Hence
\[
[\pi_{ij}^\star]_i
=
\left[(\Sigma_i^\star)^{-1}\mathbf c_{ij}^\star\right]_i
=
\frac{0.1-\rho^2}{1-\rho^2}
=
\frac{0.1-0.2}{0.8}
=
-\frac18 .
\]
Thus \(A_{ij}^{\mathcal G,\star}=-1/8\) for every
\(i\in\{1,2,3\}\) and every omitted
\(j\in\{1,2,3\}\setminus\{i\}\). Entries corresponding to observed rivals are
zero by definition, and seller \(4\) observes all rivals. Therefore
\[
A^{\mathcal G,\star}
=
\begin{pmatrix}
0 & -1/8 & -1/8 & 0\\
-1/8 & 0 & -1/8 & 0\\
-1/8 & -1/8 & 0 & 0\\
0 & 0 & 0 & 0
\end{pmatrix}.
\]
Thus all nonzero induced conjectures are negative, even though every pair of
sellers' experimentation variables is positively correlated. Intuitively, the major seller creates a common source of co-movement in
experimentation; after linearly projecting out the major seller's price
variation, the remaining co-movement between smaller sellers is negative.
\end{example}

\begin{proof}[Proof of Proposition~\ref{Prop:Price_increase}]
\textbf{Step 1.}  For each $i$ and every $p_i\in[p_i^l,p_i^h]$, if $\mathbf{p}_{-i}\le \mathbf{p}'_{-i}$ and $A\preceq A'$, then $G_i(p_i,\mathbf{p}_{-i};A)\ \le\ G_i(p_i,\mathbf{p}'_{-i};A')$. 

\textbf{Proof of Step 1.} Let $A,A'\in\mathcal A$ with $A\preceq A'$. For any $\mathbf{p}\in\mathcal P$,
\[
G_i(\mathbf{p};A')-G_i(\mathbf{p};A)
=
p_i\sum_{j\neq i}(A'_{ij}-A_{ij})\,\partial_{p_j}\lambda_i(\mathbf{p})
\ \ge\ 0,
\]
since $p_i\ge 0$, $A'_{ij}-A_{ij}\ge 0$, and $\partial_{p_j}\lambda_i(\mathbf{p})\ge 0$ for all $j\neq i$.

In addition, since $G_i(\cdot;A)$ is $C^1$ on the rectangle $\mathcal P$ and
$\partial_{p_j}G_i(\mathbf{p};A)\ge 0$ for all $j\neq i$, it follows that
$\mathbf{p}_{-i}\le \mathbf{p}'_{-i}\Rightarrow G_i(p_i,\mathbf{p}_{-i};A)\le G_i(p_i,\mathbf{p}'_{-i};A)$.

It follows that for $\mathbf{p}_{-i}\le \mathbf{p}'_{-i}$ and $A\preceq A'$,
we have 
$G_i(p_i,\mathbf{p}_{-i};A)\ \le\ G_i(p_i,\mathbf{p}_{-i};A')\ \le\ G_i(p_i,\mathbf{p}'_{-i};A')$
which proves Step 1.

\textbf{Step 2.}
Define the projection $\Pi:\mathbb R^n\to\mathcal P$ coordinatewise by
$\Pi_i(z):=\min\{ p_i^h,\max\{p_i^l,z\}\}$ and define the self-map
$f:\mathcal P\times\mathcal A\to\mathcal P$ by
\[
f(\mathbf{p},A)\ :=\ \Pi\big(\mathbf{p}+ G(\mathbf{p};A)\big).
\]
By Step~1 and monotonicity of $\Pi$, $(\mathbf{p}_{-i},A)\mapsto f_i(p_i,\mathbf{p}_{-i};A)$ is nondecreasing.
By Theorem~4 in \cite{milgrom1994comparing}, the lowest and highest fixed points of $f(\cdot,A)$,
denoted $\mathbf{p}_L(A)$ and $\mathbf{p}_H(A)$, are coordinatewise nondecreasing in $A$.\footnote{Theorem~4 in \cite{milgrom1994comparing} is stated for self-maps on $[0,1]^n$.
Define the coordinatewise affine, order-preserving bijection
$S:\mathcal P\to[0,1]^n$ by
$
S_i(p_i):=\frac{p_i- p_i^l}{p_i^h- p_i^l}
$
(with $p_i^h> p_i^l$), and define $\tilde f(x,A):=S(f(S^{-1}(x),A))$.
Fixed points and the least/greatest order are preserved under $S$.}

Now let $\mathbf p$ be any fixed point of $f(\cdot,A)$. If $p_i= p_i^l$, then
\[
f_i(\mathbf{p},A)=\Pi_i\big(p_i^l+G_i(p_i^l,\mathbf{p}_{-i};A)\big)> p_i^l,
\]
contradicting $f_i(\mathbf{p},A)=p_i$. Similarly, if $p_i=p_i^h$ then $f_i(\mathbf{p},A)<p_i^h$, also a contradiction.
Hence, every fixed point lies in $\mathrm{int}(\mathcal P)$.

Now let $\mathbf{p}\in\mathrm{int}(\mathcal P)$ be a fixed point. Since $p_i\in(p_i^l,p_i^h)$,
the projection cannot bind at a fixed point, so $\mathbf{p}=f(\mathbf{p},A)=\mathbf{p}+G(\mathbf{p};A)$, implying $G(\mathbf{p};A)=0$.
Conversely, if $\mathbf{p}\in\mathrm{int}(\mathcal P)$ satisfies $G(\mathbf{p};A)=0$, then $f(\mathbf{p},A)=\Pi(\mathbf{p}+0)=\mathbf{p}$.

Therefore, fixed points of $f(\cdot,A)$ coincide with interior solutions to $G(\mathbf{p};A)=0$.
In particular, the extremal (lowest/highest) CV equilibria are coordinatewise nondecreasing in $A$.
If the CV equilibrium is unique, then $\mathbf{p}(A)$ is coordinatewise nondecreasing in $A$.
\end{proof}

\section{Demand Examples: Proofs from Section \ref{sec:examples} }

\begin{lemma}\label{lem:jacobian-z-linear-general}
Consider the linear demand model \eqref{eq:linear_model} with $b_{ii}>0$ and
$b_{ij}\ge0$, and fix a conjecture matrix $A$ with
$\beta_i^{(A)}=b_{ii}-\sum_{j\ne i}A_{ij}b_{ij}>0$ for every seller $i$. Then
$z^{(A)}$ has a constant Jacobian and
\[
\|Dz^{(A)}\|_\infty
=
\max_{i\in[n]}
\frac{
\left|\sum_{j\ne i}A_{ij}b_{ij}\right|+\sum_{j\ne i}b_{ij}
}{
2\,\beta_i^{(A)}
}.
\]
In the case where conjectures are nonnegative, $A_{ij}\ge0$, this
simplifies to
\[
\|Dz^{(A)}\|_\infty
=
\max_{i\in[n]}
\frac{
\sum_{j\ne i}(1+A_{ij})b_{ij}
}{
2\left(b_{ii}-\sum_{j\ne i}A_{ij}b_{ij}\right)
}.
\]
Consequently, in this case, $\|Dz^{(A)}\|_\infty<1$ is equivalent to
\[
\sum_{j\ne i}(1+3A_{ij})b_{ij}<2b_{ii},
\qquad i\in[n].
\]
\end{lemma}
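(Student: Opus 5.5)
The plan is to compute the Jacobian of $z^{(A)}$ explicitly from the decomposition \eqref{eq:Jac-decomp-general-A}, and then read off the row sums.

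\textbf{Step 1: the Jacobian is constant.} Under \eqref{eq:linear_model} we have $\partial_{p_i}\lambda_i=-b_{ii}$ and $\partial_{p_j}\lambda_i=b_{ij}$. Therefore $\beta_i^{(A)}=b_{ii}-\sum_{j\ne i}A_{ij}b_{ij}$ is a positive constant. This gives $\partial_{p_j}\beta_i\equiv0$ and $L^{\mathrm{curv},A}=0$. By \eqref{eq:Jac-decomp-general-A}, $Dz^{(A)}=L^{\mathrm{comp},A}$, which is constant in $\mathbf p$, with entries read off from \eqref{eq:Lcomp-Lcurv-general-A}:
\[
[Dz^{(A)}]_{ii}=\frac12-\frac{b_{ii}}{2\beta_i^{(A)}}=-\frac{\sum_{j\ne i}A_{ij}b_{ij}}{2\beta_i^{(A)}},\qquad [Dz^{(A)}]_{ij}=\frac{b_{ij}}{2\beta_i^{(A)}}\quad (j\ne i).
\]
As a sanity check, the diagonal entry can also be obtained directly by writing $z_i^{(A)}=\tfrac12 p_i+\lambda_i/(2\beta_i^{(A)})$.

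\textbf{Step 2: row sums.} Since $\beta_i^{(A)}>0$ and $b_{ij}\ge0$, the $i$th absolute row sum is
\[
\frac{\bigl|\sum_{j\ne i}A_{ij}b_{ij}\bigr|+\sum_{j\ne i}b_{ij}}{2\beta_i^{(A)}}.
\]
Taking the maximum over $i$ gives the first formula for $\|Dz^{(A)}\|_\infty$.

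\textbf{Step 3: nonnegative conjectures.} If $A_{ij}\ge0$, the absolute value disappears and the numerator becomes $\sum_{j\ne i}(1+A_{ij})b_{ij}$, which is the second formula. Because every denominator is positive, the condition that each row sum be $<1$ is equivalent to
\[
\sum_{j\ne i}(1+A_{ij})b_{ij}<2b_{ii}-2\sum_{j\ne i}A_{ij}b_{ij},
\]
that is, $\sum_{j\ne i}(1+3A_{ij})b_{ij}<2b_{ii}$ for every $i$. The maximum over rows is below $1$ exactly when each row is.

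\textbf{Main obstacle.} There is no real obstacle here. The only point needing care is the diagonal entry. Its sign is that of $-\sum_{j\ne i}A_{ij}b_{ij}$ and is not automatically nonnegative, which is why the absolute value appears in the general formula. The other thing to track is that $\beta_i^{(A)}>0$ is used both to justify the derivative computation and to clear denominators in the equivalence.
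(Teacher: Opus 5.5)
Your proposal is correct and essentially matches the paper's proof. Both note that $\beta_i^{(A)}$ is constant under linear demand, obtain the diagonal entry $-\sum_{j\ne i}A_{ij}b_{ij}/(2\beta_i^{(A)})$ and the off-diagonal entries $b_{ij}/(2\beta_i^{(A)})$, take absolute row sums, and clear the positive denominator to get the equivalence; the only cosmetic difference is that you read the entries off the decomposition $Dz^{(A)}=L^{\mathrm{comp},A}+L^{\mathrm{curv},A}$ with $L^{\mathrm{curv},A}=0$, whereas the paper differentiates $z_i^{(A)}=p_i/2+\lambda_i/(2\beta_i^{(A)})$ directly, which you also give as your sanity check.
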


\begin{proof}[Proof of Lemma~\ref{lem:jacobian-z-linear-general}]
For linear demand,
$\partial_{p_i}\lambda_i=-b_{ii}$ and $\partial_{p_j}\lambda_i=b_{ij}$ for
$j\ne i$. Hence
$\beta_i^{(A)}=b_{ii}-\sum_{j\ne i}A_{ij}b_{ij}$ is constant. Since
$z_i^{(A)}(\mathbf p)=p_i/2+\lambda_i(\mathbf p)/(2\beta_i^{(A)})$,
\[
\frac{\partial z_i^{(A)}}{\partial p_i}
=
\frac12-\frac{b_{ii}}{2\beta_i^{(A)}}
=
-\frac{\sum_{j\ne i}A_{ij}b_{ij}}{2\beta_i^{(A)}},
\qquad
\frac{\partial z_i^{(A)}}{\partial p_j}
=
\frac{b_{ij}}{2\beta_i^{(A)}}\quad (j\ne i).
\]
The induced infinity norm is the maximum absolute row sum, so the $i$th row sum
is
\[
\left|
\frac{-\sum_{j\ne i}A_{ij}b_{ij}}{2\beta_i^{(A)}}
\right|
+
\sum_{j\ne i}
\frac{b_{ij}}{2\beta_i^{(A)}} ,
\]
which gives the first display. If $A_{ij}\ge0$ and $b_{ij}\ge0$, the absolute
value signs can be removed. The condition $\|Dz^{(A)}\|_\infty<1$ then becomes
$\sum_{j\ne i}(1+A_{ij})b_{ij}
<2(b_{ii}-\sum_{j\ne i}A_{ij}b_{ij})$, which is equivalent to the stated
condition.
\end{proof}

\begin{lemma}\label{lem:mnl_contraction}
Consider the MNL demand model \eqref{eq:mnl_linear}. Fix a conjecture matrix
$A$ and define
\[
h_i(\mathbf p)
:=
b_i(1-\lambda_i(\mathbf p))
-
\sum_{j\ne i}A_{ij}b_j\lambda_j(\mathbf p).
\]
If $h_i(\mathbf p)>0$ on $\mathcal P$, then
$\beta_i^{(A)}(\mathbf p)=\lambda_i(\mathbf p)h_i(\mathbf p)>0$ and
\[
z_i^{(A)}(\mathbf p)=\frac12p_i+\frac{1}{2h_i(\mathbf p)}.
\]
Moreover,
\[
\frac{\partial z_i^{(A)}}{\partial p_i}(\mathbf p)
=
\frac12-\frac{b_i\lambda_i(\mathbf p)}{2h_i(\mathbf p)},
\]
and, for $m\ne i$,
\[
\frac{\partial z_i^{(A)}}{\partial p_m}(\mathbf p)
=
\frac{b_m\lambda_m(\mathbf p)}
{2h_i(\mathbf p)^2}
\left[
b_i\lambda_i(\mathbf p)
+
\sum_{\substack{j\ne i\\ j\ne m}}
A_{ij}b_j\lambda_j(\mathbf p)
-
A_{im}b_m(1-\lambda_m(\mathbf p))
\right].
\]
In particular, if $A=0$ and $b_i=b$ for all $i$, then
$\sup_{\mathbf p\in\mathcal P}\lambda_i(\mathbf p)<3/5$ for every seller $i$
implies $\sup_{\mathbf p\in\mathcal P}\|Dz^{(0)}(\mathbf p)\|_\infty<1$.
\end{lemma}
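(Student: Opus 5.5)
The plan is a direct computation from the MNL derivative formulas, followed by a row-sum bound in the symmetric Nash case. Under \eqref{eq:mnl_linear} we have $\partial_{p_i}\lambda_i=-b_i\lambda_i(1-\lambda_i)$ and $\partial_{p_j}\lambda_i=b_j\lambda_i\lambda_j$ for $j\ne i$. Substituting into \eqref{eq:beta_alpha} gives
\[
\beta_i^{(A)}=\lambda_i\Big(b_i(1-\lambda_i)-\sum_{j\ne i}A_{ij}b_j\lambda_j\Big)=\lambda_ih_i .
\]
So positivity of $h_i$ gives positivity of $\beta_i^{(A)}$, since $\lambda_i>0$. Plugging this into \eqref{eq:z-simpler-general-A}, the $\lambda_i$ cancels and $z_i^{(A)}=p_i/2+1/(2h_i)$. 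This cancellation is what makes MNL tractable: the entire Jacobian reduces to $\tfrac12\mathbf 1_{\{i=m\}}-\partial_{p_m}h_i/(2h_i^2)$.

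Next I would differentiate $h_i$. For $m=i$, using $\partial_{p_i}\lambda_i=-b_i\lambda_i(1-\lambda_i)$ and $\partial_{p_i}\lambda_j=b_i\lambda_i\lambda_j$, we get
\[
\partial_{p_i}h_i=b_i\lambda_i\Big(b_i(1-\lambda_i)-\sum_{j\ne i}A_{ij}b_j\lambda_j\Big)=b_i\lambda_ih_i .
\]
This gives the stated diagonal entry $\tfrac12-b_i\lambda_i/(2h_i)$.

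For $m\ne i$, note that $\partial_{p_m}\lambda_i=b_m\lambda_i\lambda_m$, while $\partial_{p_m}\lambda_m=-b_m\lambda_m(1-\lambda_m)$ and $\partial_{p_m}\lambda_j=b_m\lambda_j\lambda_m$ for $j\ne i,m$. Collecting terms,
\[
\partial_{p_m}h_i=-b_m\lambda_m\Big[b_i\lambda_i+\sum_{j\ne i,m}A_{ij}b_j\lambda_j-A_{im}b_m(1-\lambda_m)\Big].
\]
Dividing by $-2h_i^2$ yields the off-diagonal formula. This is careful bookkeeping and I expect no obstacle here.

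For the symmetric Nash case, set $A=0$ and $b_i=b$, so $h_i=b(1-\lambda_i)$. The diagonal entry is
\[
\frac12-\frac{\lambda_i}{2(1-\lambda_i)}=\frac{1-2\lambda_i}{2(1-\lambda_i)},
\]
and the off-diagonal entries are $b\lambda_m b\lambda_i/(2b^2(1-\lambda_i)^2)=\lambda_i\lambda_m/(2(1-\lambda_i)^2)\ge0$. Summing over $m\ne i$ and using $\sum_{m\ne i}\lambda_m<1-\lambda_i$ (because of the outside option), the off-diagonal row sum is strictly less than $\lambda_i/(2(1-\lambda_i))$. So row $i$ of $\|Dz^{(0)}\|_\infty$ is bounded by
\[
\frac{|1-2\lambda_i|+\lambda_i}{2(1-\lambda_i)} .
\]
If $\lambda_i\le1/2$, this bound equals $(1-\lambda_i)/(2(1-\lambda_i))=1/2$. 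If $\lambda_i>1/2$, it equals $(3\lambda_i-1)/(2(1-\lambda_i))$, which is $<1$ if and only if $\lambda_i<3/5$.

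The main subtlety is turning the pointwise bound into a bound on the supremum. Compactness of $\mathcal P$ and continuity give $\sup_{\mathcal P}\lambda_i\le\bar\lambda<3/5$, and the function $x\mapsto(3x-1)/(2(1-x))$ is increasing on $(1/2,1)$. Hence the supremum of every row bound is at most $\max\{1/2,(3\bar\lambda-1)/(2(1-\bar\lambda))\}<1$, which proves the final claim.
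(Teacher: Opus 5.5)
Your proposal is correct and follows the paper's proof essentially step for step: the same MNL derivative identities, the cancellation $\beta_i^{(A)}=\lambda_i h_i$, direct differentiation of $h_i$, and the same row-sum bound using $\sum_{j\ne i}\lambda_j<1-\lambda_i$. Your last step goes slightly beyond the paper's ``taking the supremum.'' You use monotonicity of $x\mapsto\max\{1/2,(3x-1)/(2(1-x))\}$ to make the bound uniform over $\mathcal P$, a point the paper leaves implicit. Compactness is not actually needed there, because the hypothesis already keeps $\sup_{\mathcal P}\lambda_i$ strictly below $3/5$.
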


\begin{proof}[Proof of Lemma~\ref{lem:mnl_contraction}]
For the MNL model,
\[
\partial_{p_i}\lambda_i=-b_i\lambda_i(1-\lambda_i),
\qquad
\partial_{p_j}\lambda_i=b_j\lambda_i\lambda_j\quad (j\ne i).
\]
Therefore
\[
\beta_i^{(A)}
=
-\left(\partial_{p_i}\lambda_i+\sum_{j\ne i}A_{ij}\partial_{p_j}\lambda_i\right)
=
\lambda_i
\left(
b_i(1-\lambda_i)-\sum_{j\ne i}A_{ij}b_j\lambda_j
\right)
=
\lambda_i h_i .
\]
Since $\lambda_i>0$, $h_i>0$ implies $\beta_i^{(A)}>0$, and
$z_i^{(A)}=p_i/2+\lambda_i/(2\beta_i^{(A)})=p_i/2+1/(2h_i)$.

It remains to differentiate $h_i$. For the own-price derivative,
\[
\partial_{p_i}h_i
=
-b_i\partial_{p_i}\lambda_i
-
\sum_{j\ne i}A_{ij}b_j\partial_{p_i}\lambda_j
=
b_i\lambda_i
\left(
b_i(1-\lambda_i)-\sum_{j\ne i}A_{ij}b_j\lambda_j
\right)
=
b_i\lambda_i h_i.
\]
Hence
$\partial_{p_i}z_i^{(A)}
=1/2-(\partial_{p_i}h_i)/(2h_i^2)
=1/2-b_i\lambda_i/(2h_i)$.

For $m\ne i$,
\[
\begin{aligned}
\partial_{p_m}h_i
&=
-b_i\partial_{p_m}\lambda_i
-
\sum_{j\ne i}A_{ij}b_j\partial_{p_m}\lambda_j  \\
&=
-b_ib_m\lambda_i\lambda_m
+
A_{im}b_m^2\lambda_m(1-\lambda_m)
-
\sum_{\substack{j\ne i\\ j\ne m}}
A_{ij}b_jb_m\lambda_j\lambda_m  \\
&=
b_m\lambda_m
\left[
A_{im}b_m(1-\lambda_m)
-
b_i\lambda_i
-
\sum_{\substack{j\ne i\\ j\ne m}}A_{ij}b_j\lambda_j
\right].
\end{aligned}
\]
Since $\partial_{p_m}z_i^{(A)}=-(\partial_{p_m}h_i)/(2h_i^2)$, the stated
off-diagonal formula follows.

Now set $A=0$ and $b_i=b$ for all $i$. Then
$h_i=b(1-\lambda_i)$,
\[
\frac{\partial z_i^{(0)}}{\partial p_i}
=
\frac{1-2\lambda_i}{2(1-\lambda_i)},
\qquad
\frac{\partial z_i^{(0)}}{\partial p_j}
=
\frac{\lambda_i\lambda_j}{2(1-\lambda_i)^2}\quad (j\ne i).
\]
Thus the $i$th absolute row sum satisfies
\[
\sum_j
\left|
\frac{\partial z_i^{(0)}}{\partial p_j}
\right|
=
\frac{|1-2\lambda_i|}{2(1-\lambda_i)}
+
\frac{\lambda_i}{2(1-\lambda_i)^2}
\sum_{j\ne i}\lambda_j
<
\frac{|1-2\lambda_i|+\lambda_i}{2(1-\lambda_i)},
\]
where we used $\sum_{j\ne i}\lambda_j<1-\lambda_i$. If $\lambda_i\le1/2$, the
last expression equals $1/2$. If $\lambda_i>1/2$, it equals
$(3\lambda_i-1)/(2(1-\lambda_i))$, which is less than one if and only if
$\lambda_i<3/5$. Taking the supremum over $\mathbf p\in\mathcal P$ proves the
claim.
\end{proof}

\begin{lemma}\label{lm:best_reply_satisfying_models}
Fix a conjecture matrix $A$. For the linear demand model \eqref{eq:linear_model},
if $\beta_i^{(A)}>0$ for every seller $i$, then
Assumption~\ref{assu:cv-foc-sufficiency} holds. For the MNL demand model
\eqref{eq:mnl_linear}, Assumption~\ref{assu:cv-foc-sufficiency} holds for any
conjecture matrix $A$.
\end{lemma}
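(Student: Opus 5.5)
Both claims reduce, via Remark~\ref{rem:cv-foc-endogeneity}, to a one-dimensional statement along the conjectured direction $v_i(A_{i\cdot})$: for fixed $\mathbf p$, let $\phi(s):=(p_i+s)\lambda_i(\mathbf p+s\,v_i(A_{i\cdot}))$ on the feasible interval of $s$. Then $\phi'(0)$ is the left-hand side of \eqref{eq:CV-FOC}. It therefore suffices to show that, for each fixed $\mathbf p$ and row $A_{i\cdot}$, every stationary point of $\phi$ is a global maximizer on that interval. We prove this by showing that $\phi$ is strictly concave in the linear case and strictly quasi-concave (single-peaked) in the MNL case. Since the feasible set of $s$ is an interval, that is enough: a stationary point of such a function is its unique maximizer, so satisfying \eqref{eq:CV-FOC} at $\mathbf p$ makes $p_i$ a CV$(A)$ best reply.

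\textbf{Linear demand.} Along the direction, $\lambda_i(\mathbf p+s v_i)=\lambda_i(\mathbf p)-\beta_i^{(A)}s$ with $\beta_i^{(A)}=b_{ii}-\sum_{j\neq i}A_{ij}b_{ij}$ constant. Hence $\phi(s)=(p_i+s)(\lambda_i(\mathbf p)-\beta_i^{(A)} s)$ is a quadratic with $\phi''=-2\beta_i^{(A)}<0$ under the hypothesis $\beta_i^{(A)}>0$. It is strictly concave, so the claim follows immediately.

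\textbf{MNL demand.} Here $\phi$ is not concave in general, so we instead show that $\phi'$ changes sign at most once, from positive to negative. Write $x=p_i+s$ and $\lambda_i(s)=\lambda_i(\mathbf p+s v_i)$. Then
\[
\phi'(s)=\lambda_i(s)\bigl(1-x\,h_i(s)\bigr),
\]
with $h_i$ as in Lemma~\ref{lem:mnl_contraction}, evaluated along the path; this uses $\beta_i^{(A)}=\lambda_i h_i$, which holds for any $A$. Since $\lambda_i>0$, the sign of $\phi'$ equals the sign of $g(s):=1-x\,h_i(s)$.

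The key step is to show that $g$ is strictly decreasing at each of its zeros. At a zero, $x h_i=1$, so $h_i=1/x>0$. Differentiating gives $g'=-h_i-x\,\tfrac{d}{ds}h_i$. From Lemma~\ref{lem:mnl_contraction}, $\partial_{p_i}h_i=b_i\lambda_i h_i$, and the cross partials $\partial_{p_m}h_i$ are given explicitly there. So along $v_i$,
\[
\tfrac{d}{ds}h_i=b_i\lambda_i h_i+\sum_{m\neq i}A_{im}\,\partial_{p_m}h_i .
\]
We then need $x\,\tfrac{d}{ds}h_i>-h_i$ at the zero, that is, $\tfrac{d}{ds}h_i>-h_i^2$.

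This inequality is the main obstacle, since the cross terms involve $A_{im}$ of arbitrary sign and do not have an obvious sign. What we would try first is to rewrite $h_i$ in terms of the MNL utilities. Since $h_i=b_i-\bigl(b_i\lambda_i+\sum_{j\neq i}A_{ij}b_j\lambda_j\bigr)$ and, along the path, every utility $a_j-b_jp_j$ moves linearly in $s$, $h_i$ becomes $b_i$ minus a ratio of exponential sums. We would then aim to show that $\log$ of the relevant ratio is monotone or convex in $s$, which would give at most one crossing of $g$. If a uniform argument for arbitrary $A$ fails, the fallback is to restrict to the region where $h_i>0$. There $\beta_i^{(A)}>0$, and single-crossing follows from the identity $\tfrac{d}{ds}\log\lambda_i=-h_i$ along the path: $\phi'=\lambda_i(1+x\,\tfrac{d}{ds}\log\lambda_i)$, so the desired single crossing is the requirement that $x\mapsto x\,h_i$ is increasing through the level $1$. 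We would check this via the explicit derivative formulas of Lemma~\ref{lem:mnl_contraction}.
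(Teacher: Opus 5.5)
Your linear-demand argument is correct and is the same as the paper's. The MNL case, however, is left unproved. You correctly reduce single-crossing of $\phi'$ to the inequality $\tfrac{d}{ds}h_i>-h_i^2$ at zeros of $g$, but you then call it ``the main obstacle'' and offer only tentative strategies. The fallback of restricting to the region $h_i>0$ does not rescue the argument, for two reasons. First, the lemma is claimed for every conjecture matrix $A$. Second, as you note yourself, even on that region single-crossing reduces to $x\,h_i$ increasing through the level $1$, which is exactly the same unproved inequality.

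The missing idea is already implicit in your identity $\tfrac{d}{ds}\log\lambda_i(\mathbf p+s\,v_i(A_{i\cdot}))=-h_i$. It gives $\tfrac{d}{ds}h_i=-\tfrac{d^2}{ds^2}\log\lambda_i(\mathbf p+s\,v_i(A_{i\cdot}))$. Let $p_j(s)$ denote the $j$th coordinate of $\mathbf p+s\,v_i(A_{i\cdot})$. Every utility $a_j-b_jp_j(s)$ is affine in $s$, so
\[
\log\lambda_i(\mathbf p+s\,v_i(A_{i\cdot}))=\bigl(a_i-b_i(p_i+s)\bigr)-\log\Bigl(1+\sum_{j=1}^n e^{a_j-b_jp_j(s)}\Bigr)
\]
is an affine function minus a log-sum-exp of affine functions, hence concave. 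Equivalently, $\tfrac{d}{ds}h_i$ is the variance of the utility slopes $\tfrac{d}{ds}(a_j-b_jp_j(s))$ under the choice probabilities, with the outside option included at slope $0$. So $\tfrac{d}{ds}h_i\ge 0$ whatever the signs of the entries of $A_{i\cdot}$; the cross terms you worried about need not be signed individually.

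It follows that $(\log\phi)''(s)=-1/(p_i+s)^2-\tfrac{d}{ds}h_i<0$. Hence $\phi$ is strictly log-concave on the feasible interval, and any stationary point is its unique global maximizer. In your own terms, $g'=-h_i-x\tfrac{d}{ds}h_i\le -h_i<0$ at every zero of $g$, since $h_i=1/x>0$ there. This is precisely the paper's proof, which shows directly that $\log g_i(s)=\log(p_i+s)+f_i(p_i+s)-S(\mathbf p(s))$ is strictly concave because $S$ is a convex log-sum-exp composed with an affine path.
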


\begin{proof}[Proof of Lemma~\ref{lm:best_reply_satisfying_models}]
Fix a seller $i$, a conjecture row $A_{i\cdot}$, and a price profile
$\mathbf p\in\mathcal P$. Let
$v_i(A_{i\cdot})=e_i+\sum_{j\ne i}A_{ij}e_j$ and
$\mathbf p(s):=\mathbf p+s\,v_i(A_{i\cdot})$. Let
$S_i(\mathbf p;A):=\{s\in\mathbb R:\mathbf p(s)\in\mathcal P\}$, and consider
the one-dimensional revenue
\[
g_i(s):=(p_i+s)\lambda_i(\mathbf p(s)),
\qquad s\in S_i(\mathbf p;A).
\]
By the chain rule, $g_i'(0)=\nabla r_i(\mathbf p)^\top v_i(A_{i\cdot})$, so
the CV first-order condition is exactly the stationarity condition for this
one-dimensional problem.

For the linear model, along $\mathbf p(s)$,
\[
\lambda_i(\mathbf p(s))
=
\lambda_i(\mathbf p)
+
s\left(-b_{ii}+\sum_{j\ne i}A_{ij}b_{ij}\right),
\]
and hence
\[
g_i''(s)
=
2\left(-b_{ii}+\sum_{j\ne i}A_{ij}b_{ij}\right)
=
-2\beta_i^{(A)}<0.
\]
Thus $g_i$ is strictly concave on the feasible interval, so any interior
stationary point is the unique global maximizer.

For the MNL model, $\lambda_i(\mathbf p)>0$ and $p_i>0$ on $\mathcal P$, so
maximizing $g_i$ is equivalent to maximizing $\log g_i$. Write
$f_j(p_j):=a_j-b_jp_j$ and
$S(\mathbf p):=\log(1+\sum_j e^{f_j(p_j)})$. Then
\[
\log g_i(s)
=
\log(p_i+s)+f_i(p_i+s)-S(\mathbf p(s)).
\]
The first term has second derivative $-1/(p_i+s)^2$, the second term is linear
in $s$, and the last term is the negative of a convex log-sum-exp function
composed with an affine path. Therefore $\log g_i(s)$ is strictly concave on
$S_i(\mathbf p;A)$. Hence any interior stationary point is the unique global
maximizer, verifying Assumption~\ref{assu:cv-foc-sufficiency}.
\end{proof}

\end{document}